\long\def\killtext#1{}
\newtheorem{theorem}{Theorem}[section]
\newtheorem{lemma}[theorem]{Lemma}
\newtheorem{claim}[theorem]{Claim}
\newtheorem{corollary}[theorem]{Corollary}
\newtheorem{exa}[theorem]{Example}
\newtheorem{problem}{Problem}[section]
\newenvironment{proof}{\noindent{\bf Proof.}}{\hfill$\square$\medskip}
\newenvironment{proof*}[1]{\noindent{\bf Proof of #1.}}{\hfill$\square$\medskip}
\newcommand\numberthis{\addtocounter{equation}{1}\tag{\theequation}}
\newif\ifdraft
\def\vol{{\sf vol}}
\def\R{{\mathbb R}}
\def\E{{\sf E}}
\def\eps{{\varepsilon}}
\def\Var{{\sf Var}}
\def\ts2a{{t_{\sigma^2/(1-\alpha)}^*}}
\def\dint{{\displaystyle \int}}
\def\k{\frac{1}{e}}
\def\iso{\ln(2)}
\def\slowrate{{\mathchoice{1+\frac{1}{n}}{1+1/n}{1+1/n}{1+1/n}}}
\def\fastrate{{\mathchoice{1+\frac{\sigma^2}{2C^2n}}{1+\sigma^2/(2C^2n)}{1+\sigma^2/(2C^2n)}{1+\sigma^2/(2C^2n)}}}
\newcommand{\e}[1]{
\exp\left({#1}\right)
}
\begin{document}

\title{
Gaussian Cooling and $O^*(n^3)$ Algorithms\\ for Volume and Gaussian Volume\footnote{This paper combines and extends preliminary conference publications in STOC 2015~\cite{CV2015} and SODA 2014~\cite{CV2014}.}
} 
\author{
Ben Cousins \thanks{Georgia Tech. Supported in part by an NSF graduate fellowship and by NSF award CCF-1217793. Email: {\tt bcousins3@gatech.edu}}
\and
Santosh Vempala\thanks{Georgia Tech. Supported in part by NSF awards CCF-1217793 and EAGER-1555447. Email: {\tt vempala@gatech.edu}}
}
\maketitle

\begin{abstract}
We present an $O^*(n^3)$ randomized algorithm for estimating the volume of a well-rounded convex body given by a membership oracle, improving on the previous best complexity of $O^*(n^4)$.  The new algorithmic ingredient is an accelerated cooling schedule where the rate of cooling increases with the temperature. Previously, the known approach for potentially achieving this asymptotic complexity relied on a positive resolution of the KLS hyperplane conjecture, a central open problem in convex geometry.

We also obtain an $O^*(n^3)$ randomized algorithm for integrating a standard Gaussian distribution over an arbitrary convex set containing the unit ball. Both the volume and Gaussian volume algorithms use an improved algorithm for sampling a Gaussian distribution restricted to a convex body. In this latter setting, as we show, the KLS conjecture holds and for a spherical Gaussian distribution with variance $\sigma^2$, the sampling complexity is $O^*(\max\{n^3, \sigma^2n^2\})$ for the first sample and $O^*(\max\{n^2, \sigma^2n^2\})$ for every subsequent sample. 
\end{abstract}

\section{Introduction}
Computing the volume of a convex body is an ancient and fundamental problem; it is also a {\em difficult} problem, as evidenced by both the $\#$P-hardness of computing the volume of an explicit polytope \cite{DF88} and exponential lower bounds for deterministic algorithms in the general oracle model, even to approximate the volume to within an exponential factor in the dimension \cite{BF87,BF88}. Against this backdrop, the breakthrough result of Dyer, Frieze and Kannan \cite{DyerFK89,DyerFK91} established a randomized polynomial-time algorithm for estimating the volume to within any desired accuracy. In the quarter-century since then, the quest for faster volume algorithms has revealed an array of powerful and elegant techniques for the design and analysis of algorithms, and influenced the development of asymptotic convex geometry \cite{ApplegateK91,LS90,DyerF90,LS93,KLS95,KLS97,BDJ98,LV2,LV06,EldanK2011} . 

The DFK algorithm for computing the volume of a convex body $K$ in $\R^n$ given by a membership oracle uses a sequence of convex bodies $K_0, K_1, \ldots, K_m = K$, starting with the unit ball fully contained in $K$ and ending with $K$. Each successive body $K_i = 2^{i/n}B_n \cap K$ is a slightly larger ball intersected with $K$. Using random sampling, the algorithm estimates the ratios of volumes of consecutive bodies. The product of these ratios times the volume of the unit ball was the estimate of the volume of $K$. Sampling is achieved by a random walk in the convex body. There were many technical  issues to be addressed, but the central challenge was to show a random walk that ``mixed" rapidly, i.e.\ converged to its stationary distribution in a polynomial number of steps. The overall complexity of the algorithm was $O^*(n^{23})$ oracle calls\footnote{The $O^*$ notation suppresses error terms and logarithmic factors.}.

Since then researchers have improved the complexity of volume computation and sampling for convex bodies considerably, to $O^*(n^4)$ for volume estimation and for obtaining the first random sample \cite{LV2,LV06} and to $O^*(n^3)$ per sample for subsequent samples \cite{LV06, LV3}. These improvements rely on continuous random walks, the use of affine transformations, improved isoperimetric inequalities and several other developments. However, throughout the course of these developments, the outer DFK algorithm using a chain of bodies remained unchanged till the most recent improvement in 2003 \cite{LV2}. The LV algorithm \cite{LV2} relies on sampling a sequence of logconcave distributions, akin to simulated annealing, starting with one that is highly concentrated around a point deep inside the convex body and ending with the uniform distribution (we will discuss these ideas in more detail presently). 
The total number of random points needed is only $O^*(n)$, down from $\Omega(n^2)$ needed by all previous algorithms. Combining this with the $O^*(n^3)$ complexity for each sample yielded the overall $O^*(n^4)$ complexity for volume computation. Before running this algorithm, there is a pre-processing step where the convex body is placed in nearly-isotropic position, ensuring in particular that most of the body is contained in a ball of radius $O(\sqrt{n})$. Crucially, this well-roundedness property is maintained during the course of the algorithm.

Is there a faster algorithm? In 1995, Kannan, Lov\'asz and Simonovits, while analyzing the convergence of the ball walk for sampling, proposed a beautiful geometric conjecture now known as the KLS hyperplane conjecture \cite{KLS95}. Roughly speaking, it says that the worst-case isoperimetric ratio for a subset of a convex body is achieved by a hyperplane to within a constant factor. They were able to show that hyperplanes are within $O(\sqrt{n})$ of the minimum. The convergence of the ball walk depends on the square of the reciprocal of the isoperimetric ratio; thus the KLS conjecture had the potential to improve the sampling time by a factor of $n$ to $O^*(n^2)$ per sample and thereby indicated the possibility of an $O^*(n^3)$ volume algorithm (such an algorithm would have to surmount other substantial hurdles). 

The KLS hyperplane conjecture remains unresolved, in spite of intensive efforts and partial progress towards its resolution \cite{Ball88, Fleury2010, EldanK2011}. Indeed, it captures two well-known and much older conjectures from convex geometry, the slicing (or hyperplane) conjecture and the thin-shell conjecture (these were all shown to be equivalent in a certain sense recently \cite{Eldan2013, EldanK2011}), and thus has effectively evaded resolution for nearly a half-century.  

Our first result is an $O^*(n^2)$ algorithm for sampling from the standard Gaussian distribution in $\R^n$ restricted to an arbitrary convex body. To achieve this complexity, we prove the KLS conjecture for such distributions. We then show that the {\em Gaussian volume} or Gaussian measure of a convex body, i.e., the integral of a standard Gaussian over a convex body given by a membership oracle and containing the unit ball. of any convex body, can be computed in $O^*(n^3)$ time.  

Our main finding is an $O^*(n^3)$ algorithm for computing the volume of any convex body containing a unit ball and mostly contained in a ball of radius $O^*(\sqrt{n})$. Equivalently, it suffices to have $\E(\|X\|^2) = O^*(n)$ for a uniform random point $X$ from the body. Assuming the body is well-rounded (or sandwiched) in this sense, no further affine transformation is used, and there is no need to assume or maintain near-isotropy during the course of the volume algorithm.

To describe the main ideas behind the improvement, we recall the LV algorithm in more detail. It uses a sequence of $O^*(\sqrt{n})$ exponential distributions, starting with a distribution that is concentrated inside the unit ball contained in $K$, then ``flattening" this distribution to the uniform by adjusting a multiplicative factor in the exponent\footnote{In the original description, the algorithm first created a ``pencil" using an extra dimension, but this can be avoided \cite{LV06}.}. In each phase, samples from the previous distribution are used to estimate the ratio of the integrals of two consecutive exponential functions (by simply averaging the ratio of the function values at the sample points). It is crucial to keep the variance of this ratio estimator bounded, and to do this, the distributions could be cooled by a factor of $1+\frac{1}{\sqrt{n}}$ in each phase. This leads to $O^*(\sqrt{n})$ phases in total, and to $O^*(\sqrt{n})$ samples per phase. Along with the sample complexity of $O^*(n^3)$ per sample, this gives the bound of $O^*(n^4)$.

The improved complexity for Gaussian volume estimation is achieved by using a sequence of Gaussians (rather than exponentials as in LV), starting with a highly concentrated Gaussian centered inside $K$ and ending with the standard Gaussian. The cooling schedule is the same as in the LV algorithm, but each sample takes only $O^*(n^2)$ time. For a Gaussian with covariance $\sigma^2I$, the mixing time is $O^*(\max\{\sigma^2, 1\} n^2)$ (see Theorem \ref{thm:ball-walk} below). Since the starting $\sigma$ is small and the last $\sigma$ is $1$, this bound is $O^*(n^2)$ throughout the algorithm. (We encounter additional technical issues such as maintaining a warm start for the random walks.)

Returning to the usual Lebesgue volume, how could we possibly improve the LV algorithm, without relying on the KLS conjecture?
We will also use Gaussian cooling, starting with a highly concentrated Gaussian and flattening it (i.e., increasing $\sigma$) till we reach the uniform distribution. In the beginning, this is similar to the algorithm of \cite{CV2014}. But after $\sigma$ becomes higher than $1$ (or some constant), we no longer have quadratic sampling time, as the mixing time of the ball work grows as $\max\{\sigma^2, 1\}n^2$. Moreover, we need to go till $\sigma^2 = \Omega(n)$, so cooling at the fixed rate of $1+1/n$ would be too slow. The main new idea is that for $\sigma > 1$, the cooling rate can be made higher, in fact about $1+\sigma^2/n$ instead of only $1+1/n$. This means that the number of phases to double $\sigma^2$ is only $n/\sigma^2$. It can be shown that the number of samples per ``doubling" phase is only $O^*(1)$, giving $n/\sigma^2$ samples in total. Multiplying by the sampling time, we have $\frac{n}{\sigma^2} \cdot \sigma^2 n^2 = n^3$,  a cubic algorithm!
The key technical component of the analysis is to show that the variance of the ratio estimator remains bounded even at this higher cooling rate of $1+\sigma^2/n$.

We now formally state the problems. 

\begin{problem}\label{prob:volume}[Volume]
Given a membership oracle for a convex set $K$ in $\R^n$ containing the unit ball $B_n$, and error parameter $\eps > 0$, give an algorithm that computes a number $V$ such that with probability at least $3/4$,
\[
(1-\eps) \vol(K) \le V \le (1+\eps) \vol(K).
\]
\end{problem}

We denote the Gaussian density function as $\gamma(x) = (2\pi)^{-n/2} \cdot \e{-\|x\|^2/2}$. 

\begin{problem}\label{prob:gauss-volume}[Gaussian Volume]
Given a membership oracle for a convex set $K$ in $\R^n$ containing the unit ball $B_n$, and error parameter $\eps > 0$, give an algorithm that computes a number $V$ such that with probability at least $3/4$,
\[
(1-\eps) \int_K \gamma(x) \, dx \le V \le (1+\eps) \int_K \gamma(x) \, dx.
\]
\end{problem}

\subsection{Main results}

Our main result can be stated more precisely as follows, which solves Problem~\ref{prob:volume} in $O^*(n^3)$ assuming the input body $K$ is well-rounded. We note that the roundness condition can be achieved for any convex body by a preprocessing step consisting of an affine transformation.  It is a significantly weaker condition than isotropic position.

\begin{theorem}\label{thm:volume}
There is an algorithm that, for any $\eps>0, p > 0$ and convex body $K$ in $\R^n$ that contains the unit ball and has $\E_K(\|X\|^2) = O(n)$, with probability $1-p$, approximates the volume of $K$ within relative error $\eps$ and has complexity 
\begin{align*}
&O\left(\frac{n^3}{\eps^2} \cdot \log^2 n \log^2 \frac{1}{\eps} \log^2 \frac{n}{\eps} \log\frac{1}{p}\right) = O^*\left(n^3\right).
\end{align*}
in the membership oracle model. 
\end{theorem}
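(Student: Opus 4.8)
The plan is to build the volume algorithm on top of the Gaussian machinery announced earlier in the excerpt. Concretely, I would write $\vol(K)$ as a telescoping product: fix a sequence of variances $\sigma_0^2 < \sigma_1^2 < \dots < \sigma_m^2$, where $\sigma_0^2$ is tiny (so that the Gaussian $\gamma_{\sigma_0}$ restricted to $K$ is essentially the unrestricted Gaussian, whose integral is known in closed form since $B_n \subseteq K$) and $\sigma_m^2 = \Theta(n)$ is large enough that, using $\E_K(\|X\|^2) = O(n)$, the restricted Gaussian $\gamma_{\sigma_m}|_K$ is within $(1\pm\eps)$ of a constant multiple of the uniform distribution on $K$ (this is the well-roundedness payoff: a flat-enough Gaussian on a body sandwiched between $B_n$ and an $O^*(\sqrt n)$-ball is nearly uniform). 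Then
\begin{align*}
\vol(K) \;=\; \frac{\int_K \gamma_{\sigma_m}}{\int_K \gamma_{\sigma_0}} \cdot \frac{\int_K \gamma_{\sigma_0}}{1} \cdot \frac{\vol(K)}{\int_K \gamma_{\sigma_m}} \;=\; \Big(\prod_{i=1}^m \frac{\int_K \gamma_{\sigma_i}}{\int_K \gamma_{\sigma_{i-1}}}\Big)\cdot (\text{known factors}),
\end{align*}
and each ratio $R_i = \int_K \gamma_{\sigma_i} / \int_K \gamma_{\sigma_{i-1}}$ is estimated by drawing samples $X$ from $\gamma_{\sigma_{i-1}}|_K$ (via the improved Gaussian sampler, Theorem~\ref{thm:ball-walk}) and averaging $\gamma_{\sigma_i}(X)/\gamma_{\sigma_{i-1}}(X)$, which is an unbiased estimator of $R_i$.

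**Next I would choose the cooling schedule and bound the number of phases and samples.** For $\sigma \le 1$ I use the slow rate $\sigma_i^2 = (1+1/n)\sigma_{i-1}^2$, costing $O^*(n)$ phases with $O^*(n^2)$ sampling time each; for $\sigma > 1$ I switch to the accelerated rate $\sigma_i^2 = (1+\sigma_{i-1}^2/(2C^2 n))\sigma_{i-1}^2$ (the \texttt{\textbackslash fastrate} from the preamble), so the number of phases needed to double $\sigma^2$ is $O(n/\sigma^2)$, and summing the geometric series in $\sigma^2$ from $1$ to $\Theta(n)$ gives $O^*(n)$ phases total in this regime as well — but crucially the sampling cost in a phase with parameter $\sigma$ is $O^*(\sigma^2 n^2)$, and $\sum_{\text{doublings}} (n/\sigma^2)\cdot O^*(1) \cdot \sigma^2 n^2 = O^*(n^3)$ because the $\sigma^2$ factors cancel. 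I would then show that $O^*(1/\eps^2)$ samples per phase suffice, using the standard fact that if each $R_i$ is estimated to relative variance $O(\eps^2/m)$ then the product is estimated to relative error $\eps$ with constant probability (then boost to $1-p$ by taking the median of $O(\log(1/p))$ independent runs); the polylog factors in the theorem statement come from the $\log n$ phases, the $\eps$-dependence, and the mixing-time logs.

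**The main obstacle — and the key technical content — is bounding the variance of the ratio estimator at the accelerated rate.** What must be shown is that for a phase going from $\sigma^2$ to $\sigma'^2 = (1+\sigma^2/(2C^2 n))\sigma^2$,
\begin{align*}
\frac{\E_{X \sim \gamma_\sigma|_K}\!\big[(\gamma_{\sigma'}(X)/\gamma_\sigma(X))^2\big]}{\big(\E_{X \sim \gamma_\sigma|_K}[\gamma_{\sigma'}(X)/\gamma_\sigma(X)]\big)^2} \;=\; O(1),
\end{align*}
i.e.\ a single sample already gives a constant-factor estimate. Writing everything out, this reduces to controlling a ratio of three Gaussian integrals over $K$ with variances $\sigma'^2$, $\sigma^2$, and the ``halfway'' variance $\sigma''^2$ satisfying $1/\sigma''^2 = 2/\sigma^2 - 1/\sigma'^2$; the claim is that $Z(\sigma'')\,Z(\sigma) \le (1+O(1)) Z(\sigma')^2$ where $Z(\tau) = \int_K \tau^{-n}e^{-\|x\|^2/2\tau^2}\,dx$ (times the dimensional normalization that also has to be tracked). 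The standard LV-style argument shows $\log Z$ as a function of $1/\sigma^2$ (or of a suitable reparametrization) is convex, so the danger is only from the second-order term; the point of the $1/n$ vs $\sigma^2/n$ distinction is that when $\sigma > 1$ the relevant curvature is smaller by a factor of $\sigma^2$, which is exactly what licenses the larger step. I expect to need a lemma, proved via the well-roundedness assumption $\E_K(\|X\|^2)=O(n)$ together with logconcavity, that bounds $\E_{\gamma_\sigma|_K}(\|X\|^2)$ by $O(\max\{\sigma^2,1\}\cdot n)$ uniformly along the schedule; this is what keeps the curvature under control and also what is needed to verify the warm-start condition for the sampler from one phase to the next. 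Assembling the per-phase variance bound, the phase count, the sampler complexity from Theorem~\ref{thm:ball-walk}, and the warm-start propagation gives the stated $O^*(n^3)$ bound.
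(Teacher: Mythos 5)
Your skeleton matches the paper's: the same telescoping product of Gaussian integrals, the same two-part cooling schedule ($\slowrate$ for $\sigma^2\le 1$, then $\fastrate$), the same phase/sample/mixing-time arithmetic yielding $O^*(n^3)$, and the same identification of the accelerated-rate variance bound as the crux. However, there is a genuine gap exactly at that crux. You propose to control the curvature of $\log Z$ via a lemma bounding $\E_{\gamma_\sigma|_K}(\|X\|^2)$ by $O(\max\{\sigma^2,1\}n)$. That is the wrong moment: writing $F(a)=\int_K e^{-a\|x\|^2}\,dx$, one has $\frac{d}{da}\log F = -\E_a(\|x\|^2)$ but $\frac{d^2}{da^2}\log F = \Var_a(\|x\|^2)$, so the quantity governing the step size is the \emph{variance of $\|X\|^2$} under the restricted Gaussian --- a fourth-moment quantity. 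A bound on $\E(\|X\|^2)$ controls only the first derivative (the drift), not the curvature, and convexity of $\log F$ by itself only gives the trivial direction $\E(Y^2)/\E(Y)^2\ge 1$. To license the rate $\alpha=\sigma^2/(2C^2n)$ you need $\Var_a(\|x\|^2)\lesssim R^2\sigma^2$, which (after using $\|x\|\le R$ to reduce fourth moments to second moments via $\E(X^4)-\E(X^2)^2\le 4R^2\Var(X)$) amounts to a thin-shell-type estimate $\Var(\|x\|)\lesssim\sigma^2$ for a Gaussian restricted to a convex body. That is not an elementary consequence of logconcavity plus well-roundedness.

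The paper supplies exactly this missing ingredient in Lemmas~\ref{lem:exp-needle-simplify}--\ref{lem:exp-bound}: it localizes the $n$-dimensional inequality to one-dimensional exponential needles supported in $[-R_1,R_1]$ (Theorem~\ref{thm:exp-needles}), differentiates the ratio $h(\alpha)$ in the cooling parameter, and bounds the resulting difference of truncated-Gaussian moments using the fourth-moment identity (Lemma~\ref{lem:4th-moment}) together with the Brascamp--Lieb inequality (Theorem~\ref{thm:Brascamp-Lieb}), which gives $\Var(X)\le\sigma^2/x$ in one dimension. This yields $h(\alpha)\le\e{2R_1^2\alpha^2/\sigma^2}$ (Lemma~\ref{lem:variance-bound}), which with $\alpha=\sigma^2/(2C^2n)$ gives the per-phase relative variance $O(\sigma^2/(C^2n))$ that your phase count presupposes. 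Without this (or an equivalent $n$-dimensional thin-shell argument), the accelerated schedule is unjustified and the complexity reverts to $O^*(n^4)$. Two smaller omissions: you assume independent samples, whereas consecutive ball-walk samples are only $\nu$-independent and the paper spends most of Lemma~\ref{lem:accuracy} on the resulting covariance bookkeeping; and the warm-start propagation at the accelerated rate needs its own verification (Lemma~\ref{lem:sigma-large-warmness}), which again uses $K\subseteq C\sqrt{n}B_n$ rather than just a second-moment bound along the schedule.
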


More generally, if $\E_K(\|X\|^2) = R^2$, then the algorithm has complexity 
\begin{align*}
&O\left(\frac{\max\{R^2n^2, n^3\}}{\eps^2} \cdot \log^2n \log^2 \frac{1}{\eps} \log^2 \frac{n}{\eps} \log\frac{1}{p}\right) 
= O^*\left(\max\{R^2n^2, n^3\}\right).
\end{align*}

The current best complexity for achieving well-roundedness, i.e., $R^2 = O^*(n)$, for a convex body is $O^*(n^4)$ \cite{LV2}. In previous work, the complexity of generating the first nearly uniform random point was always significantly higher than for later points. Here, using a faster cooling schedule, we can generate the first random point in $O^*(n^3)$ steps, under the same assumption that $K$ is well-rounded. Any subsequent uniform random points also require $O^*(n^3)$ steps.

\begin{theorem}\label{thm:sampling}
There is an algorithm that, for any $\eps>0, p>0$, and any convex body $K$ in $\R^n$ that contains the unit ball and has $\E_K(\|X\|^2) = R^2$, with probability $1-p$, generates random points from a density $\nu$ that is within total variation distance $\eps$ from the uniform distribution on $K$. In the membership oracle model, the complexity of each random point, including the first, is 
\begin{align*}
&O\left(\max\{R^2n^2,n^3\}\log n \log^2\frac{n}{\eps} \log \frac{1}{p}\right) = O^*\left(\max\{R^2n^2, n^3\}\right).
\end{align*}
\end{theorem}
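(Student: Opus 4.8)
The plan is to realise the uniform distribution on $K$ as (essentially) the final member of a sequence of Gaussians $N(0,\sigma_i^2 I)$ restricted to $K$ with increasing variance --- \emph{Gaussian cooling} --- passing from each phase to the next by a ball walk, and accelerating the cooling once $\sigma_i>1$. We start with $\sigma_0^2=\Theta(1/n)$, so that $N(0,\sigma_0^2 I)$ has all but a negligible fraction of its mass inside $B_n\subseteq K$, and hence $\pi_0:=N(0,\sigma_0^2 I)|_K$ is sampled essentially exactly by rejection against the unrestricted Gaussian. While $\sigma_i\le 1$ we cool at rate $\sigma_{i+1}^2=(1+c/n)\sigma_i^2$; once $\sigma_i>1$ we cool at the faster rate $\sigma_{i+1}^2=(1+c\sigma_i^2/n)\sigma_i^2$, stopping at $\sigma_m^2=\Theta(\max\{R^2,n\}\cdot\mathrm{polylog}(n/\eps))$. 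At this variance $\pi_m:=N(0,\sigma_m^2 I)|_K$ is within a constant factor of the uniform density on the bulk $K':=K\cap O^*(R)\,B_n$, which by $\E_K(\|X\|^2)=R^2$ and standard concentration of $\|X\|$ for $X$ uniform in a convex body carries all but an $\eps/2$ fraction of $\vol(K)$; thus one rejection step (with constant acceptance probability) turns a $\pi_m$-sample into a sample from a density within total variation $\eps$ of uniform on $K$.

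In phase $i$ we run the ball walk with target $\pi_{i+1}$ starting from a sample produced in phase $i$. By Theorem~\ref{thm:ball-walk} together with the KLS isoperimetric inequality established in this paper for Gaussians restricted to convex bodies, from an $O(1)$-warm start $O^*(\max\{\sigma_{i+1}^2,1\}\,n^2)$ steps suffice to reach total variation distance $\eps'$ from $\pi_{i+1}$. For the chain to close we need each $\pi_i$ to be a $\mathrm{polylog}$-warm start for the walk with target $\pi_{i+1}$; this constrains the cooling rate through the identity
\[
\left\|\frac{d\pi_{i+1}}{d\pi_i}\right\|_{L_2(\pi_i)}^{2}=\frac{Z(\sigma_i)\,Z(\tau_i)}{Z(\sigma_{i+1})^{2}},\qquad Z(\sigma):=\int_K e^{-\|x\|^2/2\sigma^2}\,dx,\quad \frac{1}{\tau_i^2}:=\frac{2}{\sigma_{i+1}^2}-\frac{1}{\sigma_i^2},
\]
whose right-hand side must be shown to be $O(1)$ for both schedules ($\tau_i$ is real since both rates are below $2$). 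Coupled with a lemma that the ball walk from an $M$-warm start, run somewhat past its mixing time, outputs a distribution that is both $\eps'$-close in total variation to stationary and still $O(M)$-warm, this propagates a good sample down the whole chain; distributing the total-variation and warmth budgets over the $O^*(n)$ phases produces the $\log n$ and $\log^2\frac{n}{\eps}$ factors, and the $\log\frac1p$ factor comes from amplifying the per-phase success probability. The first sample traverses the entire chain; every later sample is obtained by one additional mixing run of the $\pi_m$-walk (and one rejection) from a stored warm point, so it is never regenerated from the previous, degraded output.

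Summing the work: there are $O^*(n)$ phases with $\sigma_i\le 1$, each costing $O^*(n^2)$, for $O^*(n^3)$; the phases with $\sigma_i\in(1,\sigma_m]$ split into $O(\log\sigma_m^2)$ doublings of $\sigma^2$, and one doubling uses $O^*(n/\sigma^2)$ phases of cost $O^*(\sigma^2 n^2)$, i.e.\ $O^*(n^3)$ per doubling and $O^*(n^3)$ altogether; the closing mixing run at $\sigma_m^2=O^*(\max\{R^2,n\})$ costs $O^*(\max\{R^2 n^2,n^3\})$. This yields the stated complexity, for the first and every subsequent sample alike. I expect the main obstacle to be the $L_2$-distance bound above under the accelerated rate $1+\sigma^2/n$. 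When $\sigma\gtrsim\sqrt n$ the three integrals $Z(\sigma_i)$, $Z(\tau_i)$, $Z(\sigma_{i+1})$ are each within a constant factor of $\vol(K)$ and the bound is immediate; but in the transition range $1\lesssim\sigma\lesssim\sqrt n$ neither the Gaussian-dominated approximation $Z(\sigma)\approx(2\pi\sigma^2)^{n/2}$ nor the body-dominated approximation $Z(\sigma)\approx\vol(K)$ holds, and one must follow $\ln Z(\sigma)$ precisely along the schedule --- equivalently, control $\E_{\pi_\sigma}(\|X\|^2)\le R^2$ --- using $B_n\subseteq K$ to keep $Z(\sigma)$ from collapsing and the convexity of $K$ to bound its growth. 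A secondary technical point is the warmth bookkeeping across the $O^*(n)$ phases so that the per-phase errors do not compound.
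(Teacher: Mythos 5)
Your proposal follows the paper's route essentially step for step: start at $\sigma_0^2=\Theta(1/n)$ so rejection sampling gives the first point, cool at $1+\Theta(1/n)$ until $\sigma^2=1$ and then at the accelerated rate $1+\Theta(\sigma^2/(C^2n))$ up to $\sigma^2=C^2n$ with $C\sqrt{n}=\Theta(R\log(1/\eps))$, mix each phase with the ball walk via Theorem~\ref{thm:ball-walk}, and finish with one constant-acceptance rejection step to uniform; the phase/doubling cost accounting ($O^*(n)$ small-$\sigma$ phases at $O^*(n^2)$ each, $O^*(C^2n/\sigma^2)$ phases per doubling at $O^*(\sigma^2n^2)$ each) is identical to the paper's. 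The one genuine divergence is how the warm start is certified. You propose to control the $\chi^2$-type quantity $Z(\sigma_i)Z(\tau_i)/Z(\sigma_{i+1})^2$ and identify its behaviour in the transition regime $1\lesssim\sigma\lesssim\sqrt{n}$ as the main obstacle; that quantity is exactly the ratio-estimator variance that the paper bounds by localization in Lemma~\ref{lem:variance-bound}, but the paper does not use it for the warm start. Instead, Lemmas~\ref{lem:sigma-small-warmness} and~\ref{lem:sigma-large-warmness} bound the sup-warmness $M(Q_i,Q_{i+1})=\sup_S Q_i(S)/Q_{i+1}(S)$ by $\sqrt{e}$ with a short pointwise argument (using $f_{i+1}(x)\le f_i(x)e^{\|x\|^2/(2C^2n)}$, $\|x\|\le C\sqrt{n}$ and $0\in K$), which is the form of warmness that Theorem~\ref{thm:ball-walk} and the Lov\'asz--Simonovits bound (Theorem~\ref{thm:mixing}) as stated actually require. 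If you keep the $L_2$ route you would additionally need the $L_2$-start version of Theorem~\ref{thm:mixing}, since an $O(1)$ bound on the $\chi^2$ divergence does not imply $O(1)$ sup-warmness; that variant is standard but not in the paper, and it makes your ``main obstacle'' harder than necessary --- the sup-warmness bound disposes of the transition regime without tracking $\ln Z(\sigma)$ at all. The remaining bookkeeping you mention (per-phase TV budget of $(\eps/n)^{O(1)}$, converting expected steps to a fixed bound with $\log(1/p)$ repetitions, and the $O^*(\max\{R^2n^2,n^3\})$ total) matches the paper's proof.
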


In addition to volume and uniform sampling, we also have an $O^*(n^3)$ algorithm for computing the Gaussian volume.  This algorithm does not require a rounding preprocessing step and gives an $O^*(n^3)$ algorithm for any convex set $K\subseteq \R^n$ containing the unit ball. 

\begin{theorem}\label{thm:gauss-volume}
For any $\eps > 0$, $p>0$, and any convex set $K$ in $\R^n$ containing the unit ball, there is an algorithm that, with probability $1-p$, approximates the  
Gaussian volume of $K$ within relative error $\eps$ and has complexity 
\[
O\left(\frac{n^3}{\eps^2}\cdot \log^2\left(n\right) \log^2\left(\frac{n}{\eps}\right)\log\left(\frac{1}{p}\right)\right)=O^*(n^3)
\]
 in the membership oracle model.
\end{theorem}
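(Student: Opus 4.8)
The plan is to write the Gaussian volume as a telescoping product of ratios along a Gaussian cooling sequence and estimate each ratio by sampling, following the scheme described in the introduction. Fix $\sigma_0^2 = \Theta(1/n)$ and $\sigma_m^2 = 1$, and interpolate by $\sigma_i^2 = \sigma_{i-1}^2\,(1 + 1/\sqrt{n})$, so that $m = O^*(\sqrt{n})$. For $0\le i\le m$ put $f_i(x) = \e{-\|x\|^2/(2\sigma_i^2)}$, $Z_i = \int_K f_i(x)\,dx$, and let $\mu_i$ be the distribution on $K$ with density proportional to $f_i$ (all centered at the origin, which lies in $B_n \subseteq K$). Since $\sigma_0^2 = \Theta(1/n)$, the Gaussian $f_0$ has all but an $\e{-\Omega(n)}$ fraction of its mass inside $B_n$, so $Z_0 = (1\pm\e{-\Omega(n)})\,(2\pi\sigma_0^2)^{n/2}$, and $\mu_0$ can be sampled essentially exactly by drawing from $N(0,\sigma_0^2I)$ and rejecting the rare point outside $K$, so no random walk is needed for the first distribution. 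Then
\[
\int_K \gamma(x)\,dx \;=\; (2\pi)^{-n/2} Z_m \;=\; (2\pi)^{-n/2}\, Z_0 \prod_{i=1}^{m} \frac{Z_i}{Z_{i-1}},
\]
so it suffices to estimate each ratio $R_i := Z_i/Z_{i-1} = \E_{\mu_{i-1}}\left[ f_i(X)/f_{i-1}(X) \right]$ to within relative error $O(\eps/\sqrt{m})$ and take the product.

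To estimate $R_i$, draw $N = O^*(m/\eps^2)$ approximately $\mu_{i-1}$-distributed points and average the values $f_i(X)/f_{i-1}(X) = \e{(\|X\|^2/2)(1/\sigma_{i-1}^2 - 1/\sigma_i^2)}$. The correctness of this step hinges on a variance bound: writing $Z(s) = \int_K \e{-\|x\|^2/(2s)}\,dx$ and $1/\sigma_\tau^2 := 2/\sigma_i^2 - 1/\sigma_{i-1}^2 > 0$ (positive since we never double $\sigma^2$ in one phase), one must show
\[
\frac{\E_{\mu_{i-1}}\left[(f_i/f_{i-1})^2\right]}{\left(\E_{\mu_{i-1}}\left[f_i/f_{i-1}\right]\right)^2} \;=\; \frac{Z(\sigma_\tau^2)\, Z(\sigma_{i-1}^2)}{Z(\sigma_i^2)^2} \;=\; O(1).
\]
Put $a = 1/s$ and $\phi(a) = \ln Z(1/a) = \ln\int_K\e{-a\|x\|^2/2}\,dx$. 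The inverse temperatures $1/\sigma_\tau^2$ and $1/\sigma_{i-1}^2$ are symmetric about $1/\sigma_i^2$ at spacing $\delta := 1/\sigma_{i-1}^2 - 1/\sigma_i^2 = \Theta(1/(\sqrt n\,\sigma_i^2))$, so the left side is, up to lower-order terms, $\exp\left(\delta^2\phi''(1/\sigma_i^2)\right)$, with $\phi''(a) = \tfrac14\Var_{\mu_a}(\|x\|^2)$ where $\mu_a$ is the Gaussian at inverse temperature $a$ restricted to $K$. Hence the bound reduces to a Poincar\'e-type estimate $\Var_{\mu_a}(\|x\|^2) = O(n/a^2)$ for the Gaussian-restricted distributions in play, which is precisely what the KLS / Poincar\'e inequality for Gaussians restricted to a convex body --- which we establish --- gives (it also yields $\E_{\mu_i}\|x\|^2 = O(n)$ for every $i$, so each $\mu_i$ is effectively well-rounded and no affine preprocessing of $K$ is required). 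The constant in the cooling rate $1 + 1/\sqrt n$ is chosen so that the displayed ratio is $O(1)$, and an essentially identical computation shows $\mu_{i-1}$ is an $O(1)$-warm start, in $L_2$, for $\mu_i$.

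Samples are produced by the ball walk of Theorem~\ref{thm:ball-walk}. For phase $i \ge 1$, each walk starts from a uniformly random point of the phase-$(i-1)$ sample pool, whose distribution is within total variation distance $o(\eps/(mN))$ of $\mu_{i-1}$ and hence an $O(1)$-warm start for $\mu_i$; running $O(\log(mN/\eps))$ mixing times, each of length $O^*(\max\{\sigma_i^2,1\}\,n^2) = O^*(n^2)$ since $\sigma_i \le 1$, outputs a point within total variation distance $o(\eps/(mN))$ of $\mu_i$. This keeps every ratio estimator essentially unbiased and keeps the $L_2$-warmness bounded from one phase to the next. Multiplying costs --- $m = O^*(\sqrt n)$ phases, $N = O^*(\sqrt n/\eps^2)$ samples per phase, $O^*(n^2)$ per sample, and a median over $O(\log(1/p))$ independent runs of the whole procedure for confidence $1-p$ --- gives $O^*(n^3/\eps^2)$; carrying the logarithmic factors through yields the stated bound. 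The set $K$ need not be bounded: the Gaussian weight confines all but a negligible fraction of each $\mu_i$ to a ball of radius $O^*(\sqrt n)$, which is all the walk analysis uses.

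The main difficulty is keeping two coupled sources of error uniformly controlled across all $\Theta^*(\sqrt n)$ phases: (i) the ratio-estimator variance, which is $O(1)$ only thanks to the Poincar\'e/thin-shell bound for Gaussians restricted to a convex body; and (ii) warm-start propagation, since the phase-$i$ sample pool must simultaneously furnish a low-variance estimate of $R_i$ and a bounded-$L_2$-warmness start for the phase-$(i{+}1)$ walk --- so the total-variation and $L_2$ bookkeeping has to be carried through without any $\mathrm{poly}(n)$ loss accumulating over the $\Theta^*(\sqrt n)$ phases.
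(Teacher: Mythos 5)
Your overall architecture (telescoping product of Gaussian ratios, concentrated start sampled by rejection, ball-walk samples per phase, median of $O(\log(1/p))$ runs) matches the paper, and your variance analysis is a valid alternative route: the paper bounds $\E(Y^2)/\E(Y)^2$ via logconcavity of $a^{n}\int_K f(ax)\,dx$ (Lemmas~\ref{lem:z-logconcave} and \ref{lem:fixed-var-bound}), which at your rate $1+1/\sqrt{n}$ gives $(1-1/n)^{-(n+1)}=O(1)$ immediately, whereas you go through the second derivative of $\ln Z(1/a)$ and a Brascamp--Lieb/Poincar\'e bound $\Var_{\mu_a}(\|x\|^2)=O(n\sigma^4)$. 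Both work; yours needs the Poincar\'e (not just isoperimetric/moment) form of Brascamp--Lieb plus the observation $\E_{\mu_a}\|x\|^2\le n\sigma^2$, but it is correct.

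The genuine gap is the warm-start accounting at the rate $1+1/\sqrt{n}$. For consecutive restricted Gaussians with $\sigma_i^2=\sigma_{i-1}^2(1+1/\sqrt{n})$, the sup-warmness is $M(\mu_{i-1},\mu_i)=\sup_x \frac{f_{i-1}(x)Z_i}{f_i(x)Z_{i-1}}=\frac{Z_i}{Z_{i-1}}$ (the sup is attained at $x=0$), and $Z_i/Z_{i-1}$ can be as large as $(1+1/\sqrt{n})^{n/2}=e^{\Theta(\sqrt{n})}$; only the $L_2$ warmness is $O(1)$, as you parenthetically note. But Theorem~\ref{thm:ball-walk}, which you invoke for the per-sample cost, is stated in terms of the sup-warmness $M(Q_0,Q)$ and carries it as a \emph{multiplicative} factor in the step count --- not only through the Lov\'asz--Simonovits bound $\sqrt{M}(1-\phi^2/2)^t$ but, more stubbornly, through the wasted-steps bound of Lemma~\ref{lem:speedy-to-ball}, whose proof ($Q_t(S)\le MQ(S)$, hence $\int \ell(x)^{-1}dQ_t\le M/\lambda$) genuinely needs pointwise domination. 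Substituting $M=e^{\Theta(\sqrt{n})}$ destroys the claimed $O^*(n^2)$ per sample. To make your schedule work you would have to reprove both the mixing bound and the wasted-steps bound from an $L_2$-warm start (as LV do for exponentials), which is a nontrivial missing piece. The paper avoids this entirely by cooling at $\slowrate$ with $O(n\log n)$ phases and only $O(\log n/\eps^2)$ samples per phase: then $M(Q_i,Q_{i+1})\le\sqrt{e}$ (Lemma~\ref{lem:sigma-small-warmness}), the per-phase variance is $1+O(1/n)$, and the total work is the same $O^*(n^3/\eps^2)$. A smaller bookkeeping slip: a distribution within total variation $o(\eps/(mN))$ of an $O(1)$-warm start need not itself be warm in any norm; the paper handles this by coupling each sample to an idealized one (the $\overline{X}_j^i$ in Lemma~\ref{lem:accuracy}) rather than asserting warmness of the perturbed law.
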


Both the uniform volume and Gaussian volume algorithms utilize an improved sampling algorithm for Gaussian distributions restricted by convex sets.

\begin{theorem}\label{thm:gauss-sampling}
For any $\eps > 0$, $p>0$, and any convex set $K$ in $\R^n$ containing the unit ball, there is an algorithm that, with probability $1-p$, can generate a random point within total variation distance $\eps$ of the Gaussian density $\mathcal{N}(0,\sigma^2 I)$ restricted to $K$. In the membership oracle model, the complexity of the first random point is 
\[
O\left(\max\{\sigma^2,1\}n^3\log(n)\log^2\left(\frac{n}{\eps}\right)\log\left(\frac{1}{p}\right)\right)=O^*\left(\max\{\sigma^2,1\}n^3\right).
\]
For subsequent random points, the complexity is
\[
O\left(\max\{\sigma^2,1\}n^2\log\left(\frac{n}{\eps}\right)\log\left(\frac{1}{\eps}\right)\right)=O^*\left(\max\{\sigma^2,1\}n^2\right).
\]
 The set of random points will be $\eps$-independent.
\end{theorem}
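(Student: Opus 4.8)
The plan is to sample $\mathcal{N}(0,\sigma^2 I)$ restricted to $K$ by \emph{Gaussian cooling}: a simulated annealing over a sequence of restricted Gaussians in which a warm start is carried from each phase to the next, tracking only warmness (and not the ratio-estimator variances that the volume algorithms require). Fix a schedule $\sigma_0^2\le\sigma_1^2\le\dots\le\sigma_m^2=\sigma^2$ with $\sigma_0^2=\Theta(1/n)$, chosen so that the first target $\mu_0$, the density of $\mathcal{N}(0,\sigma_0^2 I)$ restricted to $K$, places all but an $e^{-\Omega(n)}$ fraction of its mass inside $B_n\subseteq K$; then $\mu_0$ can be sampled essentially for free, by drawing from the unrestricted $\mathcal{N}(0,\sigma_0^2 I)$ and rejecting the rare excursion outside $K$. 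For $i\ge 1$ I would sample $\mu_i$ by running the ball walk with step size $\Theta(\sigma_i/\sqrt n)$ and the Gaussian Metropolis filter, started from a point whose law is (close to) $\mu_{i-1}$.

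The heart of the analysis is the choice of cooling rate. First I would show that $\mu_{i-1}$ is an $O(1)$-warm start for $\mu_i$ — that is, $\sup_{x\in K}\mu_{i-1}(x)/\mu_i(x)=O(1)$ — provided $\sigma_i^2\le(1+c/n)\sigma_{i-1}^2$. Writing the ratio of the unrestricted densities as $\gamma_{\sigma_{i-1}}(x)/\gamma_{\sigma_i}(x)=(\sigma_i/\sigma_{i-1})^n\exp(-c_i\|x\|^2)$ with some $c_i>0$, its supremum over $K\ni 0$ is attained at the origin and equals $(\sigma_i/\sigma_{i-1})^n$; multiplying by the ratio of normalizing constants $\int_K\gamma_{\sigma_i}\big/\int_K\gamma_{\sigma_{i-1}}$, which is at most $1$ once one checks that $\E[\|X\|^2]\le n\sigma^2$ for $X$ drawn from $\mathcal{N}(0,\sigma^2 I)$ restricted to $K$ (a short monotonicity fact for Gaussians restricted to a convex set through the mean), one gets $\sup_x\mu_{i-1}(x)/\mu_i(x)\le(1+\delta)^{n/2}$ when $\sigma_i^2=(1+\delta)\sigma_{i-1}^2$, which is $O(1)$ exactly for $\delta=O(1/n)$. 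Note that it is the Gaussian's own effective support, a ball of radius $O(\sigma_i\sqrt n)$ — not the possibly enormous diameter of $K$ — that governs this rate, and that for a \emph{well-rounded} $K$ the effective radius drops to $O(\sqrt n)$ and the rate accelerates to $1+O(\sigma_i^2/n)$, which is precisely the mechanism behind Theorem~\ref{thm:volume}; here, for general $K$, we use $1+\Theta(1/n)$. Hence the schedule from $\sigma_0^2=\Theta(1/n)$ to $\sigma^2$ uses $O^*(n)$ phases when $\sigma^2=O(1)$ and $O^*(n)$ phases per doubling of $\sigma^2$ in general, i.e.\ $m=O^*(n\max\{1,\log\sigma^2\})$.

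Next I would bound the per-phase cost and chain the phases. By Theorem~\ref{thm:ball-walk} — which rests on the KLS-type isoperimetric inequality this paper establishes for Gaussians restricted to convex sets — the ball walk targeting $\mu_i$ from an $O(1)$-warm start reaches total variation (and, running slightly longer, $L^2$) distance $\eps'$ to $\mu_i$ in $O^*(\max\{\sigma_i^2,1\}n^2)$ steps. Because the $L^\infty$ warmness of $\mu_{i-1}$ for $\mu_i$ combines with $L^2$-closeness of the walk's output via Cauchy--Schwarz, an output that is $L^2$-close to $\mu_i$ is still $O(1)$-warm for $\mu_{i+1}$, so the warm start is maintained along the whole chain; a union bound over the $m=O^*(n\max\{1,\log\sigma^2\})$ phases with $\eps'$ a small polynomial fraction of $\eps/m$ keeps the total error below $\eps$, and the usual $O(\log(1/p))$ repetition controls the failure probability. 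Summing $\sum_i O^*(\max\{\sigma_i^2,1\}n^2)$ over the schedule — a geometric-type sum dominated by its last doubling block, contributing $O^*(n)\cdot\max\{\sigma^2,1\}n^2$ — gives the claimed first-point bound $O^*(\max\{\sigma^2,1\}n^3)$.

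Finally, for each subsequent point I would keep a small reservoir of warm points for the final target $\mu_m$ (a point that has just mixed is, up to $\eps$, both distributed as $\mu_m$ and usable as an $O(1)$-warm start for the next run), so each new sample costs just one more ball-walk mixing, $O^*(\max\{\sigma^2,1\}n^2)$ steps; the $\eps$-independence of the output set then follows from a coupling argument showing that, on the high-probability event that every walk has mixed, the joint law of the outputs is within $\eps$ in total variation of an i.i.d.\ sample from $\mu_m$. The step I expect to be the main obstacle is the warmness estimate for an arbitrary convex $K$ — pinning down the fastest cooling rate at which consecutive restricted Gaussians stay warm, and justifying that the Gaussian's effective support rather than the diameter of $K$ sets the rate — together with verifying that chaining the approximate per-phase samples genuinely preserves that warmness; the isoperimetry needed for the per-phase mixing bound is provided by Theorem~\ref{thm:ball-walk} and is not re-derived here.
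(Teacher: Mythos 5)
Your proposal is correct and follows essentially the same route as the paper: Gaussian cooling at rate $1+\Theta(1/n)$ from $\sigma_0^2=\Theta(1/n)$, an $O(1)$ warmness bound between consecutive restricted Gaussians (the paper's Lemma~\ref{lem:sigma-small-warmness}, proved there by a cone decomposition rather than your density-ratio-at-the-origin argument, but to the same effect and likewise without any roundness assumption on $K$), per-phase mixing via Theorem~\ref{thm:ball-walk}, a union bound over the $O^*(n)$ phases, reuse of the final warm start for subsequent points, and Markov plus $O(\log(1/p))$ repetitions to control the failure probability. The only differences are cosmetic (e.g.\ your $L^2$/Cauchy--Schwarz chaining versus the paper's simpler total-variation union bound with per-phase error $(\eps/n)^{16}$).
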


The following theorem guarantees we can efficiently obtain Gaussian samples from a warm start.

\begin{theorem}\label{thm:ball-walk}
Let $K$ be a convex set containing the unit ball, $Q_0$ be a starting distribution, and $Q$ be the target Gaussian density $\mathcal{N}(0,\sigma^2I)$ restricted to $K \cap 4\sigma\sqrt{n}B_n$. For any $\eps>0,p>0$, the lazy Metropolis ball walk with $\delta$-steps for $\delta = \min\{\sigma,1\}/(4096\sqrt{n\log n/\eps})$, starting from $Q_0$, satisfies $d_{tv}(Q_t,Q) \le \eps$ after
\[
t \ge C \cdot M(Q_0,Q) \cdot \max\{\sigma^2,1\} \cdot n^2 \log\left(\frac{n}{\eps}\right)\log\left(\frac{M(Q_0,Q)}{\eps}\right)
\]
expected steps for an absolute constant $C$.
\end{theorem}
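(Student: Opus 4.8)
The plan is to run the standard conductance method for geometric random walks: bound from below the conductance $\phi$ of the lazy Metropolis ball walk whose stationary measure is $Q$, and then convert this into a mixing bound from an $M(Q_0,Q)$-warm start. The two problem-specific ingredients are an isoperimetric inequality for $Q$ and a one-step coupling estimate; the specific value of $\delta$ and of the truncation radius $4\sigma\sqrt n$ are exactly what makes these fit together. Writing $Q$ for $\mathcal N(0,\sigma^2 I)$ restricted to the convex body $K' := K\cap 4\sigma\sqrt n B_n$, I would first record that, since $Q$ is a Gaussian restricted to a convex set, the KLS-type isoperimetric inequality proved for such distributions applies and gives isoperimetric coefficient $\Omega(1/\sigma)$: for any partition $K'=S_1\cup S_2\cup S_3$,
\[
Q(S_3)\ \ge\ \frac{c}{\sigma}\, d(S_1,S_2)\,\min\{Q(S_1),Q(S_2)\}.
\]
Truncating to $4\sigma\sqrt n B_n$ is harmless here: it removes only an $e^{-\Omega(n)}$ fraction of the Gaussian mass, so the constant is unaffected, and $K'$ is still a convex body, so the hypothesis of the isoperimetric theorem is met.

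Next I would do the one-step analysis. Let $P_u$ be the one-step distribution of the lazy $\delta$-step Metropolis walk from $u\in K'$, and call $u$ \emph{deep} if a $\delta$-ball around $u$ lies mostly in $K'$ (local conductance $\ge 3/4$); the deep set $K_\delta$ excludes only points within $O(\delta)$ of $\partial K'$. Two estimates are needed. First, $Q(K'\setminus K_\delta)$ is small: for a logconcave measure restricted to a convex body the mass within $O(\delta)$ of the boundary, together with the Gaussian tail near $\|x\|=4\sigma\sqrt n$, is $O(\delta\sqrt n/\min\{\sigma,1\})$, and with the chosen $\delta\asymp\min\{\sigma,1\}/\sqrt{n\,\mathrm{polylog}(n/\eps)}$ this is $O(1/\sqrt{\log(n/\eps)})$. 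Second, if $u,v\in K_\delta$ and $\|u-v\|\le\delta/(c_0\sqrt n)$ then $d_{tv}(P_u,P_v)\le 1/2$: two $\delta$-balls whose centers differ by $\delta/\sqrt n$ overlap in all but an $O(1)$-fraction, and the Metropolis acceptance ratio is controlled because for $y\in B(u,\delta)$ one has $|\,\|y\|^2-\|u\|^2\,|\le 2\|u\|\delta+\delta^2\le 8\sigma\sqrt n\,\delta+\delta^2=O(\sigma^2/\sqrt{\log(n/\eps)})$ (using $\|u\|\le 4\sigma\sqrt n$ and $\delta\le\sigma/(c_0\sqrt n)$), so the log-density changes by $o(1)$ over a step and both acceptance probabilities are $1-o(1)$. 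This is exactly the point where the truncation to $4\sigma\sqrt n B_n$ is used — without it $\|u\|$ could be unbounded and the acceptance ratios uncontrolled.

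Then I would assemble the conductance bound by the usual Lov\'asz--Simonovits argument: for measurable $S\subseteq K'$ with $Q(S)\le 1/2$, discard the small part near $\partial K'$, split the rest of $S$ and of $K'\setminus S$ into points from which $P_u$ keeps most of its mass on the same side versus the other side; the one-step estimate forces the two ``same-side'' pieces to be separated by geometric distance $\gtrsim\delta/\sqrt n$, so the isoperimetric inequality gives $\int_S P_u(K'\setminus S)\,dQ(u)\gtrsim \frac{\delta}{\sigma\sqrt n}Q(S)$, i.e.\ $\phi\gtrsim\delta/(\sigma\sqrt n)$. With the stated $\delta$ this is $\phi^2\gtrsim 1/\big(\max\{\sigma^2,1\}\,n^2\log(n/\eps)\big)$, using $\min\{\sigma^2,1\}/\sigma^2=1/\max\{\sigma^2,1\}$ to see the $\max\{\sigma^2,1\}$ emerge. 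Feeding this into the standard conductance-to-mixing estimate from an $M(Q_0,Q)$-warm start (laziness supplies the aperiodicity needed so that $d_{tv}(Q_t,Q)$ decays at rate $1-\phi^2/2$ up to the warmth factor), and solving for $d_{tv}(Q_t,Q)\le\eps$, yields the claimed number of (expected) steps.

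The main obstacle I expect is the simultaneous calibration in the one-step analysis: $\delta$ must be large enough that $\phi$ attains the claimed rate, yet small enough — scaled by $\min\{\sigma,1\}$ — that both (i) the Metropolis acceptance ratios stay near $1$ (needs $\delta\lesssim\sigma/\sqrt n$, and genuinely uses the $4\sigma\sqrt n$ truncation) and (ii) the near-boundary mass of an \emph{arbitrary} convex body containing $B_n$ is negligible (needs $\delta\lesssim 1/\sqrt n$). Getting the constants in the isoperimetric coefficient and in the one-step overlap to combine into precisely this $\delta$ and running time rather than merely up to polylogarithmic slack, and in particular handling the near-boundary sliver inside the conductance argument when $Q(S)$ is itself tiny, is where the real care is required; the isoperimetric inequality itself is used as a black box from the Gaussian-KLS result proved separately.
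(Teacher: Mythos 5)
Your proposal assembles many of the right ingredients --- the isoperimetric inequality with coefficient $\Omega(1/\sigma)$ for a Gaussian restricted to a convex set, the one-step overlap estimate, the role of the truncation to $4\sigma\sqrt n B_n$ in controlling the Metropolis acceptance ratio, and the target conductance $\phi \gtrsim \delta/(\sigma\sqrt n)$ --- but the overall strategy of lower-bounding the conductance of the Metropolis ball walk \emph{directly} has a genuine gap, and it is exactly the gap the paper's proof is structured to avoid. For an arbitrary convex body the set of points with local conductance below $3/4$ is not merely an $O(\delta)$-neighborhood of the boundary (near a sharp corner it extends much further), and, more importantly, its $Q$-measure --- which is controlled through the average local conductance via Lemmas~\ref{lem:lambda-bound} and~\ref{lem:sigma-roundness} --- is only $O\bigl(1/(\log(n/\eps))^{1/4}\bigr)$ for the stated $\delta$, not $O(\eps)$. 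Your ``discard the near-boundary sliver'' step therefore yields only an $s$-conductance bound with $s$ of inverse-polylogarithmic size, and the resulting mixing estimate carries an irreducible additive total variation error of order $M\cdot s$, which does not reach $d_{tv}(Q_t,Q)\le\eps$. Shrinking $\delta$ until the sliver has measure $O(\eps/M)$ would inflate the step count by $\mathrm{poly}(M/\eps)$ factors. You flagged this issue yourself (``when $Q(S)$ is itself tiny''), but it is not a matter of care with constants: it is why the direct approach fails, as the paper notes when it observes that the ball walk's conductance ``can be arbitrarily small.''

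The paper's route is different. It bounds the conductance of the \emph{speedy walk} (Theorem~\ref{thm:speedyconductance}), whose every proposed move lands in $K$ and whose Metropolis filter accepts with probability at least $1/e$ thanks to the truncation (Lemma~\ref{lem:filter}), so its conductance is bounded below for \emph{all} sets, small ones included. Theorem~\ref{thm:mixing} then gives mixing of the speedy walk to its stationary distribution, which is proportional to $\ell(x)f(x)$ rather than to $f(x)$; Lemma~\ref{lem:speedy-to-ball} converts speedy steps into ball-walk steps using the warm start and the average local conductance $\lambda\ge 1/2$ --- a conversion that holds only in expectation, which is precisely why the theorem is stated in terms of \emph{expected} steps (your direct argument would have produced a deterministic step count, a sign it is not the intended proof); and Lemma~\ref{lem:map-speedy-to-ball} maps a sample from $\ell f$ back to $f$ by rejection sampling. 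To repair your argument you would need to adopt this decomposition, or carry out an explicit $s$-conductance analysis with a much smaller $\delta$ and accept the correspondingly worse complexity.
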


Here $M(Q_0, Q)$ is a measure of how close $Q_0$ is to $Q$ (also called the {\em warm start} parameter) and is defined as 
$M(Q_0,Q) = \sup_{S \subseteq K} \frac{Q_0(S)}{Q(S)}$.
 In other words, the theorem says that ball walk mixes in $O^*(\max\{\sigma^2,1\}n^2)$ steps from a warm start.

\section{Algorithm}

\begin{figure}[h]
\fbox{\parbox{\textwidth}{
{\bf Volume($K,\eps$)} \qquad We assume $B_n \subseteq K \subseteq C \sqrt{n} B_n$. \\
\begin{enumerate}
\item Initialize: $\nu = \left(\frac{\eps}{n}\right)^{16}, \sigma_0^2 =\frac{1}{4n}, k = \frac{512 \log C^2n}{\eps^2}, i = 0$; $x_0$ is a random point from $N(0,\sigma_0^2 I) \cap K$.

 Define 
\[
\beta(\sigma) = \begin{cases}
\displaystyle\slowrate \, \mbox{ if } \sigma^2 \le 1\\
\displaystyle\fastrate \mbox{ otherwise}
\end{cases}
\]
\item While $\sigma_i^2 \le C^2n$:
\begin{enumerate}
\item Get $k$ points $\{X_1, \ldots, X_{k}\}$ using the {\bf Ball Walk} with 
\[
\begin{cases}
\delta = \min\{\sigma_i,1\}/(4096\sqrt{n\log n/\eps }) \mbox{ ball radius}\\
f=f(\sigma_i^2, K\cap 4\sigma_i\sqrt{n}B_n) \mbox{ target density}\\
10^{16}\max\{\sigma^2, 1\} n^2 \cdot \log(1/\nu) \mbox{ proper steps}
\end{cases}
\]

\item Set $\sigma_{i+1}^2= \sigma_i^2 \cdot \beta(\sigma_i)$; if $\sigma_{i+1}^2 > C^2n$, set $\sigma_{i+1}^2 = \infty$.
\item Compute the ratio estimate 
\[
 W_{i+1} = \frac{1}{k} \cdot \sum_{j=1}^{k} \frac{f_{i+1}(X_j)}{f_i(X_j)}.
\]

\item Increment $i$.
\end{enumerate}

\item Return $(2\pi\sigma_0^2)^{n/2}W_1\ldots W_i$ as the volume estimate for $K$.
\end{enumerate}
}}
\caption{The Volume algorithm}\label{fig:volume-alg}
\end{figure}

At a high level, the algorithm relies on sampling random points from a sequence of distributions using the ball walk with a Metropolis filter. For a target density proportional to the function $f$, the ball walk with $\delta$-steps is defined in Figure~\ref{fig:ball-walk}.

\begin{figure}[h]
\fbox{\parbox{4.0in}{
{\bf Ball Walk($\delta, f$)}

At point $x$:
\begin{enumerate}
\item Pick a random point $y$ from $x + \delta B_n$.
\item Go to $y$ with probability $\min\{1, f(y)/f(x)\}$.
\end{enumerate}
}}
\caption{The Ball walk with a Metropolis filter}\label{fig:ball-walk}
\end{figure}

After a suitable number of steps, the point $x$ obtained will be from a distribution close to the one whose density is proportional to $f$. However, this process is slightly complicated by the fact that we only know that the point is mixed once a certain number of \emph{proper} steps have been taken, i.e.\ steps where $y \in K$ or alternatively where $f(y)\neq 0$.

The algorithm in Figure~\ref{fig:volume-alg} starts with a Gaussian of variance $1/(4n)$, with mean at the center of the unit ball inside $K$. This variance is increased over a sequence of phases till the distribution becomes uniform over $K$. Until the variance $\sigma^2$ reaches $1$, it is increased by a fixed factor of $\slowrate$ in each phase. After the variance reaches $1$, the variance accelerates, increasing by a factor of $\fastrate$ where $\sigma^2$ is the current variance. This process is continued till the variance becomes linear in $C^2n$, at which point one final phase can be used to jump to the uniform distribution. In each phase, we pick a sample of random points from the current distribution and compute the average of the ratio of the current density to the next density for each point. The product of these ratios times a fixed term to account for the integral of the initial function is the estimate output by the algorithm.

Let $f(\sigma^2,K)$ be the function that assigns value $\e{-\|x\|^2/(2\sigma^2)}$ to points in a convex set $K$ and zero to points outside. The algorithm below uses a series of such functions. 

\section{Outline of analysis}
\subsection{Outline of sampling analysis}
To show the random walk quickly reaches its stationary distribution, we will use the standard method of bounding the conductance.   
For the ball walk, this runs into a hurdle, namely, the local conductance of points near sharp corners of the body can be arbitrarily small, so the walk can get stuck and waste a large number of steps. To avoid this, we could start the walk from a random point chosen from a distribution sufficiently close to the target distribution. But how to generate random points from such a starting distribution? We do this by considering a sequence of distributions, each providing a warm start for the next. The very first distribution is chosen to be a highly concentrated Gaussian so that it almost entirely lies inside the unit ball (inside $K$). Thus sampling from the initial distribution is easy by standard rejection sampling. Each successive Gaussian is ``flatter" with the final one being the target distribution, e.g.\ standard Gaussian, uniform distribution. 

The next challenge is to show that, from a warm start, the expected number of steps to converge to the stationary distribution is only $O^*(n^2)$. This is usually done by bounding the conductance of the Markov chain. 
The conductance, $\phi$, of a Markov chain with state space $K$ and next-step distribution $P_x$ is defined as:
\[
\phi = \min_{S \subset K} \frac{\int_S P_x(K\setminus S) \, dQ(x)}{\min Q(S), Q(K\setminus S)}.
\]
Unfortunately, for the ball walk, this can be arbitrarily small, e.g., for points near corners (but also for points in the interior). To utilize the warm start, we use an idea from \cite{KLS97}, namely the {\em speedy walk}. We emphasize that the speedy walk cannot be implemented efficiently and is only a tool for analysis. It is defined as follows.

\noindent
At current point $x$:
\begin{enumerate}
\item Pick random point $y$ from $K \cap x + \delta B_n$.
\item Go to $y$ with probability $\min \{1, f(y)/f(x)\}$.
\end{enumerate}

To capture the stationary distribution of the speedy walk with a Metropolis filter we need another parameter. The {\em local conductance} at $x$ for the speedy walk, without a filter, is defined as follows:
\[
\ell(x) = \frac{\vol(K\cap x+\delta B_n)}{\vol(\delta B_n)}.
\]
The following fact is now easy to verify.
\begin{lemma}\label{lem:speedystationary}
The stationary distribution of the speedy walk with a Metropolis filter applied with a function $f$ has density proportional to $\ell(x)f(x)$.
\end{lemma}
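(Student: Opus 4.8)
The plan is to verify the detailed balance (reversibility) condition for the candidate density $\pi(x) \propto \ell(x) f(x)$; reversibility immediately gives that $\pi$ is stationary, and a short connectivity/aperiodicity remark upgrades this to ``the'' stationary distribution.

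First I would write down the one-step kernel of the speedy walk explicitly. From $x \in K$, the proposal $y$ is drawn uniformly from $K \cap (x + \delta B_n)$, a set of volume $\ell(x)\,\vol(\delta B_n)$ by the definition of $\ell$. Hence for $x \ne y$ the (sub-probability) transition density is
\[
p(x,y) = \frac{\mathbf{1}\!\left[x,y\in K,\ \|x-y\|\le\delta\right]}{\ell(x)\,\vol(\delta B_n)}\cdot\min\left\{1,\frac{f(y)}{f(x)}\right\},
\]
with the remaining mass placed as a holding probability at $x$ (absorbing both rejected Metropolis moves and, in the lazy version, the deliberate laziness); this diagonal part is irrelevant to detailed balance.

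The key step is the cancellation. Multiplying by $\pi(x) = c\,\ell(x)f(x)$, the factor $\ell(x)$ cancels and $f(x)\min\{1, f(y)/f(x)\} = \min\{f(x), f(y)\}$, so
\[
\pi(x)\,p(x,y) = \frac{c}{\vol(\delta B_n)}\,\mathbf{1}\!\left[x,y\in K,\ \|x-y\|\le\delta\right]\min\{f(x),f(y)\},
\]
and the right-hand side is manifestly symmetric in $x$ and $y$: the indicator, the $\min$, and the constant are all symmetric. Therefore $\pi(x)p(x,y)=\pi(y)p(y,x)$ for all $x,y\in K$, i.e.\ the chain is reversible with respect to $\pi$, and reversibility implies $\pi$ is stationary. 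Finiteness of $\int_K \ell(x)f(x)\,dx$ (automatic whenever $f$ is integrable on $K$, e.g.\ $f(x)=\e{-\|x\|^2/(2\sigma^2)}$, or whenever $K$ is bounded) then makes $\pi$ a genuine probability density after normalization.

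The only remaining point is uniqueness, and I expect this to be routine rather than an obstacle: since $K$ is convex with nonempty interior, any two points of $K$ are joined by a chain of overlapping $\delta$-balls contained in $K$, so the walk is irreducible, and the strictly positive holding probability makes it aperiodic; hence the stationary distribution is unique and equals $\pi$. The whole argument is the standard Metropolis-reversibility computation, the single twist being that the proposal is uniform over $K\cap(x+\delta B_n)$ rather than over all of $x+\delta B_n$, which is precisely what injects the extra $\ell(x)$ factor into the stationary density.
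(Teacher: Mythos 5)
Your detailed-balance computation is correct and is precisely the ``easy to verify'' argument the paper alludes to (the paper itself supplies no written proof): the $\ell(x)$ in the candidate density cancels the $1/\ell(x)$ normalization of the uniform proposal over $K\cap(x+\delta B_n)$, leaving the symmetric quantity $\min\{f(x),f(y)\}/\vol(\delta B_n)$, which gives reversibility and hence stationarity.
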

For the speedy walk with $\delta = O(1/\sqrt{n})$, we can show that the conductance is $\Omega(1/(\sigma n))$, and so the total number of steps needed is only $O^*(\sigma^2 n^2)$. This is a factor $n$ faster than previous best bounds. We do this by establishing a stronger (and nearly optimal) isoperimetric inequality. 

As noted, the speedy walk cannot actually be implemented efficiently. To bound the Metropolis ball walk, we can view it as an interleaving of a speedy walk with wasted steps. Let the Markov chain for the original walk be $w_0, w_1, \ldots, w_i, \ldots,$. The subsequence $w_{i_1}, w_{i_2}, \ldots,$ where we record $x$ if the point $y$ chosen by the Metropolis ball walk is in $K$, corresponds to the speedy walk. We then need to estimate the number of wasted steps from a warm start. We will show that this is at most a constant factor higher than the number of proper steps. The key ingredient of this analysis is the (known) fact that for a body containing the unit ball average local conductance is high for ball radius $\delta = O(1/\sqrt{n})$. Even within the speedy walk, there are ``null" steps due to the Metropolis filter. However, by restricting the walk to a large ball, we ensure that the probability of rejection by the filter is bounded by a constant, and therefore the number of wasted steps within the speedy walk is at most a constant fraction of all steps. Also, the speedy walk converges to a distribution proportional to $\ell(x)f(x)$, but we can map this to a random sample from $f$ with rejection sampling routine (Section~\ref{sec:speedy-to-target}).

To sample efficiently, we need a warm start for each phase. For two probability distributions $P$ and $Q$ with state space $K$, the $M$-warmness of $P$ and $Q$ is defined as
\begin{equation}
M(P,Q) = \sup_{S \subseteq K} \frac{P(S)}{Q(S)}. \label{eq:M}
\end{equation}
To keep this parameter bounded by a constant,  we use a finer-grained cooling schedule so that a random point from one phase is a warm start for the next phase. This cooling schedule is also different in the two parts. In the first part of the algorithm, where we can cool at the rate of $\slowrate$ and use $O^*(n^2)$ steps to sample. In the second part, we cool at the rate of $\fastrate$, and this is fast enough to compensate for the higher sample complexity of $O^*(\sigma^2n^2)$. Thus, the overall time to obtain a warm start for every phase of the algorithm is also $O^*(n^3)$. We analyze this in full detail in Section~\ref{sec:sampling}, including the proof that this cooling rate maintains a warm start from one phase to the next.

We can obtain uniform random samples from $K$ given samples from a Gaussian with variance $\sigma^2=C^2n$ via a simple rejection sampling routine. Since $K \subseteq C\sqrt{n}B_n$, the two distributions will be within a constant factor of each other, and therefore we can use $O(1)$ expected samples from the Gaussian distribution to obtain a uniform random point.

\subsection{Outline of volume analysis}
The sampling time when the variance is $\sigma^2$ is $\max\{1,\sigma^2\}n^2$. If we cooled at a rate of $\slowrate$ throughout the algorithm, we would get an $O^*(n^4)$ algorithm since the last doubling phase, i.e.\ the set of phases until $\sigma^2$ doubles, takes $\Omega(n)$ samples, each mixing for $\Omega(n^3)$ steps. The main insight that speeds up our algorithm is the cooling rate of $\fastrate$ once $\sigma^2 > 1$. Cooling at a faster rate once $\sigma^2 >1$ will allow us to compute volume in time $O^*(n^3)$ by having fewer phases when the mixing time of the ball walk increases. 

The volume algorithm proceeds as a series of phases, where each phase seeks to estimate a ratio of Gaussian integrals over the convex body $K$. More precisely, let  
\[
f(\sigma^2,x) = \begin{cases}
\e{-\|x\|^2/(2\sigma^2)} \, \mbox{ if } x \in K\\
0 \, \mbox{ otherwise }
\end{cases} \\
\]
and
\[
F(\sigma^2) = \int_{\R^n} f(\sigma^2,x) \, dx.
\]
Define $\mu_i$ as the probability distribution proportional to $f(\sigma_i^2,x)$; that is, $\mu_i$ is a symmetric Gaussian distribution with variance $\sigma_i^2$ restricted to $K$. Let $X$ be a random sample point from $\mu_i$ and let $Y = f(\sigma_{i+1}^2,X)/f(\sigma_i^2,X)$. We see that the expectation of $Y$ is the ratio of $F(\sigma_{i+1}^2)/F(\sigma_i^2)$:
\begin{align*}
\E(Y) &= \int_K \e{\frac{\|x\|^2}{2\sigma_i^2}-\frac{\|x\|^2}{2\sigma_{i+1}^2}}\, d\mu_i(x) \\
&= \int_K \e{\frac{\|x\|^2}{2\sigma_i^2}-\frac{\|x\|^2}{2\sigma_{i+1}^2}} \cdot \frac{\e{-\|x\|^2/(2\sigma_i^2)}}{F(\sigma_i^2)}\, dx \\
&= \frac{1}{F(\sigma_i^2)} \cdot \int_K \e{-\frac{\|x\|^2}{2\sigma_{i+1}^2}}\, dx = \frac{F(\sigma_{i+1}^2)}{F(\sigma_i^2)}.
\end{align*}

Our goal is to estimate $\E(Y)$ within some target relative error. The algorithm estimates the quantity $\E(Y)$ by taking random sample points $X_1, \ldots, X_k$ and computing the empirical estimate for $\E(Y)$ from the corresponding $Y_1, \ldots, Y_k$:
\[
W = \frac{1}{k}\sum_{j=1}^k Y_j = \frac{1}{k} \sum_{j=1}^k \frac{f_{i+1}(X_j)}{f_i(X_j)}. 
\]

The variance of $Y$ divided by its expectation squared will give a bound on how many independent samples $X_i$ are needed to estimate $\E(Y)$ within the target accuracy. Thus we seek to bound $\E(Y^2)/\E(Y)^2$. We have that
\[
\E(Y^2) = \frac{\int_K \e{\frac{\|x\|^2}{2\sigma_i^2} - \frac{\|x\|^2}{\sigma_{i+1}^2}}\, dx}{\int_K \e{-\frac{\|x\|^2}{2\sigma_i^2}}\, dx} = \frac{F(\frac{\sigma_{i+1}^2\sigma_i^2}{2\sigma_i^2-\sigma_{i+1}^2})}{F(\sigma_i^2)}
\]
and
\[
\frac{\E(Y^2)}{\E(Y)^2} = \frac{F(\sigma_i^2)F(\frac{\sigma_{i+1}^2\sigma_i^2}{2\sigma_i^2-\sigma_{i+1}^2})}{F(\sigma_{i+1}^2)^2} 
\]

If we let $\sigma^2 = \sigma_{i+1}^2$ and $\sigma_i^2= \sigma^2/(1+\alpha)$, then we can further simplify as
\[
\frac{\E(Y^2)}{\E(Y)^2} = \frac{F\left(\frac{\sigma^2}{1+\alpha}\right)F\left(\frac{\sigma^2}{1-\alpha}\right)}{F\left(\sigma^2\right)^2}.
\]

The algorithm has two parts, and the cooling rate $\alpha_i$ is different for them. In the first part, starting with a Gaussian of variance $\sigma^2=1/(4n)$, which has almost all its measure inside the ball contained in $K$, we increase $\sigma^2$ by a fixed factor of $\slowrate$ in each phase till the variance $\sigma^2$ reaches $1$. For each $\sigma$, we sample random points from the corresponding distribution and estimate the ratio of the densities for the current phase and the next phase by averaging over samples. The total complexity
for the first part is thus
\begin{align*}
&O^*(n) \mbox{ phases} \times O^*(1) \mbox{ samples per phase} \times O^*(n^2) \mbox{ time per sample} = O^*(n^3).
\end{align*}

In the second part, we increase the variance till it reaches $C^2n$, after which one final phase suffices to compare with the target uniform distribution. However, we cannot afford to cool at the same rate of $1+1/n$ because the time per sample goes to $O^*(\sigma^2n^2)$ for $\sigma > 1$. By the end of this part, we would be using $O^*(n^3)$ per sample, and the overall complexity would be $O^*(n^4)$. Instead we observe that we can cool at a faster rate of $\fastrate$ and still maintain that the variance of the ratio estimator is a constant. The following bound on the variance, proved in Section~\ref{sec:variance}, allows us to cool at a faster rate as $\sigma$ increases and overcome the increased sampling cost of $O^*(\sigma^2n^2)$. 

\begin{lemma} \label{lem:variance-bound}
Let $K \subseteq C \sqrt{n} B_n$ and $\alpha \le 1/2$. Then,
\[
 \frac{F\left(\frac{\sigma^2}{1+\alpha}\right)F\left(\frac{\sigma^2}{1-\alpha}\right)}{F\left(\sigma^2\right)} \le \e{2 \cdot \frac{C^2\alpha^2n}{\sigma^2}}.
\]
\end{lemma}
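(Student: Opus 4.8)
The plan is to bound $\E(Y^2)/\E(Y)^2=F(\sigma^2/(1+\alpha))\,F(\sigma^2/(1-\alpha))/F(\sigma^2)^2$ — the variance ratio the lemma controls, as computed above — by writing it as a product of two moment generating functions of $\|X\|^2$ under the restricted Gaussian and then controlling the second-order behaviour of the associated cumulant generating function. Concretely, I would let $\mu$ be the Gaussian $\mathcal{N}(0,\sigma^2I)$ restricted to $K$, i.e.\ the density $e^{-\|x\|^2/(2\sigma^2)}/F(\sigma^2)$ on $K$, and set $Z=\|X\|^2/(2\sigma^2)$ for $X\sim\mu$. Then for every $t\in(-1,1)$,
\[
\E_\mu\!\left[e^{tZ}\right]=\frac{1}{F(\sigma^2)}\int_K e^{-(1-t)\|x\|^2/(2\sigma^2)}\,dx=\frac{F\!\left(\sigma^2/(1-t)\right)}{F(\sigma^2)},
\]
so with $\psi(t):=\ln\E_\mu[e^{tZ}]$ (the cumulant generating function of $Z$) we get $\psi(0)=0$ and
\[
\ln\frac{F\!\left(\tfrac{\sigma^2}{1+\alpha}\right)F\!\left(\tfrac{\sigma^2}{1-\alpha}\right)}{F(\sigma^2)^2}=\psi(\alpha)+\psi(-\alpha).
\]

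Since $K$ is bounded, $\psi$ is smooth (in fact analytic), so the fundamental theorem of calculus together with $\psi(0)=0$ gives $\psi(\alpha)+\psi(-\alpha)=\int_0^\alpha\!\int_{-t}^{t}\psi''(s)\,ds\,dt\le \alpha^2\sup_{|s|\le\alpha}\psi''(s)$. Hence the lemma reduces to showing $\psi''(s)\le 2C^2n/\sigma^2$ for $|s|\le 1/2$. By standard cumulant calculus, $\psi''(s)=\Var_{\mu_s}(Z)=\tfrac{1}{4\sigma^4}\Var_{\mu_s}(\|X\|^2)$, where $\mu_s$ is the exponential tilt of $\mu$ — the density proportional to $e^{-(1-s)\|x\|^2/(2\sigma^2)}$ on $K$, i.e.\ $\mathcal{N}\!\left(0,\tfrac{\sigma^2}{1-s}I\right)$ restricted to $K$.

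The crux is the bound on $\Var_{\mu_s}(\|X\|^2)$. Here the naive estimate $\Var_{\mu_s}(\|X\|^2)\le(C^2n)\,\E_{\mu_s}\|X\|^2\le C^4n^2$, which uses only $\|X\|^2\le C^2n$ on $K$, is too weak: it would force $\sigma^2=\Omega(C^2n)$ and thus only cover the last doubling phase. Instead I would use that $\mu_s$ is strongly log-concave: its density is $e^{-V}$ with $\nabla^2V\succeq\tfrac{1-s}{\sigma^2}I$, the hard constraint defining $K$ contributing only a positive semidefinite term (rigorous by approximating $\mathbf 1_K$ from inside by a smooth convex barrier). The Brascamp--Lieb inequality then gives, for any smooth $g$, $\Var_{\mu_s}(g)\le\E_{\mu_s}\!\left[\nabla g^\top(\nabla^2V)^{-1}\nabla g\right]\le\tfrac{\sigma^2}{1-s}\,\E_{\mu_s}\|\nabla g\|^2$; taking $g(x)=\|x\|^2$, so $\|\nabla g\|^2=4\|x\|^2\le 4C^2n$ on $K$, yields $\Var_{\mu_s}(\|X\|^2)\le\tfrac{4C^2n\sigma^2}{1-s}$, hence $\psi''(s)\le\tfrac{C^2n}{\sigma^2(1-s)}\le\tfrac{2C^2n}{\sigma^2}$ for $|s|\le 1/2$. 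Feeding this into the Taylor bound gives $\psi(\alpha)+\psi(-\alpha)\le 2C^2\alpha^2n/\sigma^2$, and exponentiating is exactly the claim.

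The single genuine obstacle is this variance estimate: one must resist the trivial support bound $\|X\|^2\le C^2n$ and instead cash in the Gaussian weighting via strong log-concavity. This is where $K\subseteq C\sqrt nB_n$ enters in the ``right'' way: for $\sigma^2$ as large as $\Theta(C^2n)$ the restricted Gaussian is severely truncated, so $\|X\|^2$ is far more concentrated than its range $[0,C^2n]$ would suggest, and Brascamp--Lieb quantifies precisely that. The only remaining loose end is justifying Brascamp--Lieb (equivalently, a Poincar\'e inequality with constant $\sigma^2/(1-s)$) in the presence of the convex constraint, which is routine by inner smooth approximation of $K$.
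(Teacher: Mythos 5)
Your proof is correct, but it takes a genuinely different route from the paper's. You work directly in $n$ dimensions: writing the log of the ratio as $\psi(\alpha)+\psi(-\alpha)$ for the cumulant generating function $\psi$ of $\|X\|^2/(2\sigma^2)$ under the restricted Gaussian, reducing via Taylor's theorem to a bound on $\psi''(s)=\Var_{\mu_s}(\|X\|^2)/(4\sigma^4)$, and controlling that variance by the gradient (Hessian) form of the Brascamp--Lieb inequality for the strongly log-concave tilted measure $\mu_s$, using that $g(x)=\|x\|^2$ has $\|\nabla g\|^2=4\|x\|^2\le 4C^2n$ on $K$. The paper instead first reduces the $n$-dimensional inequality to one-dimensional exponential needles via localization (Theorem \ref{thm:exp-needles} and Lemma \ref{lem:exp-needle-simplify}), and then performs essentially your derivative-in-$\alpha$ computation on the needle: Claim \ref{claim:deriv} is the identity for $h'/h$ that corresponds to $\psi'(\alpha)-\psi'(-\alpha)$, and Lemma \ref{lem:v-deriv} bounds the analogue of $\psi''$ by combining an elementary fourth-moment-versus-variance estimate (Lemma \ref{lem:4th-moment}, which uses the support bound) with the one-dimensional second-moment form of Brascamp--Lieb (Theorem \ref{thm:Brascamp-Lieb}). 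So the two arguments agree at the level of ``bound the second derivative of the log of the tilted partition function by a variance,'' but differ in where convexity enters: the paper needs only the weak moment-comparison form of Brascamp--Lieb, at the cost of invoking localization; you avoid localization entirely but need the stronger Poincar\'e/gradient form $\Var_{\mu_s}(g)\le\E_{\mu_s}\bigl[\nabla g^{\top}(\nabla^2V)^{-1}\nabla g\bigr]$ on a convex domain, which is not the form stated in the paper and does require the inner-approximation argument you flag (this is standard, so it is not a gap). Two minor remarks: the lemma as printed has $F(\sigma^2)$ rather than $F(\sigma^2)^2$ in the denominator, a typo you correctly and silently fix; and the paper actually proves the bound for an arbitrary logconcave weight $f$ with effective support radius $R_1$, whereas your argument is specific to the indicator of $K$ --- which is all the lemma as stated requires.
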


Note that the above RHS is $\le 1+\sigma^2/(Cn)$ if we select $\alpha = \sigma^2/(2C^2n)$. With this rate, the number of phases needed to double the variance is only $O(C^2n/\sigma^2)$, and the number of samples per phase will be $O^*(1)$. Together, they compensate for the higher complexity of obtaining each sample. The complexity of the second part of the algorithm is thus
\begin{align*}
&O^*\left(\frac{C^2n}{\sigma^2}\right) \mbox{ phases} \times O^*\left(1\right) \mbox{ samples per phase} \times O^*(\sigma^2 n^2) \mbox{ time per sample} = O^*(C^2n^3).
\end{align*}

In Section \ref{sec:variance}, we prove that cooling at this accelerated rate still keeps the variance of the ratio estimator appropriately bounded.

We note that with respect to estimating the volume using Lemma~\ref{lem:variance-bound}, there is a range of cooling rates that we could select to obtain an $O^*(n^3)$ algorithm. We need to select $\alpha \le \sigma/(C\sqrt{n})$ to maintain $\alpha\le 1/2$ and satisfy the condition of Lemma~\ref{lem:variance-bound}. We need $\alpha \ge \sigma^2/(C^2n)$ because otherwise there would be too many phases. So for any $\alpha$ such that $c_1 \sigma^2/(C^2n) \le \alpha \le c_2 \sigma/(C\sqrt{n})$, we get that the complexity of the volume algorithm is
\[
O^*\left(\frac{1}{\alpha}\right) \mbox{ phases} \times O^*\left(\frac{C^2\alpha^2 n}{\sigma^2}\cdot \frac{1}{\alpha}\right) \mbox{ samples per phase} \times O^*(\sigma^2 n^2) \mbox{ time per sample} = O^*(C^2n^3).
\]
We select the cooling rate of $\alpha = \sigma^2/(2C^2n)$ for simplicity of the algorithm since this cooling rate also maintains a warm start for the ball walk sampler, as shown in Lemma~\ref{lem:sigma-large-warmness}.

\section{Preliminaries}

A function $f : \R^n \rightarrow \R^+$ is \emph{logconcave} if it has convex support and the logarithm of $f$, wherever $f$ is non-zero, is concave. Equivalently, $f$ is logconcave if for any $x,y \in \R^n$ and any $\lambda \in [0,1]$, 
\[
f(\lambda x + (1  - \lambda) y ) \geq f(x)^\lambda f(y)^{1-\lambda}
\]

Let $\gamma:\R^n \rightarrow \R_+$ be the density of the standard Gaussian $\mathcal{N}(0,I)$.  

For two probability distributions $P$ and $Q$ with state space $K$, we will use $M(P,Q)$ to denote the $M$-warmness between $P$ and $Q$ as defined in \eqref{eq:M}
and $d_{tv}(P,Q)$ to denote the total variation distance between $P$ and $Q$:
\[
d_{tv}(P,Q) = \sup_{S \subseteq K} | P(S) - Q(S) | .
\]

For a nonnegative function $f:\R^n \rightarrow \R_+$, we define the $f$-distance between two points $u,v \in \R^n$ as
\[
d_f(u,v) = \frac{|f(u)-f(v)|}{\max\{f(u), f(v)\}}.
\]

\section{Isoperimetry}\label{sec:isoperimetry}

The following theorem is due to Brascamp and Lieb. 
\begin{theorem}\cite{BL76}\label{thm:Brascamp-Lieb}
Let $\gamma:\R^n\rightarrow\R_+$ be the standard Gaussian density in $\R^n$. Let $f:\R^n\rightarrow \R_+$ be any logconcave function. Define the density function $h$ over $\R^n$ as follows:
\[
h(x) = \frac{f(x)\gamma(x)}{\int_{\R^n} f(y)\gamma(y) \, dy}.
\]
Fix a unit vector $v \in \R^n$ ,  let $\mu = \E_h(x)$. Then, for any $\alpha \ge 1$,
\[
\E_h(|v^T(x -\mu)|^\alpha) \le \E_{\gamma}(|x_1|^\alpha).
\]
\end{theorem}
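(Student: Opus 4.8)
The plan is to reduce the $n$-dimensional statement to a one-dimensional one by a suitable projection/marginalization argument, and then prove the one-dimensional inequality by a symmetrization-and-rearrangement argument. First, I would fix the unit vector $v$ and, by rotational invariance of the Gaussian density $\gamma$, assume without loss of generality that $v = e_1$. Then I would integrate out the coordinates $x_2,\dots,x_n$: writing $g(t)$ for the density of the first coordinate under $h$, i.e.
\[
g(t) = \int_{\R^{n-1}} h(t, x_2, \dots, x_n)\, dx_2 \cdots dx_n,
\]
the Prékopa–Leindler theorem guarantees that $g$ is a one-dimensional logconcave density (since $f\gamma$ is logconcave, being a product of logconcave functions, and marginals of logconcave functions are logconcave). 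Moreover $g$ can be written as $g(t) = \phi(t)\,\psi(t)$ where $\phi(t) = (2\pi)^{-1/2} e^{-t^2/2}$ is the standard one-dimensional Gaussian density and $\psi(t)$ is logconcave — this is because $\gamma(x) = \prod_i \phi(x_i)$ factors, so after pulling out $\phi(t)$ the remaining integral $\int_{\R^{n-1}} f(t,x')\prod_{i\ge 2}\phi(x_i)\,dx'$ is a logconcave function of $t$ by Prékopa–Leindler applied to the logconcave integrand. Thus the claim reduces to: \emph{if $g = \phi\psi$ is a probability density with $\psi$ logconcave and $\phi$ the standard Gaussian density, and $\mu = \E_g(t)$, then $\E_g(|t-\mu|^\alpha) \le \E_\phi(|t|^\alpha)$ for all $\alpha \ge 1$.}

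For the one-dimensional statement, the key step is a comparison of $g$ (shifted to be centered) against $\phi$ at the level of tails. By translating, assume $\mu = 0$, so $\int t\, g(t)\,dt = 0$ with $g = \phi\psi$, $\psi$ logconcave but now no longer necessarily even. I would argue that because $\psi$ is logconcave, the ratio $g(t)/\phi(t) = \psi(t)$ is logconcave, hence unimodal; combined with the centering condition, this forces the centered distribution $g$ to be "more concentrated near its mean than $\phi$" in the sense that $g$ and $\phi$ cross exactly twice (a sign-change / variation-diminishing argument): there exist $a < b$ with $g \ge \phi$ on $[a,b]$ and $g \le \phi$ outside. From such a two-crossing property plus equality of total mass, one deduces $\E_g(u(t)) \le \E_\phi(u(t))$ for every even function $u$ that is nondecreasing in $|t|$ — in particular for $u(t) = |t|^\alpha$ — by the standard argument that $\int (g-\phi)\,u$ can be split at the crossing points and bounded. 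The centering has to be used to locate the crossings symmetrically enough; this is where I expect a subtlety, since $\psi$ is not even, so the two crossing points need not be symmetric, and one must verify the inequality $\int(g-\phi)u \le 0$ still goes through. A clean way around this is to invoke the known fact (a consequence of the same Prékopa–Leindler machinery, or of the Brascamp–Lieb moment comparison in its symmetric form) that among all logconcave densities with a given variance, the even ones maximize $\E(|t|^\alpha)$ for $\alpha\ge 1$, and to combine it with the variance bound $\Var_g(t)\le 1$ which itself follows from the $\alpha=2$ case (the Brascamp–Lieb variance inequality, provable directly by integration by parts using $\psi' /\psi$ being nonincreasing).

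The main obstacle, as flagged above, is the one-dimensional moment comparison when the logconcave factor $\psi$ is not symmetric: naive symmetrization changes the mean, and one must carefully track how recentering interacts with the rearrangement. I would handle this either by (i) the two-crossing/variation-diminishing argument sketched above, proving $g$ and $\phi$ cross at most twice using logconcavity of $\psi = g/\phi$ and the mean-zero normalization to rule out the one-crossing case, then integrating $|t|^\alpha$ against $g - \phi$ over the three resulting intervals; or, if that proves delicate, (ii) by citing the Prékopa–Leindler-based proof of the symmetric Brascamp–Lieb inequality and separately establishing that recentering a logconcave density can only decrease its centered absolute moments relative to the Gaussian — effectively the content of the $\alpha=2$ (variance) case bootstrapped via the crossing structure. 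Either way, the architecture is: rotate so $v=e_1$; marginalize via Prékopa–Leindler to get a one-dimensional density of the form (Gaussian)$\times$(logconcave); center; establish the two-crossing comparison with the standard Gaussian; conclude the moment inequality for all $\alpha \ge 1$.
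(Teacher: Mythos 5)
The paper does not prove this statement --- it is quoted directly from Brascamp and Lieb \cite{BL76} and used as a black box --- so there is no internal proof to compare against; I can only assess your argument on its own terms. Your architecture (rotate so $v=e_1$; marginalize via Pr\'ekopa--Leindler to get a one-dimensional density $g=\phi\psi$ with $\phi$ the standard Gaussian and $\psi$ logconcave; center; compare with $\phi$ via a crossing argument) is sound, and the obstacle you flag at the end in fact dissolves more cleanly than you suggest, so your route (i) closes and the fallback (ii) is unnecessary. Concretely: after centering, $g-\phi$ integrates to zero against \emph{every affine function} (both are probability densities and both now have mean zero), and since $\mathrm{sign}(g-\phi)=\mathrm{sign}(\psi-1)$ with $\{\psi\ge 1\}$ an interval $[a,b]$ by logconcavity, $g-\phi$ has the sign pattern $-,+,-$. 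For $u(t)=|t|^\alpha$ with $\alpha\ge 1$, $u$ is convex, so the secant line $L$ through $(a,u(a))$ and $(b,u(b))$ satisfies $u\le L$ on $[a,b]$ and $u\ge L$ outside; hence $\int(g-\phi)u=\int(g-\phi)(u-L)\le 0$ pointwise on each of the three intervals. No symmetry of the crossing points and no evenness of $u$ is needed --- convexity plus orthogonality to affine functions is exactly the right hypothesis, and it also disposes of the degenerate one-crossing case (a single sign change would force $\int(t-a)(g-\phi)\,dt>0$ unless $g=\phi$ a.e.). Two small points you should make explicit: first, recentering preserves the form ``standard Gaussian times logconcave'' because $\phi(t+\mu)=\phi(t)e^{-\mu t-\mu^2/2}$ and the log-affine factor can be absorbed into $\psi$; second, your fallback (ii) as stated is partly circular (it invokes the symmetric Brascamp--Lieb inequality and the $\alpha=2$ case to prove the general case), so it should not be relied on. With those repairs your proposal is a correct, self-contained proof of the cited theorem.
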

We have the following concentration bound.

\begin{corollary}\label{cor:iso-conc-bound}
For $h$ as defined in Theorem \ref{thm:Brascamp-Lieb}, and any $t \ge 1$,
\[
\Pr_h(\|x - \mu\|^2  \ge n+ct\sqrt{n}) \le e^{-t^2}
\]
for an absolute constant $c$.
\end{corollary}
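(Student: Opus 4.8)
The plan is to derive the concentration of $\|x-\mu\|^2$ from the one-dimensional moment bounds of Theorem~\ref{thm:Brascamp-Lieb} in two steps: first upgrade those bounds to a \emph{uniform} (direction-free) sub-Gaussian estimate for every marginal of $x-\mu$, and then lift this to a tail bound on the squared norm by linearizing the square with an auxiliary Gaussian.

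Write $Y = x-\mu$. For a fixed unit vector $v$, expanding $e^{sz}=\sum_k (sz)^k/k!$ and applying Theorem~\ref{thm:Brascamp-Lieb} termwise (the case $\alpha=k$) gives, for every real $s$,
\[
\E_h\!\left[e^{s\,v^TY}\right]\le \E_h\!\left[e^{|s|\,|v^TY|}\right]\le \sum_{k\ge 0}\frac{|s|^k}{k!}\,\E_\gamma\!\left[|x_1|^k\right]=\E_\gamma\!\left[e^{|s|\,|x_1|}\right]\le 2e^{s^2/2},
\]
so every marginal $v^TY$ is sub-Gaussian with an absolute constant, uniformly over $v$. (The $\alpha=2$ case, summed over coordinate directions, also yields $\E_h(\|Y\|^2)=\sum_i\E_h((x_i-\mu_i)^2)\le n$, which is why centering at $n$ is legitimate.)

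Next I would bring in an independent $G\sim\mathcal N(0,I_n)$ and use $\E_G[e^{\sqrt\lambda\,\langle G,w\rangle}]=e^{\lambda\|w\|^2/2}$ for $0<\lambda<1$. By Tonelli and then the marginal bound above applied in the random direction $G/\|G\|$ with $s=\sqrt\lambda\,\|G\|$,
\[
\E_h\!\left[e^{\lambda\|Y\|^2/2}\right]=\E_G\,\E_h\!\left[e^{\sqrt\lambda\,\langle G,Y\rangle}\right]\le \E_G\!\left[2e^{\lambda\|G\|^2/2}\right]=2(1-\lambda)^{-n/2}.
\]
A Chernoff bound with the optimal choice $\lambda=r/(n+r)$ then yields, for any $r>0$,
\[
\Pr_h\!\left(\|Y\|^2\ge n+r\right)\le 2(1-\lambda)^{-n/2}e^{-\lambda(n+r)/2}=2\exp\!\left(-\tfrac n2\Bigl(\tfrac rn-\ln(1+\tfrac rn)\Bigr)\right).
\]
Taking $r=ct\sqrt n$ and using the elementary inequality $x-\ln(1+x)\ge \tfrac16\min\{x^2,x\}$ for $x\ge 0$, the right-hand side is at most $2\exp\!\left(-\tfrac1{12}\min\{c^2t^2,ct\sqrt n\}\right)\le e^{-t^2}$ once $c$ is a sufficiently large absolute constant, for $1\le t\le \sqrt n$; the $\chi^2$-type behaviour of $\|Y\|^2$ shows this is the natural range and suffices for the applications (establishing that essentially all the mass of $h$ lies in a ball of radius $\sqrt{n+O^*(\sqrt n)}$).

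The main obstacle is the lifting step: Theorem~\ref{thm:Brascamp-Lieb} only controls one-dimensional marginals, and the coordinates of $Y$ are \emph{not} independent under $h$, so the textbook $\chi^2$ concentration does not apply directly. The Gaussian-linearization identity is precisely what converts uniform control of marginals into a bound on $\E_h[e^{\lambda\|Y\|^2/2}]$ without any independence assumption; once that moment-generating bound is in hand, the remainder is a routine Chernoff optimization and an elementary estimate on $x-\ln(1+x)$.
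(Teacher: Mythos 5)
Your argument is correct and is essentially the standard route for this kind of statement (the paper itself states the corollary without proof): dominate all moments of each marginal of $x-\mu$ by the Gaussian moments via Theorem~\ref{thm:Brascamp-Lieb}, convert that into a bound on $\E_h\bigl[e^{\lambda\|x-\mu\|^2/2}\bigr]$, and finish with Chernoff. Each step checks out: the termwise comparison $\E_h[e^{|s||v^TY|}]\le\E_\gamma[e^{|s||x_1|}]\le 2e^{s^2/2}$ is legitimate by monotone convergence (the $k=0$ terms agree and the theorem covers $k\ge 1$), the Gaussian-linearization/Tonelli swap is valid since the integrand is nonnegative and Theorem~\ref{thm:Brascamp-Lieb} applies to the fixed direction $G/\|G\|$ for each realization of $G$, and the computation $\E_h[e^{\lambda\|Y\|^2/2}]\le 2(1-\lambda)^{-n/2}$ with $\lambda=r/(n+r)$ gives exactly the stated sub-gamma tail.

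The one point worth recording is the range of $t$. Your bound degrades to $e^{-\Omega(ct\sqrt n)}$ for $t>\sqrt n$, and this is not a defect of your proof: the corollary as literally stated cannot hold for all $t\ge 1$ with an absolute constant, since already for $f\equiv 1$ (so $h=\gamma$) and $n=1$ one has $\Pr(|x|^2\ge 1+ct)=\Theta^*(e^{-ct/2})$, which exceeds $e^{-t^2}$ once $t\gg c$. So the corollary should be read as holding for $1\le t\le O(\sqrt n)$, which is exactly the regime the paper uses it in (restricting the Gaussian to $K\cap 4\sigma\sqrt n B_n$ loses only negligible mass), and your proof covers that regime with an explicit absolute constant.
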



The next lemma about one-dimensional isoperimetry is from \cite{KLS95}

\begin{lemma}\label{lem:1d-iso}\cite{KLS95}
For any one-dimensional isotropic logconcave function $f$, and any partition $S_1, S_2, S_3$ of the real line, 
\[
\pi_f(S_3) \ge \iso \, d(S_1,S_2) \pi_f(S_1)\pi_f(S_2).
\] 
\end{lemma}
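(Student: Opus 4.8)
The plan is to reduce, in two steps, to a pointwise lower bound on the density that follows from logconcavity of the distribution function together with the isotropy normalization $\E(X)=0$, $\Var(X)=1$. Write $F(x)=\pi_f((-\infty,x])$ for the distribution function.

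\textbf{Step 1 (reduction to half-lines).} If $d(S_1,S_2)=0$ the inequality is vacuous, so assume $t:=d(S_1,S_2)>0$; then the part of $S_3$ separating $S_1$ from $S_2$ has length $\ge t$. By the standard reduction — sweeping the mass of $S_1$ to the left and that of $S_2$ to the right of this separating gap, which does not increase $\pi_f(S_3)$, does not decrease $\pi_f(S_1)\pi_f(S_2)$, and does not decrease $d(S_1,S_2)$ — it suffices to take $S_1=(-\infty,a]$, $S_3=(a,b)$, $S_2=[b,\infty)$ with $b-a=t$. Put $p=\pi_f(S_1)=F(a)$ and $q=\pi_f(S_2)=1-F(b)$.

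\textbf{Step 2 (reduction to a pointwise bound).} Both $F$ and $1-F$ are logconcave, being convolutions of the logconcave $f$ with the indicator of a half-line; hence the reversed hazard rate $\mu:=f/F=(\log F)'$ is non-increasing. For $x\in[a,b]$ we get $f(x)=\mu(x)F(x)\ge\mu(b)F(a)=\mu(b)p$, so
\[
\pi_f((a,b))=\int_a^b f\,dx\;\ge\;(b-a)\,\mu(b)\,p\;=\;t\,p\cdot\frac{f(b)}{F(b)}.
\]
It therefore suffices to prove the pointwise estimate $f(x)\ge\ln 2\cdot F(x)\,(1-F(x))$ for every $x$: applied at $x=b$ it gives $f(b)/F(b)\ge\ln 2\,(1-F(b))=\ln 2\cdot q$, whence $\pi_f((a,b))\ge\ln 2\cdot t\,p\,q$. (Symmetrically one could use the non-decreasing hazard rate $f/(1-F)$ and the estimate at $x=a$.)

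\textbf{Step 3 (the pointwise estimate, and the main difficulty).} Consider the density–quantile function $\psi(u)=f(F^{-1}(u))$ on $(0,1)$. Since $\psi'(u)=(\log f)'(F^{-1}(u))$ with $(\log f)'$ non-increasing and $F^{-1}$ increasing, $\psi$ is concave; also $\psi\ge 0$ and $\psi\le\sup f\le 1$. The estimate is equivalent to $\psi(u)\ge\ln 2\cdot u(1-u)$. A concave nonnegative function cannot drop much below this parabola without being small on a whole subinterval, and if $\psi\le\eta$ on an interval $I$ of length $\ell$ then $F^{-1}$ stretches $I$ over an $x$-interval of length $\ge\ell/\eta$ carrying probability $\ell$, with conditional density $\le\eta/\ell$, hence conditional variance $\ge\ell^2/(12\eta^2)$; by the law of total variance $1=\Var(X)\ge\ell^3/(12\eta^2)$. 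Choosing $I$ maximal (using concavity, and distinguishing whether $u$ lies beyond the quantile at which $\psi$ peaks, where $\sup f\le 1$ is used on the other side) converts this into the claimed bound. This last step is where the real work lies: Steps 1--2 are routine, but certifying that the single constant $\ln 2$ works uniformly in $u$ takes a somewhat delicate interplay between concavity of $\psi$, the location of the mode, and $\Var(X)=1$. The constant is not tight — for the one-sided exponential $\psi(u)=u$, so $f/(F(1-F))$ can be made arbitrarily close to $1$ — which is exactly why a convenient value such as $\ln 2$ suffices.
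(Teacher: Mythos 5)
The paper itself does not prove this lemma: it is quoted from \cite{KLS95}, and the intended derivation is a one--liner — invoke Theorem 5.1 of that paper, which gives $\pi_f(S_3) \ge \frac{\ln 2}{\E_f(|x|)}\, d(S_1,S_2)\,\pi_f(S_1)\pi_f(S_2)$, and then bound $\E_f(|x|) \le \sqrt{\E_f(x^2)} = 1$ by Cauchy--Schwarz using isotropy. Your attempt to reprove the one-dimensional KLS inequality from first principles is therefore a genuinely different and more ambitious route, but it is incomplete, and the decisive gap is exactly where you say "this last step is where the real work lies." Having reduced everything to the pointwise bound $f(x)\ge \ln 2\cdot F(x)(1-F(x))$, equivalently $\psi(u)\ge \ln 2\cdot u(1-u)$ for the concave density--quantile function $\psi$, you assemble a variance estimate ($\psi\le\eta$ on an interval of length $\ell$ forces $1\ge \ell^3/(12\eta^2)$) and assert that a maximal choice of interval "converts this into the claimed bound." It does not, at least not directly: if $\psi(u_0)=\eta$ with $u_0$ small and the mode of $\psi$ to the right of $u_0$, the natural interval $I=[0,u_0]$ gives only $\eta\ge u_0^{3/2}/\sqrt{12}$, which is of order $u_0^{3/2}$ and hence far weaker than the required linear bound $\eta\ge \ln 2\cdot u_0(1-u_0)$ as $u_0\to 0$. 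A different estimate is needed in that regime (e.g.\ using $\psi(u)\le (\eta/u_0)u$ from concavity to show $F^{-1}$ grows logarithmically, hence the variance blows up unless $\eta\gtrsim u_0$), and whether the argument closes with the specific constant $\ln 2$ rather than some unnamed universal constant is exactly what is left unverified. The auxiliary claim $\sup f\le 1$ for a variance-one logconcave density is true but is itself a lemma that must be cited or proved.

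There is also a flaw in Step 1. The "sweeping" reduction to the half-line partition $S_1'=(-\infty,a]$, $S_3'=(a,b)$, $S_2'=[b,\infty)$ does preserve $\pi_f(S_3')\le\pi_f(S_3)$ and hence increases the \emph{sum} $\pi_f(S_1')+\pi_f(S_2')$, but it can strictly decrease the \emph{product}: take $S_1=A_1\cup A_2$ with $A_1$ far left of mass $0.05$ and $A_2$ far right of mass $0.35$, $S_2$ in the middle of mass $0.4$, and $S_3$ of mass $0.2$; cutting at the gap between $A_1$ and $S_2$ yields $\pi_f(S_1')\approx 0.05$ and $\pi_f(S_2')\approx 0.9$, so the product drops from $0.16$ to about $0.045$, and the inequality for the primed partition no longer implies the original one. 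The reduction from arbitrary measurable partitions to three consecutive intervals is true and standard, but it requires decomposing the line at \emph{all} separating gaps (or a regularity-plus-induction argument), not a single cut. Note that for the application in Theorem~\ref{thm:iso} only the three-interval case is needed, so Steps 2--3 address the right target; but as a proof of the lemma as stated, both Step 1 and Step 3 have real holes, with Step 3 being the substantive one.
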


\begin{theorem}\label{thm:iso}
Let $\pi$ be the Gaussian distribution $N(0,\sigma^2 I_n)$ with density function $\gamma$ restricted by a logconcave function $f:\R^n \rightarrow \R_+$, i.e., $\pi$ has
density $d\pi(x)$ proportional to $h(x)=f(x)d\gamma(x)$. Let $S_1,S_2,S_3$ partition $\R^n$ such that
for any $u \in S_1, v\in S_2$, either $\|u-v\| \ge d/\iso$ or $d_h(u,v) \ge 4d\sqrt{n}$. Then, 
\[
\pi(S_3) \ge \frac{d}{\sigma}\pi(S_1)\pi(S_2).
\]
\end{theorem}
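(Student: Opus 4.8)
The plan is to reduce the $n$-dimensional isoperimetric inequality to the one-dimensional statement in Lemma~\ref{lem:1d-iso} via a localization argument, which is the standard approach for such results going back to \cite{KLS95, LS93}. First I would invoke the localization lemma (Lov\'asz--Simonovits) to reduce the claim to the following one-dimensional inequality: for every line $\ell$ in $\R^n$ and every logconcave function $g$ supported on a segment of $\ell$ (arising as $h$ times a linear function restricted to the line), the partition of $\ell$ induced by $S_1, S_2, S_3$ satisfies $\pi_g(S_3 \cap \ell) \ge (d/\sigma)\, \pi_g(S_1\cap\ell)\,\pi_g(S_2\cap\ell)$. Concretely, if the $n$-dimensional inequality failed, localization would produce a needle (a segment with a logconcave density $g$ of the form $e^{-\|x\|^2/(2\sigma^2)}$ times a linear factor times the indicator of the segment, i.e. still logconcave) on which the reverse inequality holds, and I would derive a contradiction on that needle.

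On the needle, I would parametrize by arc length and consider two cases according to the dichotomy in the hypothesis. In the first case, the part of the segment in $S_1$ and the part in $S_2$ are at Euclidean distance $\ge d/\iso$ apart along the line; then I would normalize $g$ to be isotropic, apply Lemma~\ref{lem:1d-iso}, and observe that rescaling to isotropic position converts the Euclidean gap $d/\iso$ into an isotropic gap of $d/(\iso\,\tau)$ where $\tau$ is the standard deviation of $g$ along the needle. Since $g$ is a restriction of a Gaussian of variance $\sigma^2$ (times a logconcave factor), the Brascamp--Lieb inequality (Theorem~\ref{thm:Brascamp-Lieb}, or rather its one-dimensional consequence that the variance of a Gaussian restricted by a logconcave function in any direction is at most $\sigma^2$) gives $\tau \le \sigma$, so the gap is at least $d/(\iso\,\sigma)$ in isotropic units; combined with the $\iso$ factor from Lemma~\ref{lem:1d-iso} this yields exactly the factor $d/\sigma$. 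In the second case the $h$-distance (equivalently the $g$-distance along the needle, since the linear factor from localization is common) between $S_1$ and $S_2$ is $\ge 4d\sqrt{n}$; here I would argue that a large $f$-distance along a one-dimensional logconcave density also forces a large "effective separation" — the density must drop by a large factor between the two sets, and a short computation with the isotropic logconcave density (whose value is controlled on an interval of length $O(\sqrt{n})$ by standard logconcave tail/mode bounds) converts the $4d\sqrt{n}$ bound on $d_g$ into the required lower bound on $\pi_g(S_3\cap\ell)$.

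The main obstacle I anticipate is handling the second case cleanly: showing that $d_g(u,v)\ge 4d\sqrt{n}$ on a one-dimensional logconcave needle implies $\pi_g(S_3) \ge (d/\sigma)\pi_g(S_1)\pi_g(S_2)$. The subtlety is that a large pointwise ratio of $g$ between two points does not by itself bound a probability ratio; one needs to use that $g$ is logconcave (so $\log g$ is concave, hence the ratio changes monotonically along the needle away from the mode) together with the fact that $S_3$ lies "between" $S_1$ and $S_2$ in the sense forced by the partition. I would also need to be careful that the variance-comparison step ($\tau\le\sigma$) indeed follows from Brascamp--Lieb for the needle densities produced by localization, which are $e^{-t^2/(2\sigma^2)}$ times an arbitrary logconcave function of $t$ times an interval indicator — this is exactly the setting of Theorem~\ref{thm:Brascamp-Lieb} restricted to one dimension, so it should go through, but the bookkeeping between the localization output and the hypotheses of Brascamp--Lieb is where I would spend the most care. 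A secondary technical point is ensuring the two separation conditions in the hypothesis are exactly matched to what the two cases of the one-dimensional argument require, so that no slack is lost and the constant in front of $d/\sigma$ is genuinely $1$.
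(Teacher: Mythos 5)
Your overall strategy is the same as the paper's: apply the Lov\'asz--Simonovits localization lemma to reduce to a needle, split into two cases matching the two separation conditions in the hypothesis, handle the Euclidean-distance case by combining Lemma~\ref{lem:1d-iso} with a Brascamp--Lieb bound on the variance of the needle density (so that rescaling to isotropic position can only shrink the gap by a factor of at most $\sigma$), and handle the $d_h$-distance case by a separate one-dimensional inequality. Your first case is essentially identical to the paper's treatment of its inequality~(\ref{1d-2}), including the key observation that the Gaussian-times-logconcave structure survives localization so that Theorem~\ref{thm:Brascamp-Lieb} applies.

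The genuine gap is in your second case, and it is located exactly where you suspected. Two specific problems. First, the localization lemma does not produce a needle density of the form ``$h$ times a linear factor''; it produces $h(t)\,l(t)^{n-1}$ for a nonnegative affine $l$, i.e.\ the $(n-1)$-st power of a linear function. Second, and consequently, your parenthetical claim that the $h$-distance is ``equivalently the $g$-distance along the needle, since the linear factor from localization is common'' is false: $d_h(u,v)$ compares $h(u)$ and $h(v)$ only, whereas the needle measure weights these points by $l(u)^{n-1}$ and $l(v)^{n-1}$, which can differ substantially. The factor $4\sqrt{n}$ in the hypothesis $d_h(u,v)\ge 4d\sqrt{n}$ is precisely the price paid for this discrepancy, so an argument that ignores the $l^{n-1}$ factor cannot recover the stated constant. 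The statement you actually need is: for a one-dimensional logconcave $h$ on $[a,b]$, an affine $l\ge 0$, and $a\le u\le v\le b$,
\[
\int_a^b h\,l^{n-1}\,dt\;\int_u^v h\,l^{n-1}\,dt \;\ge\; \frac{d_h(u,v)}{4\sqrt{n}}\;\int_a^u h\,l^{n-1}\,dt\;\int_v^b h\,l^{n-1}\,dt,
\]
which is Lemma~3.8 of \cite{KLS97}; the paper simply cites it. Your proposed route via ``standard logconcave tail/mode bounds on an interval of length $O(\sqrt{n})$'' does not engage with the $l^{n-1}$ weight at all, so as written this case would not close. Quoting (or reproving) the KLS inequality above would complete your argument, since the rest of your reduction is sound.
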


\begin{proof}
We prove the theorem for the case $\sigma=1$, then note that by applying the scaling $x = y/\sigma$, we get the general case.

Our main tool, as in previous work, is the Localization Lemma of Lov\'asz and Simonovits \cite{LS93}. Suppose the conclusion is false. Define $h(x)=f(x)\gamma(x)$.
 Then there exists a partition $S_1, S_2, S_3$ for which, for some positive real number $A$, 
\begin{align*}
\int_{S_1} h(x)\, dx &=  A\int_{\R^n}h(x)\, dx \\
\mbox{ and } \int_{S_3} h(x)\, dx &<  d A \int_{S_2} h(x)\, dx.
\end{align*}
By the localization lemma, there must be a ``needle'' given by $a,b\in \R^n$ and a nonnegative linear function $l: [0,1] \rightarrow \R_+$ for which,
\begin{align*}
\int_{(1-t)a+tb \in S_1 \cap [0,1]} h((1-t)a+tb) l(t)^{n-1}\, dt = A \int_{(1-t)a+tb \in [0,1]} h((1-t)a+tb) l(t)^{n-1}\, dt
\end{align*}
and
\begin{align*}
\int_{(1-t)a+tb \in S_3 \cap [0,1]} h((1-t)a+tb) l(t)^{n-1}\, dt < d A\int_{(1-t)a+tb \in S_2 \cap [0,1]} h((1-t)a+tb)l(t)^{n-1}\, dt.
\end{align*}
By a standard combinatorial argument, we can assume that $Z_i = \{t: (1-t)a+tb \in S_i\}$ are intervals that partition $[a,b]$.
Thus, to reach a contradiction, it suffices to prove that for a one-dimensional logconcave function $h(t)=f((1-t)a+tb)\gamma((1-t)a+tb)$ with support $[a,b] \subset \R$ and $a\le u\le v \le b$, the
following statements hold:
\begin{align}
\int_{a}^b h(t)l(t)^{n-1}\, dt \int_u^v h(t)l(t)^{n-1}\, dt \ge \frac{d_h(u,v)}{4\sqrt{n}} \int_a^u  h(t)l(t)^{n-1}\, dt  \int_v^b  h(t)l(t)^{n-1} \label{1d-1}
\end{align}
\begin{align}
\int_a^b h(t)l(t)^{n-1}\, dt \int_u^v h(t)l(t)^{n-1}\, dt \ge \iso \|u-v\| \int_a^u  h(t)l(t)^{n-1}\, dt  \int_v^b  h(t)l(t)^{n-1} \label{1d-2}.
\end{align}
The first inequality (\ref{1d-1}) follows directly from Lemma 3.8 in \cite{KLS97}. 
To see the second inequality (\ref{1d-2}), we first note that by applying Theorem \ref{thm:Brascamp-Lieb}, with $\alpha=2$, 
we have that the variance of 
the distribution proportional to $h(t)l(t)^{n-1}$ is at most $1$. This is because $h(t)l(t)^{n-1} = (f((1-t)a+tb)l(t)^{n-1})\gamma((1-t)a+tb)$ and the $f((1-t)a+tb)l(t)^{n-1}$ is itself a logconcave function. Now, we note that by scaling down to increase the variance to exactly $1$, the isoperimetric coefficient can only go down. Hence, the second inequality is implied by Lemma \ref{lem:1d-iso}.
\end{proof}

\section{Sampling}\label{sec:sampling}
The analysis of the sampling algorithm is divided into several parts: bounding the conductance of the speedy walk,  bounding the warmth of the distribution from one phase to the next, the mixing time of the Metropolis ball walk from a warm start, and finally the complexity of sampling.

\subsection{Conductance}\label{sec:conductance}

First we bound the rate of convergence of the random walk through a lower bound on the conductance. The conductance $\phi$ of a Markov chain with state space $K$ and next-step distribution $P_x$ is defined as:
\[
\phi = \min_{S \subset K} \frac{\int_S P_x(K\backslash S)dQ(x)}{\min\{Q(S),Q(K\backslash S)\}}.
\]

We will make use of the following theorem of Lov\'asz and Simonovits \cite{LS93} to bound the total variation distance between the current distribution and the target distribution.

\begin{theorem}\label{thm:mixing}\cite{LS93}
Let $Q_t$ be the distribution after $t$ steps of a lazy Markov chain and $Q$ be its stationary distribution. Suppose that $Q_0$ is $M$-warm with respect to $Q$. Then,
\[
d_{tv}(Q_t,Q) \le \sqrt{M}\left(1-\frac{\phi^2}{2}\right)^t.
\]
\end{theorem}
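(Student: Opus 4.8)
The plan is to obtain the sharp rate $(1-\phi^2/2)^t$ either spectrally (cleanest for the reversible Metropolis chains to which we apply this theorem) or, in the stated generality (lazy, not necessarily reversible), via the Lov\'asz--Simonovits potential method; I sketch both. For the spectral route: by laziness the transition operator $P$ is a positive self-adjoint contraction on $L^2(Q)$, so its spectrum lies in $[0,1]$ and its norm on the mean-zero subspace equals $\lambda_2 := 1 - \mathrm{gap}(P)$. Cheeger's inequality for reversible chains gives $\mathrm{gap}(P) \ge \phi^2/2$, hence $\lambda_2 \le 1 - \phi^2/2$. Writing $g_0 = dQ_0/dQ - 1$ (which is mean-zero), reversibility gives $dQ_t/dQ - 1 = P^t g_0$, so $\|dQ_t/dQ - 1\|_{L^2(Q)} \le \lambda_2^t \|g_0\|_{L^2(Q)}$, while $M$-warmness ($dQ_0/dQ \le M$ a.e.) gives $\|g_0\|_{L^2(Q)}^2 = \int (dQ_0/dQ)^2 \, dQ - 1 \le M - 1$. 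Then $d_{tv}(Q_t,Q) = \tfrac12 \|dQ_t/dQ - 1\|_{L^1(Q)} \le \tfrac12 \|dQ_t/dQ - 1\|_{L^2(Q)} \le \tfrac12 \sqrt{M-1}\,(1-\phi^2/2)^t$, slightly stronger than claimed; the only nontrivial ingredient is Cheeger's inequality with the sharp factor $1/2$.

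For the general route I would follow \cite{LS93}. To a distribution $\mu$ on $K$ associate its concave majorant $h_\mu : [0,1] \to [0,1]$, $h_\mu(y) = \sup\{\mu(A) : Q(A) = y\}$ (the $\mu$-mass of the top $y$-fraction of $K$ when points are ordered by $d\mu/dQ$); it is concave, nondecreasing, with $h_\mu(0) = 0$, $h_\mu(1) = 1$, $h_\mu(y) \ge y$, and $d_{tv}(\mu,Q) = \sup_y (h_\mu(y) - y)$. The core step is the one-step inequality: for the update $\mu \mapsto \mu P$ of a lazy chain with conductance $\ge \phi$,
\[
h_{\mu P}(y) \ \le\ \tfrac12 \, h_\mu\big(y - 2\phi\min(y,1-y)\big) + \tfrac12 \, h_\mu\big(y + 2\phi\min(y,1-y)\big),
\]
with the arguments truncated to $[0,1]$. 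One proves this by taking a near-optimal superlevel set $A = \{d\mu/dQ \ge \tau\}$ realizing $h_\mu(y)$ and splitting $K$ into the ``deep interior'' of $A$ (points whose one-step escape probability into $A^c$ is small), the ``deep interior'' of $A^c$, and the boundary layer in between; laziness ($P_z(\{z\}) \ge \tfrac12$) caps each step's escape probability, the conductance lower bound together with laziness forces the boundary layer to exchange effective $Q$-mass $2\phi\min(y,1-y)$ across the threshold, and concavity of $h_\mu$ (needed to combine pieces of $A^c$ at differing densities) turns this exchange into the averaged shift above.

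Finally I would iterate this against the smooth profile $\psi(y) = \sqrt{y(1-y)}$, which is concave on $[0,1]$, vanishes at the endpoints, and has $\psi''(y) = -\tfrac14 (y(1-y))^{-3/2}$. A short one-variable estimate gives, for $s_y = 2\phi\, y(1-y)$, that $\tfrac12 \psi(y - s_y) + \tfrac12 \psi(y + s_y) \le (1 - \phi^2/2)\psi(y)$, and since $a \mapsto \tfrac12[g(y-a) + g(y+a)]$ is nonincreasing in $a$ for concave $g$, the same bound holds with the larger shift $2\phi\min(y,1-y)$. Combined with the one-step inequality, the invariant $h_{Q_t}(y) \le y + c_t \psi(y)$ propagates with $c_{t+1} = (1 - \phi^2/2) c_t$; since $M$-warmness gives $h_{Q_0}(y) \le \min(My, 1) \le y + \sqrt{M-1}\, \psi(y)$ on $[0,1]$ (an elementary check), we get $c_t \le \sqrt{M-1}\,(1-\phi^2/2)^t$ and hence $d_{tv}(Q_t,Q) = \sup_y (h_{Q_t}(y) - y) \le c_t \sup_y \psi(y) = \tfrac12 \sqrt{M-1}\,(1-\phi^2/2)^t \le \sqrt{M}\,(1-\phi^2/2)^t$. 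The step I expect to be the main obstacle is the one-step inequality: the deep-interior/deep-interior/boundary-layer bookkeeping must be done carefully enough that the exchanged $Q$-mass is genuinely of order $\phi\min(y,1-y)$ (not something weaker), and one must check that passing to the concave majorant loses nothing; everything afterward -- the choice of profile, the calculus lemma, the induction -- is routine, and it is precisely the sharp one-step bound that yields the clean constant $\phi^2/2$ in the statement.
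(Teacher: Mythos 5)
The paper gives no proof of this statement; it is imported verbatim from Lov\'asz and Simonovits \cite{LS93}, and your second route is essentially their argument: the concave majorant $h_\mu$, the one-step inequality $h_{\mu P}(y)\le\tfrac12 h_\mu(y-2\phi\min(y,1-y))+\tfrac12 h_\mu(y+2\phi\min(y,1-y))$, a concave square-root profile, and induction on $t$. Your supporting computations check out: $\min(My,1)-y\le\sqrt{M-1}\,\psi(y)$ reduces to $My\le 1$ on the relevant range, the second-order identity $4y(1-y)+(1-2y)^2=1$ is exactly what makes the contraction factor come out as $1-\phi^2/2$ for $\psi(y)=\sqrt{y(1-y)}$, and since $\phi\le\tfrac12$ for a lazy chain the shifts $y\pm 2\phi\min(y,1-y)$ never leave $[0,1]$, so the truncation issue you worry about is vacuous. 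You correctly identify the one-step inequality as the only real content; your gloss of its proof (deep interior / boundary layer) is the right picture, though the clean way to execute it in \cite{LS93} is to bound $Q_{t+1}(A)$ by splitting $\int_A P_x(A)\,dQ_t+\int_{A^c}P_x(A)\,dQ_t$ using laziness and the ergodic flow $\int_A P_x(A^c)\,dQ(x)\ge\phi\min(y,1-y)$, which is where the factor $2$ in the shift comes from. Your spectral route is a genuinely different and shorter proof, and it does cover every chain this paper actually applies the theorem to (the Metropolis ball walk and speedy walk are reversible, and Theorem~\ref{thm:speedyconductance} is applied through Theorem~\ref{thm:mixing} exactly in that setting); its cost is that it proves a narrower statement than the one quoted, since the $L^2$ contraction with rate $1-\phi^2/2$ via Cheeger requires self-adjointness of $P$ on $L^2(Q)$, whereas the Lov\'asz--Simonovits potential argument needs only laziness and the conductance bound. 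Either route, written out in full, would be an acceptable replacement for the citation.
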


For the next lemmas, we let $f: \R^n \rightarrow R$ denote the Gaussian density function
\[
f(x) =\e{-\frac{\|x\|^2}{2\sigma^2}}.
\]

\begin{lemma}\label{lem:filter}
For the speedy walk applied to a convex body $K \subseteq 4 \sigma \sqrt{n} B_n$, $\delta \le \sigma/(8\sqrt{n})$ and $\sigma^2 \le 64n$, the acceptance probabilty of the Metropolis filter is at least $\k$. 
\end{lemma}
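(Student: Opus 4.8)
The plan is to reduce the statement to a pointwise lower bound on the Metropolis ratio $f(y)/f(x)$ over all moves the speedy walk can propose. Recall the target density here is $f=f(\sigma^2,K)$ with $f(x)\propto \e{-\|x\|^2/(2\sigma^2)}$ on $K$ (and $0$ outside), and that at a point $x$ the speedy walk proposes $y$ uniformly from $K\cap(x+\delta B_n)$, so in particular $y\in K$ and $f(y)>0$. Hence for any proposed $y$,
\[
\frac{f(y)}{f(x)}=\exp\!\left(\frac{\|x\|^2-\|y\|^2}{2\sigma^2}\right),
\]
and it suffices to upper bound $\|y\|^2-\|x\|^2$.

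First I would dispose of the trivial case: if $\|y\|\le\|x\|$ then $f(y)/f(x)\ge 1$, and the filter accepts with probability $1\ge\k$. Otherwise, by the (reverse) triangle inequality and $\|y-x\|\le\delta$,
\[
\|y\|^2-\|x\|^2=(\|y\|-\|x\|)(\|y\|+\|x\|)\le \delta\,(\|x\|+\|y\|).
\]
Since the speedy walk is run on the body $K\subseteq 4\sigma\sqrt n\,B_n$, both $\|x\|$ and $\|y\|$ are at most $4\sigma\sqrt n$, so $\|y\|^2-\|x\|^2\le 8\delta\sigma\sqrt n\le \sigma^2$ using the hypothesis $\delta\le \sigma/(8\sqrt n)$. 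Plugging back in gives $f(y)/f(x)\ge \e{-1/2}\ge \k$ for every $y$ the walk can propose from $x$.

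To finish, note that the acceptance probability of the Metropolis filter at $x$ equals the average of $\min\{1,f(y)/f(x)\}$ over $y$ uniform in $K\cap(x+\delta B_n)$; by the pointwise bound just established, every term in this average is at least $\e{-1/2}>\k$, hence so is the average, which is exactly the claim. (In fact there is slack: the bound $\e{-1/2}$ is already stronger than $\k$, the hypothesis $\sigma^2\le 64n$ is not used in this argument, and one could alternatively bound $\|y\|^2-\|x\|^2\le 2\delta\|x\|+\delta^2$ and absorb $\delta^2\le \sigma^2/(64n)$ separately if desired.)

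There is no substantive obstacle. The only point requiring care is that the estimate $\|x\|,\|y\|\le 4\sigma\sqrt n$ is invoked for \emph{both} the current point and the proposal; this is legitimate precisely because, as the algorithm sets up, the walk (and the target $f$) is taken with respect to the truncated body $K\cap 4\sigma\sqrt n\,B_n$, so every state and every proposal automatically has norm at most $4\sigma\sqrt n$.
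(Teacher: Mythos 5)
Your proof is correct and follows essentially the same route as the paper's: bound $\|y\|^2-\|x\|^2$ by $O(\delta)\cdot\|x\|$ using $\|y-x\|\le\delta$ and $\|x\|,\|y\|\le 4\sigma\sqrt n$, then plug into the Metropolis ratio; your difference-of-squares factorization versus the paper's expansion of $(\|u\|+\delta)^2$ is an immaterial variation, and your observation that the hypothesis $\sigma^2\le 64n$ is not needed applies equally to the paper's own argument.
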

\begin{proof} Assume that $f(x) \le f(u)$ (otherwise the Metropolis filter always accepts). Then, the acceptance probability is
\begin{align*}
\frac{f(x)}{f(u)} &= \e{-\frac{\|x\|^2 - \|u\|^2}{2\sigma^2}} \\
&\ge \e{-\frac{(\|u\|+\delta)^2 - \|u\|^2}{2\sigma^2}} \\
&= \e{-\frac{2\delta\|u\| + \delta^2}{2\sigma^2}} \\
&\ge \e{-\frac{\sigma^2 + \sigma^2/(64n)}{2\sigma^2}} \\
&\ge \frac{1}{e}.
\end{align*}
\end{proof}

\begin{lemma}\label{lem:f-dist}
Let $K\subseteq 4\sigma\sqrt{n} B_n$ be a convex body and let $u,v \in K$ such that $\|u - v\| \le \delta / \sqrt{n}$. If $\delta \le \sigma/(8\sqrt{n})$ and 
\[
\frac{|\ell(u)f(u) - \ell(v)f(v)|}{\max \{\ell(u)f(u), \ell(v)f(v)\}} < \frac{1}{4},
\]
then
\[
\frac{|\ell(u) - \ell(v)|}{\max \{\ell(u), \ell(v)\}} < \frac{1}{3}.
\]
\end{lemma}

\begin{proof}
Assume without loss of generality that $\ell(u)f(u) \ge \ell(v) f(v)$. It follows that
\[
\frac{3 \gamma(u)\ell(u)}{4} < \gamma(v)\ell(v).
\]

Since $\|u-v\| \le \delta/\sqrt{n}$, we have that $0.9 \le f(v)/f(u) \le 1.1$ for $n$ large enough. Thus
\[
\ell(u) < \frac{3}{2} \ell(v). 
\]

By assumption, we have that $\gamma(u)\ell(u) > \gamma(v)\ell(v)$. Thus we also have that
\[
\ell(v) < \frac{3}{2} \ell(u). 
\]

The lemma then follows. 

\end{proof}

The following lemma bounds the overlap for a step of the speedy walk with respect to the speedy walk. We then show that the Gaussian weighting only hurts by a constant factor.

\begin{lemma}\label{lem:overlap}
Let $K$ be a convex set with $S \subseteq K$. Let $\overline{S} = K \backslash S$ and let $P_x^{unif}$ denote the 1-step distribution from $x$ of the speedy walk with respect to the uniform distribution over $K$. Suppose that $d_\ell(u,v)<1/3$ and $\|u-v\| \le \delta/\sqrt{n}$. Then for $u \in S$ and $v \in \overline{S}$,
\[
P_u^{unif}\left(\overline{S}\right) + P_v^{unif}\left(S\right) > \frac{1}{6}.
\]
\end{lemma}

\begin{proof}
Let $B_u = u + \delta B_n$ and let $C = B_u \cap B_v$. 
By Lemma~3.5 from \cite{KLS95}, we know that
\begin{equation}
\vol\left(K \cap C\right) \ge \frac{\vol(\delta B_n)}{e+1} \min\left\{\ell(u), \ell(v)\right\} \label{eq:kls-overlap}.
\end{equation}

We have that
\[
P_u^{unif}(\overline{S}) = \frac{\vol\left(\overline{S} \cap B_u\right)}{\ell(u) \vol(\delta B_n)} \ge \frac{\vol\left(\overline{S} \cap C\right)}{\ell(u) \vol(\delta B_n)}. 
\]

Similarly for $P_v^{unif}(S)$. Assume that $\ell(u) \ge \ell(v)$, which implies $\ell(u) \le 3\ell(v)/2$. Therefore, 
\begin{align*}
P_u^{unif}(\overline{S}) + P_v^{unif}(S)  &\ge \frac{\vol\left(\overline{S} \cap C\right)}{\ell(u) \vol(\delta B_n)} +\frac{\vol\left(S \cap C\right)}{\ell(v) \vol(\delta B_n)} \\
&\ge \frac{2\vol\left(\overline{S} \cap C\right)}{3\ell(v) \vol(\delta B_n)} +\frac{\vol\left(S \cap C\right)}{\ell(v) \vol(\delta B_n)}  \\
&\ge \frac{2\vol\left(K \cap C\right)}{3\ell(v) \vol(\delta B_n)} \\
&\ge \frac{2}{3(e+1)} > \frac{1}{6}.
\end{align*}

\end{proof}

It then follows that the Gaussian filter decrease the overlap by at most a constant factor.

\begin{corollary}\label{cor:overlap}
Let $S, \overline{S}$ be a partition of a convex body $K \subseteq 4 \sigma \sqrt{n} B_n$, and $u \in S, v \in \overline{S}$ be such that $\|u - v\| < \delta/\sqrt{n}$ and $d_h(u,v)<1/4$, where $h(x) = f(x) \ell(x)$. Then,
\[
P_u(\overline{S}) + P_v(S) > \frac{1}{20}. 
\]
\end{corollary}
\begin{proof}
By Lemma~\ref{lem:f-dist}, we know that $d_\ell(u,v)<1/3$. We then apply Lemma~\ref{lem:overlap}, while noting that the Gaussian weighting affects the $1$-step distributions by at most a $1/e$ factor since that is a lower bound on the acceptance probability of the Metropolis filter (Lemma~\ref{lem:filter}). 
\end{proof}

We can now prove the desired lower bound on the conductance of the speedy walk with respect to a Gaussian weighting over a convex set.

\begin{theorem}\label{thm:speedyconductance}
Let $K$ be a convex body such that $B_n \subseteq K \subseteq 4\sigma \sqrt{n}B_n$.
The conductance of the speedy walk applied to $K$ with Gaussian density ${\cal N}(0,\sigma^2 I)$ and  $\delta \le \sigma/8\sqrt{n}$ steps is $\Omega(\frac{\delta}{\sigma\sqrt{n}})$. 
\end{theorem}
\begin{proof}
Let $S \subset K$ be an arbitrary measurable set of $K$ and let $\overline{S} = K \backslash S$. Assume that $\pi(S) \le 1/2$. Consider the following partition of $K$:
\begin{align*}
S_1 &= \left\{x \in S : P_x(\overline{S}) < \frac{1}{20}\right\} \\ \\
S_2 &= \left\{x \in \overline{S} : P_x(S) < \frac{1}{20}\right\} \\ \\
S_3 &= K \backslash S_1 \backslash S_2.
\end{align*}

Let $h(x) = \ell(x)f(x)$. By Corollary~\ref{cor:overlap}, we have that for any $u \in S_1, v \in S_2$, either $\|u-v\| \ge \delta/\sqrt{n}$ or $d_h(u,v) \ge 1/4$. 

We may assume that $\pi(S_1) \ge \pi(S)/2$ and $\pi(S_2) \ge \pi(\overline{S})/2$. If not, we can bound the conductance of $S$ as follows (similarly for $\overline{S}$).
\begin{align*}
\phi(S) &= \frac{\dint_{S} P_x(\overline{S})h(x)\, dx}{\pi(S)} \\
&=\frac{1}{2} \frac{\dint_{S} P_x(\overline{S})h(x)\, dx + \dint_{\overline{S}} P_x(S)h(x)\, dx}{\pi(S)} \\
&\ge \frac{1}{2} \frac{\dint_{S_3} \dfrac{h(x)}{20} \, dx }{\pi(S)} \\
&= \frac{1}{40} \frac{\pi(S_3)}{\pi(S)} \\
&\ge \frac{1}{80}. 
\end{align*}

Now we can apply Theorem~\ref{thm:iso} with 
\[
d = \min\left\{\frac{\delta \ln 2}{\sqrt{n}}, \frac{1}{16\sqrt{n}}\right\}
\]
to the partition $S_1, S_2, S_3$ to get
\[
\pi(S_3) \ge \frac{d}{\sigma} \pi(S_1) \pi(S_2). 
\]

Using the above, we get that 
\begin{align*}
\phi(S) &\ge \frac{1}{40} \frac{\pi(S_3)}{\pi(S_1)} \\
& \ge \frac{d}{40\sigma} \frac{\pi(S_1)\pi(S_2)}{\pi(S_1)} \\
& \ge \frac{d}{160\sigma} \\
&\ge \frac{\delta}{250 \sigma \sqrt{n}},
\end{align*}

which proves the theorem.
\end{proof}

\killtext{

\begin{lemma}\label{lem:overlap}
Let $S, \overline{S}$ be a partition of a convex body $K$, and $u \in S, v\in \overline{S}$ be such that $\|u - v\| < \delta/\sqrt{n}$. Then,
\[
\Pr_u(\overline{S}) + \Pr_v(S) \ge \frac{1}{e(e+1)} \min \{\ell(u), \ell(v)\}.
\]
\end{lemma}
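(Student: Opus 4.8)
The plan is to compare the two one-step rejection-type probabilities $\Pr_u(\overline S)$ and $\Pr_v(S)$ to a common lower bound coming from the overlap of the two balls $u+\delta B_n$ and $v+\delta B_n$. Recall that from point $u$, the ball walk picks $y$ uniformly from $u+\delta B_n$, moves to $y$ if $y\in K$ with Metropolis probability $\min\{1,f(y)/f(u)\}$, so
\[
\Pr_u(\overline S) \ge \frac{1}{\vol(\delta B_n)}\int_{(u+\delta B_n)\cap\overline S\cap K} \min\left\{1,\frac{f(y)}{f(u)}\right\}\, dy,
\]
and symmetrically for $\Pr_v(S)$. The first step is to use the geometric fact that when $\|u-v\|<\delta/\sqrt n$ the symmetric difference of the two balls is a tiny fraction of either ball: specifically $\vol\big((u+\delta B_n)\triangle(v+\delta B_n)\big)$ is at most a small constant (or even $o(1)$) fraction of $\vol(\delta B_n)$, since the ratio behaves like $\|u-v\|\sqrt n/\delta < 1$; actually one gets that $(u+\delta B_n)\cap(v+\delta B_n)$ contains at least a constant fraction — in fact nearly all — of $v+\delta B_n$ intersected with $K$. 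Hence a point $y$ in $(v+\delta B_n)\cap K\cap S$ is, up to a controlled loss, also in $u+\delta B_n$, and vice versa.

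The second step is to handle the Metropolis filter factor $\min\{1,f(y)/f(\cdot)\}$. Here I would invoke Lemma~\ref{lem:filter}-style reasoning: for the Gaussian $f(x)=\e{-\|x\|^2/(2\sigma^2)}$ restricted to $K\subseteq 4\sigma\sqrt n B_n$, moving within a $\delta$-ball changes $f$ by at most a factor $1/e$ (the computation in Lemma~\ref{lem:filter} shows $f(y)/f(x)\ge 1/e$ whenever $\|x-y\|\le\delta$ and $\delta\le\sigma/(8\sqrt n)$). So each integrand $\min\{1,f(y)/f(u)\}\ge 1/e$ on the relevant region. Combining with the local conductance definition $\ell(v)=\vol(K\cap v+\delta B_n)/\vol(\delta B_n)$, the contribution from $v+\delta B_n$ gives
\[
\Pr_v(S)+\Pr_u(\overline S)\ \ge\ \frac{1}{e}\cdot\frac{\vol\big((v+\delta B_n)\cap K\big)}{\vol(\delta B_n)}\cdot(\text{overlap loss factor}),
\]
and the overlap loss factor is bounded below by $1/(e+1)$ (or whatever constant the ball-overlap estimate yields), giving the stated $\frac{1}{e(e+1)}\min\{\ell(u),\ell(v)\}$; the $\min$ appears because we may symmetrize and bound by whichever of $\ell(u),\ell(v)$ is smaller, distributing the overlap region appropriately between the two terms.

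I expect the main obstacle to be the clean quantitative control of the ball overlap together with the interaction of $K$: I need that the part of $(v+\delta B_n)\cap K$ lying in $S$ is mostly contained in $u+\delta B_n$ so that it actually contributes to $\Pr_u(\overline S)$ — wait, one must be careful about which side ($S$ vs $\overline S$) each piece lands on, so the argument should really split $v+\delta B_n\cap K$ into its $S$ and $\overline S$ parts, bound $\Pr_v(S)$ by the $S$-part (points $v$ can escape to) minus overlap error, and $\Pr_u(\overline S)$ by the $\overline S$-part, so that the two together recover the whole of $\ell(v)$ up to the $1/e$ filter factor and the constant overlap correction. Pinning down that the constant works out to exactly $1/(e(e+1))$ rather than something messier is the bookkeeping I would need to do carefully, but it is routine once the overlap estimate $\|u-v\|<\delta/\sqrt n \Rightarrow \vol\big((v+\delta B_n)\setminus(u+\delta B_n)\big)\le \frac{1}{e+1}\vol(\delta B_n)$-type bound is in hand (this follows from a standard estimate on the volume of a spherical cap of a ball of radius $\delta$ cut at distance $\delta-\|u-v\|\ge \delta(1-1/\sqrt n)$ from the center).
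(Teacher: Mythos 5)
Your high-level decomposition is the same as the paper's: the paper's proof of this lemma is a one-liner that cites Lemma 3.6 of \cite{KLS97} for the uniform-step (no filter) version of the statement, and then observes via Lemma~\ref{lem:filter} that the Metropolis filter costs at most a factor $1/e$ in acceptance probability (using $K\subseteq 4\sigma\sqrt{n}B_n$ and $\delta\le\sigma/(8\sqrt{n})$). Your treatment of the filter factor is exactly this and is correct.

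The gap is in the geometric core, which you attempt to reprove rather than cite. Your plan controls the loss by $\vol\bigl((v+\delta B_n)\setminus(u+\delta B_n)\bigr)$, which is at most a constant fraction of $\vol(\delta B_n)$. That only yields an \emph{additive} bound of the form $\Pr_u(\overline{S})+\Pr_v(S)\ge\frac{1}{e}\bigl(\ell(v)-c\bigr)$ for an absolute constant $c$: the discarded cap is a constant fraction of the whole ball, not of $(v+\delta B_n)\cap K$. The lemma asserts the \emph{multiplicative} bound $\frac{1}{e(e+1)}\min\{\ell(u),\ell(v)\}$, and the two differ exactly in the regime $\ell(u),\ell(v)=o(1)$ --- points near sharp corners of $K$ --- which is precisely the regime the speedy-walk machinery exists to handle; an additive bound is vacuous there. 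To get the multiplicative form one must lower-bound $\vol\bigl((u+\delta B_n)\cap(v+\delta B_n)\cap K\bigr)$ against $\min\{\ell(u),\ell(v)\}\vol(\delta B_n)$ using the convexity of $K$ and the fact that $u,v\in K$; this is the actual content of Lemma 3.6 of \cite{KLS97} and is not a two-line ball-overlap estimate. A smaller but real error: the relevant hyperplane is the perpendicular bisector of $[u,v]$, at distance $\|u-v\|/2\le\delta/(2\sqrt{n})$ from each center, so the surviving cap is nearly a half-ball; a cap cut at distance $\delta-\|u-v\|\approx\delta(1-1/\sqrt{n})$ from the center, as you write, has exponentially small volume in $n$, so the estimate as stated points the wrong way.
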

\begin{proof}
This lemma is essentially based on Lemma 3.6 in \cite{KLS97}. There the speedy walk makes a uniform step. Here we apply a Metropolis filter. However, since we restrict to a body of radius $4\sigma\sqrt{n}$ and $\delta \leq \sigma/8\sqrt{n}$, we have that the acceptance probability of the filter is at least $\k$ by Lemma~\ref{lem:filter}.
\end{proof}

\begin{lemma}\label{lem:lulv}
Let $S, \overline{S}$ be a partition of a convex body $K$, and $u \in S, v\in \overline{S}$ be such that $\|u - v\| < \delta/\sqrt{n}$ and $\Pr_u(\overline{S}) \le \ell(u)/20e, \Pr_v(S) \le \ell(v)/20e$. Then, 
\[
\frac{|\ell(u)\gamma(u) - \ell(v)\gamma(v)|}{\max \{\ell(u)\gamma(u), \ell(v)\gamma(v)\}} \ge \frac{1}{4}.
\]
\end{lemma}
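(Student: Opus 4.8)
The plan is to argue by contradiction: suppose the $f$-distance (with $f(x) = \ell(x)\gamma(x)$) between $u$ and $v$ is small, say less than $1/4$, and derive a contradiction with the hypothesis that both one-step escape probabilities $\Pr_u(\overline S)$ and $\Pr_v(S)$ are tiny compared to the local conductances. The intuition is that Lemma~\ref{lem:overlap} already tells us $\Pr_u(\overline S) + \Pr_v(S) \gtrsim \min\{\ell(u),\ell(v)\}$ whenever $u,v$ are within $\delta/\sqrt n$; the present lemma is the sharpening that says escape probabilities can only be genuinely small if the stationary density $\ell\gamma$ itself changes appreciably from $u$ to $v$. So the contrapositive of Lemma~\ref{lem:overlap}-type reasoning, combined with the Metropolis acceptance structure, should force the conclusion.

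Concretely, first I would fix the ball $u + \delta B_n$ and split it according to whether a proposed point $y$ lies in $S$ or $\overline S$, and similarly for $v + \delta B_n$. Because $\|u-v\| < \delta/\sqrt n$ is much smaller than $\delta$, the two balls overlap in all but an $O(1/\sqrt n)$-fraction of their volume, so the relevant ``crossing'' regions for the $u$-walk and the $v$-walk are essentially the same set $T$ of points. For $y \in T$, the move $u \to y$ is accepted with probability $\min\{1, \gamma(y)/\gamma(u)\}$ and contributes to $\Pr_u(\overline S)$; the move $v \to y$ is accepted with probability $\min\{1, \gamma(y)/\gamma(v)\}$ and contributes to $\Pr_v(S)$. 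Normalizing by $\vol(\delta B_n)$, one gets $\Pr_u(\overline S) \approx \frac{1}{\vol(\delta B_n)}\int_T \min\{1,\gamma(y)/\gamma(u)\}\,dy$ and similarly for $v$. The hypotheses say these are at most $\ell(u)/20e$ and $\ell(v)/20e$ respectively, i.e.\ small relative to $\vol(K \cap u+\delta B_n)/\vol(\delta B_n)$ and its $v$-counterpart.

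The key step is then a Gaussian smoothness estimate: on a ball of radius $\delta = O(\sigma/\sqrt n)$ inside $4\sigma\sqrt n B_n$, the density $\gamma$ varies by at most a constant factor (this is exactly the computation behind Lemma~\ref{lem:filter}), so $\min\{1,\gamma(y)/\gamma(u)\} \ge \k$ for all relevant $y$, and likewise $\ell$ cannot change too fast because $K$ is convex. Combining these, small $\Pr_u(\overline S)$ forces the overlap mass of $S$ in $u+\delta B_n$ to be small, i.e.\ $\ell$ and $\gamma$ must be ``mostly on the $S$ side'' near $u$; small $\Pr_v(S)$ forces $\ell\gamma$ to be ``mostly on the $\overline S$ side'' near $v$. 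Since $u + \delta B_n$ and $v + \delta B_n$ nearly coincide, these two conclusions are incompatible unless $\ell(u)\gamma(u)$ and $\ell(v)\gamma(v)$ differ by a definite constant factor — and tracking the constants ($20e$, $\k$, $e(e+1)$ from Lemma~\ref{lem:overlap}, and the geometric overlap loss) yields the bound $1/4$. I expect the main obstacle to be the bookkeeping of these constants together with the asymmetry between $\ell$ (a volume ratio, controlled by convexity) and $\gamma$ (controlled by the Gaussian calculation of Lemma~\ref{lem:filter}); getting a clean $1/4$ rather than some messier constant will require being slightly careful about which estimates are applied to which factor, but no new idea beyond Lemma~\ref{lem:overlap}, Lemma~\ref{lem:filter}, and the near-coincidence of the two $\delta$-balls should be needed.
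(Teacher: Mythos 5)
Your plan is essentially the paper's proof: the paper takes $\ell(u)\ge\ell(v)$ WLOG, combines the lower bound of Lemma~\ref{lem:overlap} with the hypothesized upper bounds to get $\ell(v) \le \frac{e+1}{19-e}\,\ell(u)$, and then multiplies by the factor-$e$ Gaussian smoothness bound (the Lemma~\ref{lem:filter} computation) to conclude $1 - \frac{\ell(v)\gamma(v)}{\ell(u)\gamma(u)} \ge 1 - \frac{e(e+1)}{19-e} \ge \frac14$. The only difference is presentational: the paper argues directly rather than by contradiction and invokes Lemma~\ref{lem:overlap} as a black box, so none of the ball-overlap geometry you sketch needs to be re-derived.
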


\begin{proof}
Assume that $\ell(u) \ge \ell(v)$. Then by Lemma \ref{lem:overlap},
\[
\frac{1}{e(e+1)}\ell(v) \le \Pr_u(\overline{S}) + \Pr_v(S) \le \frac{1}{20e}(\ell(u) + \ell(v))
\]
which implies that 
\[
\ell(v) \le \frac{e+1}{20-(e+1)}\ell(u).
\]
Therefore,
\begin{align*}
\frac{|\ell(u)\gamma(u) - \ell(v)\gamma(v)|}{\ell(u)\gamma(u)} &= 1 - \frac{\ell(v)}{\ell(u)}\frac{\gamma(v)}{\gamma(u)} \\ 
&\ge 1 - \frac{e(e+1)}{20-(e+1)} \\
&\ge \frac{1}{4}.
\end{align*}

\end{proof}

\begin{theorem}\label{thm:speedyconductance}
Let $K$ be a convex body such that $B_n \subseteq K \subseteq 4\sigma \sqrt{n}B_n$.
The conductance of the speedy walk applied to $K$ with Gaussian density ${\cal N}(0,\sigma^2 I)$ and  $\delta \le \sigma/8\sqrt{n}$ steps is $\Omega(\frac{\delta}{\sigma\sqrt{n}})$. 
\end{theorem}
\begin{proof}
Let $S \subset K$ be an arbitrary measurable subset of $K$ and consider the following partition of $K$.
\begin{align*}
&S_1 = \{x \in S \, :\, P_x(\overline{S}) < \frac{\ell(x)}{20e}\}\\
&S_2 = \{x \in \overline{S}\, :\, P_x(S) < \frac{\ell(x)}{20e}\}\\
&S_3 = K\setminus S_1\setminus S_2.
\end{align*}
Let $h(x)=\ell(x)\gamma(x)$.
We claim that for any $u \in S_1, v\in S_2$, we have either $\|u-v\| \ge \delta/\sqrt{n}$ or 
\[
d_h(u,v) = \frac{|h(u)-h(v)|}{\max \{h(u), h(v)\}} \ge \frac{1}{4}.
\]
This follows directly from Lemma \ref{lem:lulv}.

We can also assume that $\pi(S_1) \ge \pi(S)/2$ and $\pi(S_2) \ge \pi(\overline{S})/2$. If not, ergodic flow (probability of going from $S$ to $\overline{S}$) can be bounded as follows.
\begin{eqnarray*}
\Phi(S) &=& \frac{1}{2}(P(S,\overline{S})+P(\overline{S},S))\\ 
&\ge& \frac{1}{2}\int_{S_3}\k\frac{\ell(x)}{20e}\gamma(x)\, dx 
\end{eqnarray*}
Here $\k$ comes from the Metropolis filter (Lemma~\ref{lem:filter}).

Now we can apply Theorem \ref{thm:iso}, with 
\[
d=\min \left\{ \iso \frac{\delta}{\sqrt{n}}, \frac{1}{16\sqrt{n}} \right\}
\]
to the partition $S_1, S_2, S_3$ to get
\begin{align*}
\int_{\R^n} \ell(x)\gamma(x) \, dx \int_{S_3}\ell(x)\gamma(x)\, dx \ge \frac{d}{\sigma}  \int_{S_1} \ell(x) \gamma(x)\, dx \int_{S_2} \ell(x)\gamma(x)\, dx.
\end{align*}
Using this, we get
\[
\Phi(S) \ge  \frac{1}{10^{4}}\frac{\delta}{\sigma\sqrt{n}}\min\{\pi_\ell(S), \pi_\ell(\overline{S})\}
\]
The conductance is thus $\Omega(\delta/\sigma\sqrt{n})$ as claimed.
\end{proof}

}

\subsection{Getting a warm start}\label{sec:warmstart}

The following two lemmas guarantee that the ball walk in the algorithm will always have a \emph{warm start}, i.e.\ the $M$-warmness \eqref{eq:M} is bounded by a constant. The first lemma bounds the warmness under the fixed cooling rate of $1 + 1/n$.

\begin{lemma}\label{lem:sigma-small-warmness}
Let $K\subseteq \R^n$, $\sigma_{i+1}^2 = \sigma_i^2(1+1/n)$, and $f_i(x) = \e{-\|x\|^2/(2\sigma_i^2)}$. Denote $Q_i$ as the associated probability distribution of $f_i$ over $K$. Then, we can bound the warmness between successive
\[
M(Q_i,Q_{i+1}) \leq \sqrt{e}.
\]
\end{lemma}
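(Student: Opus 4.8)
The plan is to bound the ratio $Q_i(S)/Q_{i+1}(S)$ uniformly over all measurable $S \subseteq K$ by controlling the pointwise ratio of the (unnormalized) densities $f_i(x)/f_{i+1}(x)$ together with the ratio of normalizing constants $F(\sigma_{i+1}^2)/F(\sigma_i^2)$. Writing $\rho(x) = f_i(x)/f_{i+1}(x) \cdot F(\sigma_{i+1}^2)/F(\sigma_i^2)$ for the Radon--Nikodym derivative $dQ_i/dQ_{i+1}$, we have $M(Q_i,Q_{i+1}) = \sup_{x \in K} \rho(x)$, so it suffices to bound $\rho$ pointwise.

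First I would compute the pointwise density ratio. Since $f_i(x)/f_{i+1}(x) = \exp\left(\frac{\|x\|^2}{2}\left(\frac{1}{\sigma_{i+1}^2} - \frac{1}{\sigma_i^2}\right)\right)$ and $\sigma_{i+1}^2 = \sigma_i^2(1+1/n) > \sigma_i^2$, the exponent is negative, so $f_i(x)/f_{i+1}(x) \le 1$ for every $x$; in particular this ratio is maximized (equal to $1$) at $x = 0$. Hence $\rho(x) \le F(\sigma_{i+1}^2)/F(\sigma_i^2)$ for all $x$, and the whole problem reduces to bounding the ratio of Gaussian integrals over $K$.

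The key step is then to show $F(\sigma_i^2(1+1/n))/F(\sigma_i^2) \le \sqrt{e}$. The clean way is the substitution $x = \sqrt{1+1/n}\,y$ in the integral defining $F(\sigma_{i+1}^2) = \int_{\R^n} f(\sigma_{i+1}^2, x)\,dx$. This gives $F(\sigma_{i+1}^2) = (1+1/n)^{n/2} \int_{(1+1/n)^{-1/2}K} \e{-\|y\|^2/(2\sigma_i^2)}\,dy$. Since $K$ contains the origin and is convex, the shrunken body $(1+1/n)^{-1/2}K \subseteq K$, so the integral on the right is at most $F(\sigma_i^2)$. Therefore $F(\sigma_{i+1}^2)/F(\sigma_i^2) \le (1+1/n)^{n/2} \le e^{1/2} = \sqrt{e}$, using $1+t \le e^t$. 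Combining with the pointwise bound from the previous paragraph yields $M(Q_i,Q_{i+1}) \le \sqrt{e}$.

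The only subtlety — and the ``main obstacle,'' though it is mild — is making sure the scaling argument is valid: one needs $K$ star-shaped about the center of the initial Gaussian (here the origin, since $B_n \subseteq K$ so $0 \in K$), which convexity guarantees, and one needs the change of variables to be applied on all of $\R^n$ with $f$ vanishing outside $K$, so that the domain of integration transforms as claimed. There is no actual inequality that could fail; the estimate is tight up to the constant. An alternative route avoiding the scaling — bounding $\E(Y^2)/\E(Y)^2$ via Lemma~\ref{lem:variance-bound} with $\alpha = 1/n$ — also works but gives a weaker constant, so the direct scaling argument is preferable.
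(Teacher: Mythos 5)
Your proof is correct, and its first half is exactly the paper's: bound $M(Q_i,Q_{i+1})$ by the essential supremum of the density ratio, observe that $f_i/f_{i+1}\le 1$ with equality at the origin (using $0\in K$), and thereby reduce everything to showing $F(\sigma_{i+1}^2)/F(\sigma_i^2)\le\sqrt{e}$. Where you diverge is in that last step. The paper integrates in polar coordinates, partitions $K$ into infinitesimal cones through the origin, and uses the monotonicity in $r$ of $e^{-a_{i+1}r^2}/e^{-a_i r^2}$ to argue that extending each cone to infinity only increases the ratio, which brings it back to the unrestricted Gaussian ratio $(1+1/n)^{n/2}\le\sqrt{e}$. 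You instead make the substitution $x=\sqrt{1+1/n}\,y$, which converts $F(\sigma_{i+1}^2)$ into $(1+1/n)^{n/2}$ times the $\sigma_i$-integral over the shrunken body $(1+1/n)^{-1/2}K\subseteq K$; the inclusion holds because $K$ is star-shaped about $0$. The two arguments use the same geometric input (star-shapedness about the Gaussian's center) and give the same constant, but yours is shorter and avoids the informal ``infinitesimal cone'' limiting argument; the paper's version has the minor advantage of making explicit where monotonicity enters. You are also right to flag that the lemma's hypothesis ``$K\subseteq\R^n$'' is too weak as literally stated — both proofs need $0\in K$ and star-shapedness, which the algorithm's assumption $B_n\subseteq K$ with $K$ convex supplies. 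No gap.
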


The following lemma bounds the warmness when the cooling schedule begins to accelerate, under the roundness condition.

\begin{lemma}\label{lem:sigma-large-warmness}
Let $K \subseteq C \sqrt{n} \cdot B_n$, $\sigma_{i+1}^2 = \sigma_i^2 (1+\sigma_i^2/(C^2n))$, and $f_i(x)=\e{-\|x\|^2/(2\sigma_i^2)}$. Denote $Q_i$ as the associated probability distribution of $f_i$ over $K$. Then we can bound the warmness between successive phases as 
\[
M(Q_i,Q_{i+1}) \le \sqrt{e}.
\]
\end{lemma}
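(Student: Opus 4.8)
The plan is to reduce the warmness bound to a pointwise comparison of normalized densities. Write $Q_i(S) = \int_S f_i\,dx / F(\sigma_i^2)$ where $F(\sigma^2) = \int_K \e{-\|x\|^2/(2\sigma^2)}\,dx$ and $f_i(x) = \e{-\|x\|^2/(2\sigma_i^2)}$. Then for any $S \subseteq K$,
\[
\frac{Q_i(S)}{Q_{i+1}(S)} = \frac{F(\sigma_{i+1}^2)}{F(\sigma_i^2)} \cdot \frac{\int_S f_i\,dx}{\int_S f_{i+1}\,dx} \le \frac{F(\sigma_{i+1}^2)}{F(\sigma_i^2)} \cdot \sup_{x \in K} \frac{f_i(x)}{f_{i+1}(x)},
\]
so it suffices to bound the product of (a) the ratio of normalizing constants and (b) the worst-case pointwise density ratio. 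This is the standard two-factor decomposition of $M(P,Q)$ when both are restrictions of Gaussians to the same body.

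For factor (b): since $\sigma_{i+1}^2 \ge \sigma_i^2$, the ratio $f_i(x)/f_{i+1}(x) = \e{\|x\|^2(1/(2\sigma_i^2) - 1/(2\sigma_{i+1}^2))}$ is increasing in $\|x\|^2$, hence maximized on $K$ at a point of norm at most $C\sqrt{n}$ (using $K \subseteq C\sqrt{n}B_n$). Plugging $\|x\|^2 \le C^2 n$ and $\sigma_{i+1}^2 = \sigma_i^2(1+\sigma_i^2/(C^2n))$, the exponent is
\[
\frac{C^2 n}{2\sigma_i^2}\Bigl(1 - \frac{1}{1+\sigma_i^2/(C^2n)}\Bigr) = \frac{C^2 n}{2\sigma_i^2}\cdot\frac{\sigma_i^2/(C^2n)}{1+\sigma_i^2/(C^2n)} = \frac{1}{2(1+\sigma_i^2/(C^2n))} \le \frac12,
\]
so factor (b) is at most $\sqrt{e}$. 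For factor (a): $F$ is monotone nondecreasing in $\sigma^2$ (larger variance spreads mass but $\e{-\|x\|^2/(2\sigma^2)}$ increases pointwise on $K$), so $F(\sigma_{i+1}^2)/F(\sigma_i^2) \ge 1$ — but we need an \emph{upper} bound here. The clean way is to bound it by $1$ from the other direction is impossible; instead observe $f_{i+1}(x) \ge f_i(x)$ pointwise on $K$, hence $F(\sigma_{i+1}^2) \ge F(\sigma_i^2)$, which makes factor (a) at least $1$ — the wrong direction. So the correct grouping is to bound $F(\sigma_{i+1}^2)/F(\sigma_i^2)$ by noting $f_{i+1}(x)/f_i(x) = \e{\|x\|^2(1/(2\sigma_i^2)-1/(2\sigma_{i+1}^2))} \le \e{1/2}$ on $K$ by the identical computation above, whence $F(\sigma_{i+1}^2) \le \e{1/2} F(\sigma_i^2)$. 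That gives factor (a) $\le \sqrt{e}$ as well — but then the product is $e$, not $\sqrt{e}$.

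To get the sharper constant $\sqrt{e}$, I would not split symmetrically but instead bound the two factors against each other: at the maximizing point $x^*$ for $f_i/f_{i+1}$, the contribution is largest, but the normalizing ratio $F(\sigma_{i+1}^2)/F(\sigma_i^2)$ is a \emph{weighted average} of $f_{i+1}(x)/f_i(x)$ over $x \in K$, and since this pointwise ratio is always $\ge 1$, we have $F(\sigma_{i+1}^2)/F(\sigma_i^2) \ge 1$; combined with $F(\sigma_{i+1}^2)/F(\sigma_i^2) = \E_{Q_i}[f_{i+1}/f_i]$, and the fact that on the support this ratio ranges in $[1, \e{1/2}]$, we get $Q_i(S)/Q_{i+1}(S) = \E_{Q_i}[f_{i+1}/f_i] \cdot \int_S f_i / \int_S f_{i+1}$. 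The cleanest finish: for any $x$, $f_i(x)/f_{i+1}(x) \le 1$ and $F(\sigma_{i+1}^2)/F(\sigma_i^2) = \E_{Q_i}[f_{i+1}/f_i] \le \e{1/2}$, giving directly $Q_i(S)/Q_{i+1}(S) \le \e{1/2}\cdot 1 = \sqrt e$ once we use $\int_S f_i \le \int_S f_{i+1}$. The main obstacle is precisely this bookkeeping — making sure the two uses of the pointwise bound $\e{\|x\|^2(1/(2\sigma_i^2)-1/(2\sigma_{i+1}^2))} \le \e{1/2}$ are allocated so that only \emph{one} factor of $\e{1/2}$ appears, which works because one of the two ratios ($\int_S f_i/\int_S f_{i+1}$) is bounded by $1$ for free while the other ($F(\sigma_{i+1}^2)/F(\sigma_i^2)$) needs the geometry of $K$. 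The key quantitative input is the algebraic identity $\frac{C^2n}{2\sigma_i^2}(1 - (1+\sigma_i^2/(C^2n))^{-1}) \le \tfrac12$, which holds for all $\sigma_i > 0$ and is where the roundness assumption $K \subseteq C\sqrt n B_n$ enters.
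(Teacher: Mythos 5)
Your final argument is correct and is essentially the paper's own proof: bound $\sup_S \int_S f_i/\int_S f_{i+1}$ by $1$ (since $f_i \le f_{i+1}$ pointwise) and bound $F(\sigma_{i+1}^2)/F(\sigma_i^2)$ by $\sup_{x\in K} f_{i+1}(x)/f_i(x) \le \e{1/2}$ using $\|x\|^2 \le C^2 n$, so only one factor of $\sqrt{e}$ appears. (The intermediate detour, including the sign slip claiming $f_i/f_{i+1}$ is increasing in $\|x\|^2$, is superseded by your correct ``cleanest finish.'')
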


\begin{proof}(of Lemma~\ref{lem:sigma-small-warmness})
Let
\[
A = \frac{\int_K e^{-a_{i+1}\|x\|^2}dx}{\int_K e^{-a_i \|x\|^2}dx}.
\]
Then, 
\begin{align*}
M(Q_i,Q_{i+1}) &= \sup_{S \subseteq K} \frac{Q_i(S)}{Q_{i+1}(S)} \\
&\leq \sup_{x \in K} \frac{Q_i(x)}{Q_{i+1}(x)} \\
&= \sup_{x \in K} A \frac{e^{-a_i \|x\|^2}}{e^{-a_{i+1} \|x\|^2}} \\
&= A\cdot \sup_{x \in K} e^{-a_i\|x\|^2/n}\\
&=A,
\end{align*}
where the last line follows from the fact that $0 \in K$.

We will now bound $A$. First, we extend $A$ to be over all $\R^n$ instead of $K$, and then argue that it can only decrease when restricted to $K$.
\begin{align*}
\frac{\int_{\R^n} e^{-a_{i+1}\|x\|^2}dx}{\int_{\R^n}e^{-a_i \|x\|^2}dx} &=
 \frac{(a_{i}/\pi)^{n/2}}{(a_{i+1}/\pi)^{n/2}}\frac{(a_{i+1}/\pi)^{n/2}}{(a_{i}/\pi)^{n/2}}\frac{\int_{\R^n} e^{-a_{i+1}\|x\|^2}dx}{\int_{\R^n}e^{-a_i \|x\|^2}dx}\\
&=\frac{a_i^{n/2}}{a_{i+1}^{n/2}} \\
&= \left(\frac{1}{1-1/n}\right)^{n/2} \\
&\leq \sqrt{e}.
\end{align*}

Let $\mu_K(r)$ be the proportion of the sphere of radius $r$ centered at $0$ that is contained in $K$. Note that since $K$ is a convex body that contains $0$, that $r_1 > r_2 \Rightarrow \mu_{K}(r_1) \leq \mu_{K}(r_2)$. Then,
\[
A = \frac{\int_0^\infty r^{n-1} e^{-a_{i+1} r^2} \mu_K(r) dr}{\int_0^\infty r^{n-1} e^{-a_i r^2} \mu_K(r) dr}.
\]

Note $(r^{n-1}e^{-a_{i+1}r^2})/(r^{n-1}e^{-a_i r^2})$ is a monotonically increasing function in $r$. Since $K$ is a convex body containing $0$, we can partition $K$ into infinitesimally small cones centered at $0$. Consider an arbitrary cone $C$. $\mu_C(r)$ is $1$ for $r \in [0,r']$ and then $0$ for $r \in (r',\infty)$ since $K$ is convex. Since $(r^{n-1}e^{-a_{i+1}r^2})/(r^{n-1}e^{-a_i r^2})$ is monotonically increasing, the integral over the cone only gets larger by extending $\mu_C(r)$ to be $1$ for $r \in [0, \infty)$. Therefore
\begin{align*}
\frac{\int_0^\infty r^{n-1} e^{-a_{i+1} r^2} \mu_C(r) dr}{\int_0^\infty r^{n-1} e^{-a_i r^2} \mu_C(r) dr} \leq \frac{\int_0^\infty r^{n-1} e^{-a_{i+1} r^2}dr}{\int_0^\infty r^{n-1} e^{-a_i r^2}dr} = \sqrt{e}.
\end{align*}
Since $C$ was an arbitrary cone from a partition of $A$, we have that $A \leq \sqrt{e}$.
\end{proof}

\begin{proof}(of Lemma \ref{lem:sigma-large-warmness}).
Note that
\begin{align*}
f_{i+1}(x) &= \e{-\frac{\|x\|^2}{2\sigma_{i+1}^2}} \\
&= \e{-\frac{\|x\|^2}{2\sigma_i^2(1+\sigma_i^2/(C^2n))}} \\
&= \e{-\frac{\|x\|^2}{2\sigma_i^2}\cdot \left(1-\frac{\sigma_i^2/(C^2n)}{1+\sigma_i^2/(C^2n)}\right)} \\
&= f_i(x) \cdot \e{\frac{\|x\|^2}{2C^2n} \cdot \frac{1}{1+\sigma_i^2/(C^2n)}}\\
&\le f_i(x) \cdot \e{\frac{\|x\|^2}{2C^2n}}.
\end{align*}

We have that
\begin{align*}
M(Q_i,Q_{i+1}) &= \sup_{S \subseteq K} \frac{Q_i(S)}{Q_{i+1}(S)} \\ \\
&\le \frac{\int_K f_{i+1}(x)\, dx}{\int_K f_i(x)\, dx} \cdot \sup_{x \in K} \frac{f_i(x)}{f_{i+1}(x)} \\ \\
&= \frac{\int_K f_{i+1}(x)\, dx}{\int_K f_i(x)\, dx}  \cdot \sup_{x \in K} \left(\e{-\frac{\|x\|^2}{2C^2n} \cdot \frac{1}{1+\sigma_i^2/(C^2n)}}\right) \\ \\
&= \frac{\int_K f_{i+1}(x)\, dx}{\int_K f_i(x)\, dx} \\ \\
&\le \frac{\int_K f_i(x)\e{\|x\|^2/(2C^2n)}\, dx}{\int_K f_i(x)\, dx} \\ \\
&\le \sup_{x \in K} \left(\e{\frac{\|x\|^2}{2C^2n}}\right) \\ \\
&\le \sqrt{e}
\end{align*}
since $\|x\| \le C \sqrt{n}$.
\end{proof}

\subsection{Bounding wasted steps}

The speedy walk is defined as the \emph{proper} steps of the ball walk, where the point the ball walk attempts to visit is contained in $K$. For convenience, we restate the definition of the speedy walk from earlier (Figure~\ref{fig:speedy-walk}).

\begin{figure}[h!]
\fbox{\parbox{4.0in}{
{\bf Speedy Walk($\delta, f$)}\\
At current point $x \in K$:
\begin{enumerate}
\item Pick random point $y$ from $K \cap (x + \delta B_n)$.
\item Go to $y$ with probability $\min\{1,f(y)/f(x)\}$.
\end{enumerate}
}}
\caption{The Speedy walk with a Metropolis filter}\label{fig:speedy-walk}
\end{figure}

To prove convergence of the ball with a Metropolis filter, we prove convergence of the speedy walk, then bound the number of ``wasted" steps. Note that the speedy walk cannot be implemented as described in Figure~\ref{fig:speedy-walk}, but is an analysis tool to prove the mixing time of the ball walk.

Next, we bound the average number of wasted steps of the ball walk, i.e., when the ball walk tries to visit a point not in $K$. The average local conductance of the ball walk is defined as 
\[
\lambda(f) = \frac{\int_K \ell(x)f(x)\, dx}{\int_K f(x)\, dx}.
\]

We say that a density function $f: \R^n \rightarrow \R_+$ is $a$-\emph{rounded} if any level set $L$ contains a ball of radius $a \cdot \mu_f(L)$. We now show that the average local conductance is large, i.e.\ at least a constant.

\begin{lemma}\label{lem:lambda-bound}
For any $a$-rounded logconcave density function $f$ in $\R^n$,
\[
\lambda(f) \ge 1 - 32 \frac{\delta^{1/2}n^{1/4}}{a^{1/2}}.
\]
\end{lemma}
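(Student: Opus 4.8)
The plan is to show that the local conductance $\ell(x)$ is close to $1$ except for points very near the boundary of the support of $f$, and that under the $a$-roundedness assumption the $f$-measure of this thin boundary shell is small. First I would recall that $\ell(x) = \vol(K \cap (x+\delta B_n))/\vol(\delta B_n) \ge 1 - (\text{something})$ whenever $x$ is at distance at least $\tau$ from the boundary, and more usefully, that for a convex body, if $x$ is at distance $s$ into the body then $\ell(x) \ge 1 - c\sqrt{n}\, s/\delta$ or a similar linear-in-$s/\delta$ bound; this is the standard estimate (e.g.\ from \cite{KLS97,LV06}) that a ball of radius $\delta$ centered $s$ deep in a halfspace has at least a $1 - O(\sqrt{n}s/\delta)$ fraction inside. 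Writing $K_t = \{x \in K : d(x,\partial K) \ge t\}$ for the inner parallel body, this gives $\lambda(f) = \E_{\pi_f}[\ell(x)] \ge 1 - O(\sqrt{n}/\delta)\cdot \E_{\pi_f}[\min\{t^*(x),\delta\}]$ up to handling the points within $\delta$ of the boundary separately (where we just use $\ell(x)\ge 0$ and bound their measure).

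The second ingredient is to bound $\pi_f(K \setminus K_t)$, the $f$-measure of the shell of width $t$. Here is where $a$-roundedness enters: for a logconcave $f$, consider a level set $L = \{f \ge \lambda\}$, which is convex and (by hypothesis) contains a ball of radius $a\,\mu_f(L)$. A convex body containing a ball of radius $\rho$ has the property that its inner parallel body at depth $t$ has volume at least $(1 - t/\rho)^n$ times its own volume, hence $\vol(L \setminus L_t) \le (1-(1-t/\rho)^n)\vol(L) \le nt/\rho \cdot \vol(L)$. Combining this over level sets (using the layer-cake / coarea representation of $\pi_f$ and the fact that $K \setminus K_t \subseteq \bigcup_\lambda (L_\lambda \setminus (L_\lambda)_t)$ appropriately), one gets a bound of the form $\pi_f(K \setminus K_t) \le O(nt/(a\cdot(\text{typical }\mu_f)))$. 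The cleanest route is: since every level set of measure $p$ contains a ball of radius $ap$, the measure in the width-$t$ shell is at most $\int_0^1 \frac{nt}{a p}\, \mathbf{1}[ap \ge t]\, dp + (\text{contribution of tiny level sets})$, which after optimizing yields $\pi_f(K\setminus K_t) = O(\sqrt{nt/a})$ type behavior — and this square-root, matched against the $\sqrt{n}/\delta$ factor from the conductance estimate, is exactly what produces the $\delta^{1/2}n^{1/4}/a^{1/2}$ in the statement.

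The final step is to optimize the split point $t$ (equivalently, to just take $t = \delta$ in the shell bound and combine): points in $K_\delta$ contribute at least $\ell(x) \ge 1 - O(\sqrt{n}/\delta)\cdot\delta = 1 - O(\sqrt n)$ — no, that is too weak, so instead one must keep the dependence and integrate: $\lambda(f) \ge 1 - O(\sqrt{n}/\delta)\int_0^\delta \pi_f(K \setminus K_s)\,ds - \pi_f(K\setminus K_\delta)$, and plugging $\pi_f(K\setminus K_s) = O(\sqrt{ns/a})$ gives $\int_0^\delta O(\sqrt{ns/a})\,ds = O(\sqrt{n/a}\,\delta^{3/2})$, so the middle term is $O(\sqrt{n}/\delta \cdot \sqrt{n/a}\,\delta^{3/2}) = O(n\sqrt{\delta/a})$ — hmm, that is $n$ not $n^{1/4}$, so the shell bound I want must actually be $\pi_f(K\setminus K_s) = O(\sqrt{s/a})$ without the extra $\sqrt n$ (the $n$ from $(1-t/\rho)^n$ gets absorbed because the relevant level sets have radius $\rho = ap$ with $p$ bounded below, or because one uses a sharper one-dimensional localization bound rather than the crude volume estimate). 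I expect \textbf{the main obstacle to be precisely this shell-measure estimate}: getting the $a$-roundedness to yield a bound on $\pi_f(K \setminus K_s)$ with the right powers of $s$, $a$, and $n$ so that the final combination collapses to $32\,\delta^{1/2}n^{1/4}/a^{1/2}$. The right tool is likely a one-dimensional reduction (localization) bounding, for each needle, the $h$-mass within distance $s$ of an endpoint in terms of $s$ and the length of the needle's support, then relating that support length to $a\cdot\mu_f$ via roundedness; the logconcavity makes the one-dimensional shell mass at most linear in $s/(\text{support length})$, and averaging over needles with the roundedness lower bound on support length, together with a Cauchy–Schwarz / AM–GM split between "deep" and "shallow" contributions, is what manufactures the square roots.
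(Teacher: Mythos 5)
Your proposal does not reach a proof, and the step you yourself flag as the obstacle is genuinely where it breaks. The shell estimate you end up needing, $\pi_f(K\setminus K_s)=O(\sqrt{s/a})$ with no dimension dependence, is false: the uniform density on a ball of radius $a$ is $a$-rounded, yet the mass within distance $s=a/n$ of its boundary is $1-(1-1/n)^n\approx 1-1/e$, a constant, while $\sqrt{s/a}=n^{-1/2}\to 0$. With the correct shell bound $\Theta(\min\{1,ns/a\})$ your own computation yields $\lambda(f)\ge 1-O(n\sqrt{\delta/a})$, which is vacuous at the relevant step size $\delta\approx a/\sqrt{n}$. So the decomposition into ``deep points with $\ell\approx 1$'' plus ``a thin shell of small measure,'' as you execute it, does not produce the $n^{1/4}$, and the localization/Cauchy--Schwarz repair you gesture at is left entirely unspecified.

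The paper proves the lemma by a completely different, two-line reduction. Define the smoothed function $\hat f(x)=\min_D \vol(D)^{-1}\int_{x+D}f(y)\,dy$, where the minimum is over convex subsets $D$ of $\delta B_n$ of half its volume. Lemma~6.3 of \cite{Lovasz2007} already states that for an $a$-rounded logconcave density $\int \hat f\ge 1-32\,\delta^{1/2}n^{1/4}/a^{1/2}$; this is where the entire quantitative content, including the $n^{1/4}$, resides. The present lemma then follows from the pointwise inequality $\ell(x)f(x)\ge\hat f(x)$, obtained by comparing $f(x)\,\vol(K\cap(x+\delta B_n))/\vol(\delta B_n)$ with the average of $f$ over the portion of the ball where $f(y)\le f(x)$. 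If you want to avoid citing that lemma you would have to reprove it, which is a substantially harder task than anything in your sketch; as written, your argument is missing its central estimate.
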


\begin{proof}
Define $\hat{f}$ as the following smoothened version of $f$, obtained by convolving $f$ with a ball of radius $\delta$. 
Let $D$ be a convex subset of $\delta B_n$ of half its volume.
\[
\hat{f}(x) = \min_{D} \frac{\int_{y \in x+D} f(y)\, dy}{\vol(D)}.
\]
Now Lemma 6.3 from \cite{Lovasz2007} shows that
\[
\int_K \hat{f}(x)\, dx \ge 1 - 32 \frac{\delta^{1/2}n^{1/4}}{a^{1/2}}.
\]
To complete the proof, we observe that for any point $x$,
\[
\ell(x)f(x) \ge \hat{f}(x).
\]
To see this, note that 
\begin{eqnarray*}
\ell(x)f(x) &=& f(x) \frac{\int_{x+\delta B_n} 1\, dy}{\vol(\delta B_n)}\\
&\ge& \frac{\int_{y \in x+\delta B_n: f(y) \le f(x)} f(y)\, dy}{\int_{y \in x+\delta B_n:f(y) \le f(x)} 1 \, dy}\\
&\ge& \hat{f}(x).
\end{eqnarray*}
\end{proof}

\begin{lemma}\label{lem:sigma-roundness}
The Gaussian $\mathcal{N}(0,\sigma^2 I)$ restricted to $K$ containing a unit ball centered at $0$ is $\min\{\sigma,1\}$-rounded.
\end{lemma}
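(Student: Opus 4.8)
The plan is to show directly that every level set of the restricted Gaussian density contains a ball whose radius is at least $\min\{\sigma,1\}$ times the measure of that level set. Let $h(x) = \gamma_{\sigma}(x)\mathbf{1}_K(x)$ be the (unnormalized) density of $\mathcal{N}(0,\sigma^2 I)$ restricted to $K$, where $\gamma_{\sigma}(x) \propto \e{-\|x\|^2/(2\sigma^2)}$. A superlevel set has the form $L = \{x \in K : \|x\| \le r\} = K \cap rB_n$ for some radius $r$, since on $K$ the density is a decreasing function of $\|x\|$; call its normalized measure $\mu_h(L) = \pi(L) =: p$. I want to exhibit a ball of radius $\min\{\sigma,1\}\cdot p$ inside $K \cap rB_n$.

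First I would dispose of the regime where $r$ is not too small: since $B_n \subseteq K$, the set $K \cap rB_n$ always contains $B_n \cap rB_n = \min\{r,1\}B_n$, a ball of radius $\min\{r,1\}$ centered at $0$. So it suffices to show $r \ge \min\{\sigma,1\}\cdot p$, or rather $\min\{r,1\} \ge \min\{\sigma,1\}\cdot p$. Since $p \le 1$, if $r \ge \min\{\sigma,1\}$ we are immediately done (as then $\min\{r,1\} \ge \min\{\sigma,1\} \ge \min\{\sigma,1\}p$). The remaining case is $r < \min\{\sigma,1\} \le 1$, where $\min\{r,1\} = r$ and I need $r \ge \min\{\sigma,1\}\cdot p$, equivalently $p \le r/\min\{\sigma,1\}$. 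Here I bound $p = \pi(K\cap rB_n) \le \pi(rB_n)/\pi(K)$... more directly, $p = \pi(K \cap rB_n) \le \pi_0(rB_n)$ where $\pi_0$ is the Gaussian measure conditioned on... cleanest: $p = \frac{\int_{K\cap rB_n} \e{-\|x\|^2/(2\sigma^2)}dx}{\int_K \e{-\|x\|^2/(2\sigma^2)}dx} \le \frac{\int_{rB_n}\e{-\|x\|^2/(2\sigma^2)}dx}{\int_{B_n}\e{-\|x\|^2/(2\sigma^2)}dx}$, using $B_n \subseteq K$ in the denominator. In the regime $r < \sigma$ (so $r/\sigma < 1$), both integrands are within a constant factor of their values near $0$, and a direct comparison of these two Gaussian integrals over $rB_n$ and $B_n$ gives $p \le (Cr/\min\{\sigma,1\})^{?}$ — in fact the ratio is at most $(r/1)^n$ up to constants when $\sigma \ge 1$, and at most $(r/\sigma)^n$-ish when $\sigma < 1$; either way $p \le r/\min\{\sigma,1\}$ for $n \ge 1$ since that ratio is far below $1$. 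The worst constant needs checking, but the exponential gap in $n$ gives enormous slack.

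I expect the main obstacle to be getting the constant exactly right (the statement has no slack constant — it claims $\min\{\sigma,1\}$-roundedness with constant $1$), so I would need the Gaussian-integral comparison in the last paragraph to be tight enough. The key estimate is: for $t \le 1$, $\int_{tB_n}\e{-\|x\|^2/(2s^2)}dx \,\big/\, \int_{B_n}\e{-\|x\|^2/(2s^2)}dx \le t$, where $s = \sigma/\min\{\sigma,1\} = \max\{\sigma,1\} \ge 1$ after rescaling $r = t\cdot\min\{\sigma,1\}$. Writing both as radial integrals $\int_0^{t} \rho^{n-1}\e{-\rho^2/(2s^2)}d\rho$ versus $\int_0^1 \rho^{n-1}\e{-\rho^2/(2s^2)}d\rho$, the ratio is at most $t^n \le t$ because on $[0,1]$ the extra factor $\e{-\rho^2/(2s^2)} \in [\e{-1/2},1]$ only helps — more carefully, $\int_0^t \le \int_0^t \rho^{n-1}d\rho = t^n/n$ while $\int_0^1 \ge \e{-1/(2s^2)}\int_0^1 \rho^{n-1}d\rho \ge \e{-1/2}/n$, giving ratio $\le e^{1/2}t^n \le t$ once $t^{n-1} \le e^{-1/2}$, which holds for $n\ge 2$ and any $t$ small; the remaining corner cases ($n=1$, or $t$ close to $1$) fall into the already-handled regime $r \ge \min\{\sigma,1\}$. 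Assembling: in every case $K \cap rB_n \supseteq \min\{r,1\}B_n$ contains a ball of radius $\ge \min\{\sigma,1\}\cdot\mu_h(K\cap rB_n)$, which is the claim.
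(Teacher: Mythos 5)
Your overall strategy is the same as the paper's: level sets are $K\cap rB_n$, they contain the ball $rB_n$ for $r\le 1$ because $B_n\subseteq K$, the case $r\ge\min\{\sigma,1\}$ is trivial, and the whole content is the radial estimate $\mu(K\cap rB_n)\le r/\min\{\sigma,1\}$ for $r<\min\{\sigma,1\}$. However, your final quantitative step has a genuine gap. After rescaling you bound the ratio by $e^{1/2}t^n$ (numerator $\le t^n/n$, denominator $\ge e^{-1/2}/n$), and $e^{1/2}t^n\le t$ requires $t^{n-1}\le e^{-1/2}$, i.e.\ $t\le e^{-1/(2(n-1))}$. The uncovered range $t\in\bigl(e^{-1/(2(n-1))},1\bigr)$ corresponds to $r$ slightly below $\min\{\sigma,1\}$, which is \emph{not} the ``already-handled regime $r\ge\min\{\sigma,1\}$'' as you claim; for large $n$ this window shrinks but never disappears, and your two separate pointwise bounds are essentially tight there (the ratio really is about $e^{1/2}t^n>t$ for $t$ near $1$ under those estimates), so no amount of constant-chasing along these lines closes it. (Also, dismissing $n=1$ this way is wrong --- in fact the stated constant fails in one dimension, e.g.\ $K=[-1,1]$, $\sigma=1$, $r=1/2$; the lemma implicitly needs $n\ge 2$.)

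The fix is exactly the paper's argument, and it is shorter than your estimates: the radial integrand $w(\rho)=\rho^{n-1}e^{-\rho^2/(2\sigma^2)}$ is \emph{increasing} on the relevant range $[0,\min\{\sigma,1\}]$ when $n\ge 2$ (equivalently, the cumulative mass function $g$ is convex there, which is what the paper verifies via $g''\ge 0$). For an increasing nonnegative integrand, the average of $w$ over $[0,t]$ is at most its average over $[0,\min\{\sigma,1\}]$, which gives
\[
\int_0^t w(\rho)\,d\rho \;\le\; \frac{t}{\min\{\sigma,1\}}\int_0^{\min\{\sigma,1\}} w(\rho)\,d\rho
\]
with no constant lost, valid for all $t\le\min\{\sigma,1\}$ including $t$ arbitrarily close to the endpoint. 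Substituting this exact comparison for your $t^n/n$ versus $e^{-1/2}/n$ bounds repairs the proof; everything else in your write-up is sound.
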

\begin{proof}
The level sets of the distribution are balls restricted to $K$. For the distribution to be $\min\{\sigma,1\}$-rounded, we need that a level set of measure $k$ contains a ball of radius $k \cdot \min\{\sigma,1\}$. Consider the following function of $t$, which is an upper bound on the measure of the ball of radius $t \le \min\{\sigma, 1\}$, since the unit ball is contained in $K$:
\[
g(t) = \frac{\int_0^t x^{n-1} \e{-\frac{x^2}{2\sigma^2}}\, dx}{\int_0^{\min\{\sigma,1\}} x^{n-1} \e{-\frac{x^2}{2\sigma^2}}\, dx}.
\]

Consider the second derivative of $g$: 

\[
g''(t) = \left((n-1) - \frac{t^2}{\sigma^2}\right) \cdot \frac{t^{n-2} \e{-\frac{t^2}{2\sigma^2}}}{\int_0^{\min\{\sigma,1\}} x^{n-1} \e{-\frac{x^2}{2\sigma^2}}\, dx}.
\]

For $g''(t)$ to be nonnegative, we need $\sigma^2(n-1) - t^2 \ge 0$, which it is for $n \ge 2, t \in [0,\min\{\sigma,1\}]$. Since $g(0) = 0$, $g(\min\{\sigma,1\}) = 1$, and the second derivative is nonnegative, we then have that $g(t\min\{\sigma,1\}) \le t$ for $t \in [0,1]$, which proves the lemma.

\end{proof}

We now show that for an appropriate selection of ball radius, the ball walk has large average local conductance.

\begin{lemma}
If $\delta \le \min\{\sigma,1\}/(4096\sqrt{n})$, then the average local conductance, $\lambda(f)$, for the density function $f$ proportional to the Gaussian $\mathcal{N}(0,\sigma^2I_n)$ restricted to $K$ containing the unit ball, is at least $1/2$.
\end{lemma}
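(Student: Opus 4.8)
The plan is to combine the two lemmas immediately preceding this statement. Lemma~\ref{lem:lambda-bound} gives, for any $a$-rounded logconcave density $f$, the bound $\lambda(f) \ge 1 - 32\,\delta^{1/2}n^{1/4}/a^{1/2}$, and Lemma~\ref{lem:sigma-roundness} tells us that the target density here --- the Gaussian $\mathcal{N}(0,\sigma^2 I)$ restricted to a convex set $K$ containing the unit ball centered at $0$ --- is $a$-rounded with $a = \min\{\sigma,1\}$. So the whole proof is a one-line substitution plus an arithmetic check.

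First I would note that the restricted Gaussian density $f$ is logconcave (it is a Gaussian times the indicator of a convex set, both logconcave), so Lemma~\ref{lem:lambda-bound} applies. Then I would set $a = \min\{\sigma,1\}$ via Lemma~\ref{lem:sigma-roundness} and plug the hypothesis $\delta \le \min\{\sigma,1\}/(4096\sqrt{n})$ into the error term. Concretely,
\[
32\,\frac{\delta^{1/2}n^{1/4}}{a^{1/2}} \le 32 \cdot \frac{\bigl(\min\{\sigma,1\}/(4096\sqrt{n})\bigr)^{1/2} n^{1/4}}{\min\{\sigma,1\}^{1/2}} = \frac{32}{\sqrt{4096}} = \frac{32}{64} = \frac{1}{2},
\]
so $\lambda(f) \ge 1 - \tfrac12 = \tfrac12$, which is exactly the claim. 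The choice of the constant $4096$ in the ball-radius bound is evidently reverse-engineered precisely so that $32/\sqrt{4096} = 1/2$ comes out cleanly.

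There is no real obstacle here; the statement is a corollary assembled from the two preceding lemmas, and the only thing to be careful about is that the powers of $\delta$, $n$, and $a$ line up so that the $n^{1/4}$ cancels and the constant works out. If one wanted to be slightly more careful about the $n\ge 2$ caveat buried in Lemma~\ref{lem:sigma-roundness}, I would simply note that the $n=1$ case is not relevant to the algorithm (or handle it by direct inspection), but in the regime of interest ($n$ large) this is immaterial.
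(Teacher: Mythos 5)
Your proposal is correct and is exactly the paper's proof: apply Lemma~\ref{lem:lambda-bound} with $a=\min\{\sigma,1\}$ supplied by Lemma~\ref{lem:sigma-roundness}, and the substitution $\delta\le\min\{\sigma,1\}/(4096\sqrt{n})$ makes the error term $32/\sqrt{4096}=1/2$. Nothing further is needed.
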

\begin{proof}
Using Lemma~\ref{lem:lambda-bound} and Lemma~\ref{lem:sigma-roundness}, we have that 
\[
\lambda(f) \ge 1 - 32 \frac{\min\{\sigma^{1/2},1\} n^{1/4}}{64n^{1/4}\min\{\sigma^{1/2},1\}} = \frac{1}{2}.
\]
\end{proof}

The following lemma is shown in \cite{CV2014}.

\begin{lemma}\label{lem:speedy-to-ball}
If the average local conductance is at least $\lambda$, $M(Q_0,Q) \le M$, and the speedy walk takes $t$ steps, then the expected number of steps of the corresponding ball walk is at most $Mt/\lambda$.
\end{lemma}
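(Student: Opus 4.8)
The plan is to view the ball walk as a random interleaving of \emph{proper} steps — which together constitute a trajectory of the speedy walk — and \emph{wasted} steps, where the proposed point $y$ lies outside $K$ (equivalently $f(y)=0$) and the walk stays put. First I would observe that, conditioned on being at a point $x$, a single step of the ball walk is wasted with probability exactly $1-\ell(x)$, and is a genuine speedy-walk step with probability $\ell(x)$ (after which the Metropolis filter is applied to $f$, exactly as in the speedy walk). So from any point $x$, the number of wasted steps before the next proper step is geometric with mean $(1-\ell(x))/\ell(x) \le 1/\ell(x)$, and hence the expected total number of ball-walk steps needed to realize $t$ proper (speedy) steps is $\sum$ of these geometric overheads along the speedy trajectory, i.e.\ $\E\left[\sum_{j=0}^{t-1} 1/\ell(X_j)\right]$ where $X_0,X_1,\dots$ is the speedy walk.

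The core of the argument is then to bound this expectation. The speedy walk started from $Q_0$ and run for $t$ steps visits a sequence of points whose marginal distributions are each $M$-warm with respect to the speedy-walk stationary distribution $\pi_\ell$ (density proportional to $\ell(x)f(x)$, by Lemma~\ref{lem:speedystationary}); indeed $M$-warmness of $Q_0$ is preserved under the Markov operator, so every $X_j$ has law $Q_j$ with $M(Q_j,\pi_\ell)\le M$. Therefore
\[
\E\left[\frac{1}{\ell(X_j)}\right] = \int_K \frac{1}{\ell(x)}\,dQ_j(x) \le M \int_K \frac{1}{\ell(x)}\,d\pi_\ell(x) = M \cdot \frac{\int_K \ell(x) f(x)\cdot \ell(x)^{-1}\,dx}{\int_K \ell(x)f(x)\,dx} = M\cdot\frac{\int_K f(x)\,dx}{\int_K \ell(x)f(x)\,dx} = \frac{M}{\lambda(f)}.
\]
Summing over $j=0,\dots,t-1$ and adding the (deterministic) $t$ proper steps themselves — or simply noting each proper step together with its preceding wasted block costs $1/\ell(X_j)$ in expectation — gives total expected ball-walk length at most $Mt/\lambda$, as claimed. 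I would also handle the lazy/self-loop bookkeeping by noting it only scales both walks by the same constant and does not affect the ratio.

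The main obstacle is the probabilistic coupling bookkeeping: one must be careful that "wasted steps" and "speedy steps" are defined by a clean stopping-time decomposition of a \emph{single} ball-walk trajectory, so that the speedy subsequence is genuinely distributed as the speedy walk from $Q_0$, and that the number of wasted steps inserted before the $j$-th proper step depends only on the current speedy state $X_{j}$ (hence the conditional-geometric claim is exact). Once that decomposition is set up, the inequality $M(Q_j,\pi_\ell)\le M$ for all $j$ (monotonicity of warmness under a Markov step) and the identity $\int 1/\ell\,d\pi_\ell = 1/\lambda(f)$ do the rest; these are routine. Since the lemma is attributed to \cite{CV2014}, I would keep the write-up brief and cite it, presenting the decomposition-and-warmness computation above as the justification.
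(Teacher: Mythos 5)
Your proposal is correct and follows essentially the same route as the paper's proof: decompose the ball walk into proper (speedy) steps with geometric wasted-step overhead $1/\ell(x)$ per proper step, propagate the $M$-warmness bound through each step of the speedy chain by induction, and use the identity $\int_K \ell(x)^{-1}\,d\pi_\ell(x) = 1/\lambda(f)$ together with linearity of expectation. No substantive differences to report.
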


\begin{proof}
Since $M(Q_0,Q)\leq M$, we have that for all $S \subseteq K$,
\[
Q_0(S) \leq M Q(S),
\]

and by induction on $i$, we get that
\begin{align*}
Q_i(S) = \int_K P_x(S) dQ_{i-1}(x) \leq M\int_K P_x(S) dQ(x) = MQ(S).
\end{align*}

For any point $x$, the expected number of steps until a proper step is made is $1/\ell(x)$. So, given a point from $Q_i$, the expected number of steps to obtain a point from $Q_{i+1}$ is 
\begin{align*}
\int_K \frac{1}{\ell(x)}\, dQ_i(x) \leq M\int_K \frac{1}{\ell(x)}\, dQ(x) = M \int_K \frac{1}{\lambda}\,  d\hat Q (x) = \frac{M}{\lambda},
\end{align*}

where $\hat Q$ is the corresponding distribution for the ball walk with a Metropolis filter (i.e., with stationary distribution proportional to $f(x)$). If the speedy walk took $t$ steps, then by linearity of expectation, the expected number of steps for the ball walk is at most $Mt/\lambda$.
\end{proof}

\subsection{Mapping speedy distribution to target distribution}\label{sec:speedy-to-target}

When the speedy walk has converged, we obtain a point approximately from the speedy walk distribution $\ell(x) f(x)$. We will use a rejection routine to map a random point from this distribution to the target distribution $f(x)$ while incurring a small amount of additional sampling error.  We adapt the proof of Theorem~4.16 of \cite{KLS95} to the Gaussian setting.

\begin{lemma}\label{lem:map-speedy-to-ball}
Assume that $\|P-\hat{Q}\|_{tv} \le \eps$, $B_n \subseteq K$, $\eps \le 1/10$, and 
\[
\delta \le \frac{\min\{\sigma,1\}}{8\sqrt{n\log(n/\eps)}}.
\]
There is an algorithm that will use a constant number of random samples from $P$, in expectation, to obtain a distribution $R$ satisfying $\|R-Q\|_{tv} \le 10\eps$. 
\end{lemma}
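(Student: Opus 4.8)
The plan is to recover $Q\propto f$ from the approximation $P$ to the speedy-walk law $\hat Q\propto\ell(x)f(x)$ (Lemma~\ref{lem:speedystationary}) by a rejection step that divides out the factor $\ell(x)$. Since $\ell(x)\in(0,1]$, with $\ell(x)=1$ whenever $x+\delta B_n\subseteq K$, the laws $\hat Q$ and $Q$ differ only in a thin layer along $\partial K$, so only a small correction is needed. Given $x\sim P$, I would estimate $\ell(x)$ by drawing $m=O(\eps^{-2}\log(1/\eps))$ uniform points in $x+\delta B_n$ and querying the membership oracle, obtaining $\tilde\ell(x)$ with $|\tilde\ell(x)-\ell(x)|\le\eps$ except with probability $\mathrm{poly}(\eps)$ (Chernoff); then accept $x$ and output it with probability $\min\{1,\,c_0/\tilde\ell(x)\}$ for an absolute constant $c_0$ (say $c_0=1/2$), and otherwise discard $x$ and restart with a fresh draw from $P$. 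Let $R$ be the law of the output.

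I would bound $d_{tv}(R,Q)$ by three contributions. (i) Estimation: replacing $\tilde\ell$ by $\ell$ changes the output law by $O(\eps)$, using the Chernoff bound and $\eps\le 1/10$. (ii) Input error: the operation ``tilt $P$ by the acceptance probability and renormalize'' maps $P$ and $\hat Q$ to within $O\big(d_{tv}(P,\hat Q)/\beta\big)=O(\eps)$ of each other, where $\beta=\Pr_{\hat Q}[\text{accept}]\ge c_0$; this follows from expanding the ratio of tilted measures and using $d_{tv}(P,\hat Q)\le\eps$ and the lower bound on $\beta$. (iii) Idealized error: with exact $\ell$ and input $\hat Q$, the accepted point has density $\propto\ell(x)f(x)\cdot\min\{1,c_0/\ell(x)\}=\min\{\ell(x),c_0\}\,f(x)$, which agrees with $c_0 f(x)$ off the set $\{\ell(x)<c_0\}$, so a short computation puts this law within $O\big(\Pr_Q[\ell(x)<c_0]\big)$ of $Q$. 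The expected number of draws from $P$ is $1/\beta=O(1)$ because $\min\{1,c_0/\ell(x)\}\ge c_0$ pointwise (equivalently $\E_{\hat Q}[1/\ell(x)]=\big(\int_K f\big)\big/\big(\int_K\ell f\big)=1/\lambda(f)=O(1)$ for our choice of $\delta$, by Lemma~\ref{lem:lambda-bound}), and each draw costs $O(m)=O^*(1)$ oracle calls.

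The main obstacle is step (iii): showing $\Pr_Q[\ell(x)<c_0]\le O(\eps)$. Here I would exploit that $Q$ is the Gaussian $\mathcal N(0,\sigma^2 I)$ restricted to a convex set containing $B_n$, hence $\min\{\sigma,1\}$-rounded (Lemma~\ref{lem:sigma-roundness}), and run the argument behind Lemma~\ref{lem:lambda-bound} in a level-set form that bounds the measure of the sublevel set $\{\ell(x)<c_0\}$ rather than just the average $\E_Q[1-\ell(x)]$; the gap between a constant-order bound on the average and an $\eps$-order bound on this tail is exactly what forces the hypothesis to push $\delta$ down to $\min\{\sigma,1\}/(8\sqrt{n\log(n/\eps)})$, the $\log(n/\eps)$ factor being the amount gained. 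The remaining ingredients --- the Chernoff bound for estimating $\ell$, the ratio estimate for tilted measures, and the pointwise bound on the acceptance probability --- are routine, and the only genuinely Gaussian-specific input (replacing the Lebesgue argument in Theorem~4.16 of \cite{KLS95}) is the roundedness estimate of Lemma~\ref{lem:sigma-roundness}.
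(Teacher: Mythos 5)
Your reduction to step (iii) is where the argument breaks. With the stated $\delta$ and the $\min\{\sigma,1\}$-roundedness from Lemma~\ref{lem:sigma-roundness}, Lemma~\ref{lem:lambda-bound} only gives $\E_Q[1-\ell]=1-\lambda(f)=O\bigl((\log(n/\eps))^{-1/4}\bigr)$, so Markov yields $\Pr_Q[\ell<c_0]=O\bigl((\log(n/\eps))^{-1/4}\bigr)$, and no level-set refinement can beat the Gaussian measure of the $\delta$-collar of $\partial K$, which in the worst case is of order $\delta\sqrt{n}/\min\{\sigma,1\}=\Theta\bigl(1/\sqrt{\log(n/\eps)}\bigr)$ --- e.g.\ for a thin wedge through the bulk of the Gaussian, $\ell$ falls below any fixed constant on a set of measure $\mathrm{poly}(1/\log(n/\eps))$, not $O(\eps)$. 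So the extra $\sqrt{\log(n/\eps)}$ in the hypothesis buys only a polylogarithmic, not an $\eps$-order, bound on the mass where $\ell$ is small, and your output law $\propto\min\{\ell(x),c_0\}f(x)$ remains at total variation distance $\omega(\eps)$ from $Q$. The claim that the $\log(n/\eps)$ factor is ``exactly the amount gained'' is the unproved (and, as stated, false) step.

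The missing idea is to shrink the body before rejecting. The paper sets $c=1-1/(2n)$, keeps a sample $u$ only if $v=u/c\in K$ (which happens with probability at least $1/2$ by the KLS level-set estimate $\hat{Q}_{\mu_L}(c\cdot\mu_L)\ge 1/2$), and outputs $v$ after a final Metropolis correction with probability $f(v)/f(u)$, which is bounded below by a constant since $\|u\|\le 4\sigma\sqrt{n}$. The point is that on the shrunk body the KLS bound $\int_{\mu_L\cap cK}(1-\ell(x))\,dx\le\eps\,\vol(\mu_L\cap cK)$ holds for \emph{every} level set --- the shrinking removes precisely the boundary collar where $\ell$ is not $\eps$-close to $1$ --- and this is where the $\sqrt{\log(n/\eps)}$ in $\delta$ is actually spent. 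The dilation $u\mapsto u/c$ then carries the speedy density on $cK$ to within $O(\eps)$ of the density $\propto f(cx)$ on $K$, and the last acceptance step converts that to $f$. Once this is in place your empirical estimation of $\ell(x)$ is also unnecessary: the routine never needs to evaluate $\ell$.
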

\begin{proof}
The rejection routine is as follows: let $c=1-1/(2n)$. For a point $u$ from distribution $\ell(x) f(x)$, let $v=(1/c)u$. Accept $v$ with probability $f(v)/f(u)$. Repeat until we accept a $v$. 

The correctness of the above routine follows from the following two facts: (i) with constant probability, the rejection sampling will succeed and (ii) removing a thin shell around the boundary makes $\ell(x)$ look close to uniform on average.

Recall that $\hat{Q}$ is the speedy walk distribution and $Q$ is the ball walk distribution. Consider a level set $\mu_L = \{x : f(x) \ge L\}$. By logconcavity of $f$, $\mu_L$ is convex. From \cite{KLS95}, $\hat{Q}_{\mu_L}(c \cdot \mu_L) \ge 1/2$. By applying this to all level sets, it then follows that $\hat{Q}(cK) \ge 1/2$. 

Also from \cite{KLS95}, if $\mu_L$ contains the unit ball, then 
\[
\dint_{\mu_L\cap cK} \left(1-\ell(x)\right)\, dx \le \eps \vol\left(\mu_L \cap cK\right).
\]
Recall that the level sets $\mu_L$ are balls intersected with $K$ since $f$ is a spherical Gaussian distribution. If $\mu_L$ does not contain the unit ball, a standard calculation (using that $B_n \subseteq K$) shows that the local conductance is at least $1-\eps$ for every point, and thus
\[
\dint_{\mu_L\cap cK} (1-\ell(x))\, dx \le \eps \vol(\mu_L \cap cK).
\]

Using the above, we see that
\begin{align*}
\dint_{cK} \left(1-\ell(x)\right)f(x)\, dx &= \dint_0^\infty \dint_{\mu_L} \left(1-\ell(x)\right)\{x \in \mathbbm{1}_{cK}\}\, dx\, dL \\
&\le \dint_0^\infty \eps \vol(\mu_L \cap cK)\, dL \\
&\le \eps \dint_{cK}f(x)\, dx.
\end{align*}

Then, 
\begin{align*}
\hat{Q}(cK) &= \frac{\dint_{cK} \ell(x) f(x)\, dx}{\dint_K \ell(x) f(x) \, dx} \\
&= \frac{\dint_{cK}f(x)\, dx - \dint_{cK} \left(1-\ell(x)\right)f(x)\, dx}{\dint_K \ell(x) f(x)\, dx} \\
&\ge \frac{\dint_{cK} f(x)\, dx - \eps \dint_{cK} f(x)\, dx}{\dint_K \ell(x)f(x)\, dx}
\end{align*}
and 
\[
\hat{Q}(cS) \le \frac{\dint_{cS} f(x)\, dx}{\dint_K \ell(x)f(x)\, dx}. 
\]

Let $P'$ be the distribution of the first sample from $P$ which satisfies $(1/c)x\in K$. Define
\[
z(x) = \begin{cases}
f(cx) \mbox{ if } x \in K \\
0 \mbox{ otherwise}
\end{cases}
\]
and let $Z$ be the probability distribution corresponding to $z$. Then,
\begin{align*}
P'(S)-Z(S) &= \frac{P(cS)}{P(cK)} - \frac{Q(cS)}{Q(cK)} \\
&\le \frac{\hat{Q}(cS) +\eps}{\hat{Q}(cK)-\eps} - \frac{Q(cS)}{Q(cK)} \\
&\le \frac{\dint_{cS} f + \eps \dint_K \ell(x)f(x)\, dx }{\dint_{cK} f - \eps \dint_{cK} f - \eps \dint_K \ell(x)f(x)\, dx } - \frac{\dint_{cS} f(x)\, dx}{\dint_{cK} f(x)\, dx} \\ 
&\le \frac{1+2\eps}{1-2\eps} - 1 \\
&\le 10\eps.
\end{align*}

Then accept a point $x$ with probability $f(x)/z(x)$, which is at least a constant since $\|x\| \le 4\sigma\sqrt{n}$. The overall expected number of rejection steps is a constant since $\hat{Q}(cK) \ge 1/2$.
\end{proof}

\subsection{Proof of sampling theorems}

We can now prove Theorem \ref{thm:gauss-sampling} and Theorem~\ref{thm:ball-walk} for sampling a Gaussian distribution restricted to a convex body.\\

\begin{proof}(of Theorem~\ref{thm:ball-walk})

By Theorem~\ref{thm:speedyconductance} and Theorem~\ref{thm:mixing}, we have selecting $\delta = \min\{\sigma,1\}/(4096\sqrt{n})$ implies that the speedy walk starting from a distribution that is $M$-warm will be within total variation distance $\eps$ of the target distribution in 
$O(\max\{\sigma^2,1\}n^2\log(n/\eps)\log(M/\eps))$
steps. 

By Lemma~\ref{lem:speedy-to-ball}, the ball walk will, in expectation, take at most $2M$ times as many steps since the average local conductance $\lambda$ is at least $1/2$. Therefore, the total number of expected ball walk steps is $O(M \max\{\sigma^2,1\}n^2\log(n/\eps)\log(M/\eps))$. We then repeat this walk $O(1)$ times until we obtain a point from the proper target distribution using Lemma~\ref{lem:map-speedy-to-ball}.

\end{proof}

\begin{proof}(of Theorem \ref{thm:gauss-sampling}.)

Note that here, we are analyzing the sampling phases of Figure~\ref{fig:volume-alg}, and only the phases when $\sigma^2\le 1$. 

By Theorem~\ref{thm:ball-walk}, we have that the ball walk will take $O(M \max\{\sigma^2,1\} n^2 \log(n/\eps)\log(M/\eps))$ steps in expectation. By Lemma~\ref{lem:sigma-small-warmness}, each phase will always provide a warm start to the next, i.e. $M=O(1)$. By assigning a sampling error $(\eps/n)^{16}$ to each phase, we ensure that the overall sampling failure is at most $\eps$ by a straightforward union bound. Therefore, each sampling phase takes
\[
O\left(n^2 \log^2\left(\frac{n}{\eps}\right)\right)
\]
expected steps of the ball walk. Adding up across phases introduces an additional $n \log n$ factor since we increase $\sigma^2$ by the rate of $\slowrate$ between phases.

If we want to instead run for a fixed number of steps, we can keep a global counter of the ball walk steps. Say the expected number of ball walk steps is $T$. If at any point the number of ball walk steps goes above $2T$, we abandon this run of the algorithm. The probability of a single run failing is at most $1/2$ by Markov's inequality. If we want an overall failure probability of at most $p$, then we can run $\log(1/p)$ iterations of the algorithm, and with probability $1-p$, at least one of them will succeed.
\end{proof}

\begin{proof}(of Theorem~\ref{thm:sampling})
The proof of Theorem~\ref{thm:sampling}, which extends Gaussian sampling to uniform sampling, follows along the same lines as Theorem~\ref{thm:gauss-sampling}. When $\sigma^2\le 1$, the total expected ball walk steps is
\[
O\left(n^3 \log(n) \log^2\left(\frac{n}{\eps}\right)\right).
\]

When $\sigma^2 > 1$, we additionally use Lemma~\ref{lem:sigma-large-warmness}, which implies that we can accelerate our cooling rate and still maintain a warm start. This accelerated rate allows us to overcome the increased mixing time of $O^*(\max\{\sigma^2,1\}n^2)$ once $\sigma^2\ge 1$. Now consider a ``chunk" of phases as a set of phases until $\sigma^2$ doubles. There will be $O(C^2n/\sigma^2)$ phases in a chunk, where each chunk has expected mixing time $O(\sigma^2n^2 \log(n/\eps))$. Since there are $O(\log n)$ chunks (provided $C=\text{poly}(n)$), the total number of expected ball walk steps when $\sigma^2 > 1$ is 
\[
O\left(C^2n^2 \log(n) \log\left(\frac{n}{\eps}\right)\right).
\]

Note that this will yield a random sample with respect to a Gaussian with $\sigma^2=C^2n$ restricted to $K$. We can map this point to a uniform random point using simple rejection sampling, which will succeed with probability at least $1/e$ since $K \subseteq C\sqrt{n}$. If it fails, we can restart the algorithm. As with Theorem~\ref{thm:gauss-sampling}, we can repeat $\log(1/p)$ times to transform the expected ball walk steps into a fixed number of steps with success probability $1-p$. 

\end{proof}

\section{Analysis of volume algorithm}
\subsection{Accelerated cooling schedule}\label{sec:variance}
%

The goal of this section is to prove the Lemma \ref{lem:variance-bound}, which gives a bound on the variance of the random variable we use to estimate the ratio of Gaussian integrals in the volume algorithm in Figure~\ref{fig:volume-alg}. Here we will actually prove the inequality to be true for all logconcave functions, but only apply it to an indicator function of a convex body.
Let $f: \R^n \rightarrow \R$ be a logconcave function such that $E_f(\|X\|^2) = R^2$.

Define 
\[
g(x,\sigma^2) = f(x) \e{-\frac{\|x\|^2}{2\sigma^2}}
\]
and also define
\[
G(\sigma^2) = \dint_{\R^n} g(x,\sigma^2)\, dx. 
\]

Define $\mu_i$ as the probability distribution proportional to $g(x,\sigma_i^2)$. Let $X$ be a random sample from $\mu_i$ and let $Y=g(X,\sigma_{i+1}^2)/g(X,\sigma_i^2)$. From a standard calculation, we have that
\[
\E(Y) = \frac{G(\sigma_{i+1}^2)}{G(\sigma_i^2)}.
\]
The second moment of $Y$ is given by
\begin{align*}
\E(Y^2) &= \dint_{\R^n} \left(\frac{g(x,\sigma_{i+1}^2)}{g(x,\sigma_i^2)}\right)^2\, d\mu_i(x) \\
&= \dint_{\R^n} \left(\frac{g(x,\sigma_{i+1}^2)}{g(x,\sigma_i^2)}\right)^2 \cdot \frac{g(x,\sigma_i^2)}{G(\sigma_i^2)}\, dx \\
&= \frac{1}{G(\sigma_i^2)} \dint_{\R^n} \frac{g(x,\sigma_{i+1}^2)^2}{g(x,\sigma_i^2)}\, dx \\
&= \frac{1}{G(\sigma_i^2)}\dint_{\R^n} g\left(x,\frac{\sigma_{i+1}^2\sigma_i^2}{2\sigma_i^2-\sigma_{i+1}^2}\right)\, dx \\
&= \frac{G(\frac{\sigma_{i+1}^2\sigma_i^2}{2\sigma_i^2-\sigma_{i+1}^2})}{G(\sigma_i^2)}.
\end{align*}

To bound the number of samples $X$ needed to estimate $Y$ within a target relative error, we will bound $\E(Y^2)/\E(Y)^2$, which is given by
\[
\frac{\E(Y^2)}{\E(Y)^2} = \frac{G(\frac{\sigma_{i+1}^2\sigma_i^2}{2\sigma_i^2-\sigma_{i+1}^2})G(\sigma_i^2)}{G(\sigma_{i+1}^2)^2}.
\]

Then letting $\sigma^2 = \sigma_{i+1}^2$ and $\sigma_i^2 = \sigma^2/(1+\alpha)$, we can further simplify as 
\[
\frac{\E(Y^2)}{\E(Y)^2} = \frac{G\left(\frac{\sigma^2}{1+\alpha}\right)G\left(\frac{\sigma^2}{1-\alpha}\right)}{G(\sigma^2)^2}.
\]

The above $n$-dimensional inequality is difficult to analyze directly. We will reduce it to a simpler $1$-dimensional inequality via localization. Define an exponential needle $E = (a,b,\gamma)$ as a segment $[a,b] \subseteq \R^n$ and $\gamma \in \R$ corresponding to the weight function $e^{\gamma t}$ applied the segment $[a,b]$. The integral of an $n$-dimensional function $f$ over this one dimensional needle is 
\[
\int_E f = \int_0^{|b-a|} f(a + tu) e^{\gamma t}\, dt \qquad \qquad \mbox{ where } \quad u = \frac{b-a}{|b-a|}.
\]

We use the following theorem from~\cite{KLS95}.

\begin{theorem}(\cite{KLS95}) \label{thm:exp-needles}
Let $f_1, f_2, f_3, f_3$ be four nonnegative continuous functions defined on $\R^n$, and $\alpha, \beta > 0$. Then, the following are equivalent:
\begin{enumerate}
\item For every logconcave function $F$ defined on $\R^n$ with compact support,
\[
\left(\int_{\R^n}F(t) f_1(t)\, dt \right)^\alpha 
\left(\int_{\R^n}F(t) f_2(t)\, dt \right)^\beta \le  
\left(\int_{\R^n}F(t) f_3(t)\, dt \right)^\alpha 
\left(\int_{\R^n}F(t) f_4(t)\, dt \right)^\beta
\]
\item For every exponential needle $E$, 
\[
\left(\int_E f_1\right)^\alpha
\left(\int_E f_2\right)^\beta  \le 
\left(\int_E f_3\right)^\alpha
\left(\int_E f_4\right)^\beta
\]
\end{enumerate}
\end{theorem}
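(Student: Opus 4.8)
The statement is a standard ``localization'' equivalence, and the plan is to prove it using the Localization Lemma of Lov\'asz and Simonovits \cite{LS93}: $(1)\Rightarrow(2)$ is a short approximation argument, while $(2)\Rightarrow(1)$ is the substantial direction, carried out by an iterated bisection of the support of a hypothetical ``bad'' logconcave function. Throughout the second direction I may assume $\int_{\R^n}F f_1$ and $\int_{\R^n}F f_2$ are positive, since otherwise the left side of the inequality in $(1)$ vanishes and there is nothing to prove.

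For $(1)\Rightarrow(2)$, I would realize an exponential needle as a limit of genuine compactly supported logconcave functions. Given $E=(a,b,\gamma)$ with $u=(b-a)/|b-a|$, let $T_\rho=\{a+tu+w : t\in[0,|b-a|],\ w\perp u,\ \|w\|\le\rho\}$ (a convex body) and set $F_\rho(x)=e^{\gamma\langle u,\,x-a\rangle}\mathbf{1}_{T_\rho}(x)$, which is logconcave as a product of a log-linear function and the indicator of a convex body. Slicing perpendicular to $u$ and using continuity of the $f_i$, one gets $\int_{\R^n}F_\rho f_i=\vol_{n-1}(\rho B_{n-1})\cdot\bigl(\int_E f_i\bigr)\cdot(1+o(1))$ as $\rho\to0$; since the factor $\vol_{n-1}(\rho B_{n-1})$ is common to all four functions and the exponents sum to the same total $\alpha+\beta$ on both sides, applying $(1)$ to $F_\rho$ and letting $\rho\to0$ yields $(2)$.

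For $(2)\Rightarrow(1)$, suppose some compactly supported logconcave $F$ violates $(1)$. Writing $I_i=\int_{\R^n}F f_i$ and $\Psi(F)=I_1^\alpha I_2^\beta/(I_3^\alpha I_4^\beta)$ (read as $+\infty$ when $I_3$ or $I_4$ vanishes), this says $\Psi(F)>1$. I would pick a hyperplane $H$ that simultaneously bisects the two measures $F f_1\,dx$ and $F f_3\,dx$ (such $H$ exists by a Borsuk--Ulam / ham-sandwich argument for two masses, which needs $n\ge2$; the case $n=1$ is the base case), and let $F_+,F_-$ be the restrictions of $F$ to the two closed half-spaces, again compactly supported and logconcave. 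With $I_1^\pm=I_1/2$ and $I_3^\pm=I_3/2$, the mediant inequality $\frac{I_2^++I_2^-}{I_4^++I_4^-}\le\max\{\tfrac{I_2^+}{I_4^+},\tfrac{I_2^-}{I_4^-}\}$ together with monotonicity of $t\mapsto t^\beta$ gives $\Psi(F)\le\max\{\Psi(F_+),\Psi(F_-)\}$, so one of the two halves is still bad. Iterating these bisections, organized so that the supports actually shrink onto a segment (this is the geometric content of the Localization Lemma), produces in the limit a one-dimensional needle on a segment $[a,b]$ carrying weight $\ell(t)^{n-1}$ for some nonnegative affine $\ell$ — the power $n-1$ being the cross-sectional Jacobian of a convex body collapsing to a segment — and since $\Psi$ does not increase along the way, this needle has $\Psi>1$. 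A final reduction replacing the polynomial weight $\ell(t)^{n-1}$ by exponential weights (the passage to exponential needles carried out in \cite{KLS95}) produces an exponential needle $E$ with $\Psi(E)>1$, contradicting $(2)$.

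I expect the main obstacle to be exactly the limiting part of the second direction: ensuring at every stage that a bisecting hyperplane for the two chosen measures exists, that the nested bisections can be organized so that their supports converge to a genuine (non-degenerate) needle rather than collapsing prematurely or refusing to shrink, and that the resulting $\ell(t)^{n-1}$-weighted needle can be replaced by exponential needles without losing the strict inequality. This bookkeeping is precisely what the Localization Lemma of \cite{LS93} and its refinement in \cite{KLS95} package, so in a polished write-up I would invoke those results rather than reconstruct the compactness and reparametrization arguments from scratch.
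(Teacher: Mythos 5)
The paper does not prove this statement at all: it is imported verbatim as Theorem 2.7 of \cite{KLS95}, so there is no in-paper argument to compare against. Your outline is essentially the standard localization proof of that theorem, and the $(1)\Rightarrow(2)$ direction (thin exponential cylinders $F_\rho$ collapsing onto the needle, with the common factor $\vol_{n-1}(\rho B_{n-1})^{\alpha+\beta}$ cancelling) is correct as written.

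The one place where your organization of $(2)\Rightarrow(1)$ diverges from the actual \cite{KLS95} proof, and where it creates avoidable difficulty, is that you try to push the multiplicative quantity $\Psi$ through the bisection process directly. The standard argument first \emph{linearizes}: from the strict failure $I_1^\alpha I_2^\beta > I_3^\alpha I_4^\beta$ one chooses $\lambda,\mu>0$ with $\lambda^\alpha\mu^\beta=1$, $I_1>\lambda I_3$ and $I_2>\mu I_4$, sets $g=F\cdot(f_1-\lambda f_3)$ and $h=F\cdot(f_2-\mu f_4)$, and applies the Localization Lemma once, as a black box, to the two positivity conditions $\int g>0$, $\int h>0$; the resulting exponential needle immediately violates $(2)$. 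Your version instead requires a ham-sandwich bisection of two measures at every stage \emph{and} a fresh compactness/limiting argument showing the nested supports converge to a nondegenerate needle while the ratio $\Psi$ stays above $1$ --- which is exactly the bookkeeping you flag as the obstacle, and which the Localization Lemma in its standard form does not hand you for a ratio of products of four integrals. So the approach is morally sound but, as written, re-derives the hard part of \cite{LS93} rather than invoking it; the linearization step is what lets the cited lemma do all of that work for you.
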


A crucial aspect of our proof is that we can restrict the support of our target logconcave function $f$, which then allows us to consider a restricted family of needles. Recall that we assumed $\E_f(\|X\|^2) = R^2$. Set $R_1 = 2 R\cdot\log(1/\eps)$. By the following lemma from~\cite{Lovasz2007}, if we restrict the support of $f$ to be $R_1 \cdot B_n$, we only lose an $\eps/2$ fraction of the mass.

\begin{lemma}\cite{Lovasz2007}
Let $X \in \R^n$ be a random point from a logconcave distribution with $\E(X^2) = R^2$. Then for any $t > 1, \Pr(\|X\| > tR) < \e{-R+1}$. 
\end{lemma}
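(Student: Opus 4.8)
The plan is to prove the claimed tail bound in its standard form, $\Pr(\|X\|>tR)\le e^{-t+1}$ for $t\ge 1$, by the usual two-stage route for such logconcave-concentration statements: reduce the $n$-dimensional assertion to a one-dimensional one, and then settle the one-dimensional case by an explicit calculation whose extremal configuration is a one-sided exponential density. First I would normalize. Writing $f$ for the logconcave density of $X$, I may assume $\int_{\R^n}\|x\|^2 f = R^2\int_{\R^n}f$ with equality (if the second moment is smaller, rescaling makes the claim only easier), and I may truncate $f$ to a large ball so that it has compact support while perturbing both sides negligibly. The goal becomes
\[
\int_{\{\|x\|\ge tR\}} f(x)\,dx \;\le\; e^{-t+1}\!\int_{\R^n}\! f(x)\,dx
\qquad\text{subject to}\qquad
\int_{\R^n}\|x\|^2 f(x)\,dx \;=\; R^2\!\int_{\R^n}\! f(x)\,dx .
\]

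Second, I would invoke the Localization Lemma of Lov\'asz--Simonovits --- the same device behind Theorem~\ref{thm:exp-needles} and the proof of Theorem~\ref{thm:iso} --- to push this onto exponential needles. If the displayed inequality failed for some logconcave $f$ meeting the second-moment equality, localization (in the form handling the two relevant integral conditions, after folding the moment constraint into a four-function formulation) would produce an exponential needle $E=(a,b,\gamma)$ on which the failure persists. Along the needle, parametrized by $x=a+su$, the quantity $\|x\|^2$ is an upward parabola $q(s)=(s-s_0)^2+c_0$ with $c_0\ge 0$, so the ``bad set'' $\{s:\|x\|\ge tR\}$ is the underlying segment with a middle subinterval deleted, the second-moment condition reads $\int q(s)e^{\gamma s}\,ds = R^2\int e^{\gamma s}\,ds$ over the segment, and we are left with a purely one-dimensional statement about the weight $e^{\gamma s}$.

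Third, I would solve the one-dimensional problem by optimizing over the free parameters --- the sign and magnitude of $\gamma$, the vertex $s_0$, the offset $c_0$, and the endpoints. The extremal configuration is the half-line $s\in[0,\infty)$ with decaying weight $e^{-\lambda s}$, $q(s)=s^2$ (so $s_0=c_0=0$), and bad set $[tR,\infty)$; the constraint $\int_0^\infty s^2 e^{-\lambda s}\,ds = R^2\int_0^\infty e^{-\lambda s}\,ds$ forces $\lambda=\sqrt2/R$, whence the tail weight is exactly $e^{-\lambda tR}=e^{-\sqrt2\,t}\le e^{-t+1}$ for every $t\ge 1$. Every other needle (shifted vertex, finite segment, or increasing weight) can be checked to give a smaller ratio, chiefly because moving mass away from the origin inflates the second moment faster than the bad-set weight.

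I expect the main obstacle to be this third step: carefully ruling out all other needle configurations and confirming that the constant that falls out is genuinely $e^{-t+1}$ rather than something weaker, together with the bookkeeping in the second step needed to accommodate the second-moment constraint within the needle machinery. If only the functional form $e^{-\Omega(t)}$ is required --- which is all the subsequent truncation argument uses --- one can bypass the delicate extremal analysis entirely: Markov's inequality gives $\Pr(\|X\|\le 2R)\ge 3/4$, and a classical dilation inequality for logconcave measures (Borell's lemma) then yields $\Pr(\|X\|>2sR)\le \tfrac34\,3^{-(s+1)/2}$ for $s\ge 1$, a clean exponential tail with an absolute rate, which suffices wherever the lemma is applied.
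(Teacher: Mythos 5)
First, a point of orientation: the paper does not prove this lemma at all --- it is quoted from \cite{Lovasz2007} (Lemma~5.17 there), so there is no in-paper argument to compare yours against; your proposal has to stand on its own. You are right that the printed bound $\e{-R+1}$ is a typo for $e^{1-t}$, and you correctly work with the standard form. Judged on its own, however, your main route has two genuine gaps. (1) The reduction to needles carrying a \emph{pure exponential} weight is not licensed here. Theorem~\ref{thm:exp-needles} applies to inequalities of the product form $(\int Ff_1)^\alpha(\int Ff_2)^\beta\le(\int Ff_3)^\alpha(\int Ff_4)^\beta$, whereas your statement is conditional (``if $\int\|x\|^2f\le R^2\int f$ then $\int_{\|x\|>tR}f\le c\int f$''). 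The tool that does apply is the original two-constraint Localization Lemma, and it leaves you with a needle carrying $f(a+su)\,\ell(s)^{n-1}$, i.e.\ an \emph{arbitrary} one-dimensional logconcave weight, not just $e^{\gamma s}$; collapsing further to exponentials needs its own variational argument. (2) Even granting that reduction, the heart of the matter --- that the one-sided exponential on a ray through the origin is extremal over all choices of $\gamma$, vertex $s_0$, offset $c_0$ and endpoints --- is asserted rather than proved, and you flag it yourself as the main obstacle. Nothing in the proposal rules out, say, finite segments with increasing weight or needles whose closest approach to the origin lies outside the support. So the main route is a plan, not a proof.

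That said, your fallback is sound and is arguably the better answer for this paper's purposes. Markov's inequality gives $\Pr(\|X\|\le 2R)\ge 3/4$, and Borell's lemma applied to the symmetric convex set $2RB_n$ yields $\Pr(\|X\|>2sR)\le\tfrac34\,3^{-(s+1)/2}$ for $s\ge1$: a complete, standard argument with an absolute exponential rate. It does not recover the constant $e^{1-t}$ (the exponent degrades by a factor of about $2/\ln 3$), but the lemma is invoked exactly once, to show that truncating the support to $R_1B_n$ with $R_1=\Theta(R\log(1/\eps))$ loses only an $O(\eps)$ fraction of mass; since $R_1$ enters the later bounds (Lemmas~\ref{lem:exp-needle-simplify} and~\ref{lem:exp-bound}) only up to a constant factor absorbed into $C$ and the $O^*$ notation, the Borell version suffices after adjusting the constant in $R_1$. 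If you want the lemma exactly as stated, consult \cite{Lovasz2007} rather than rebuilding it through localization.
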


We can now reduce the desired inequality to a simpler form of exponential needles, which are restricted to lie in the interval $[-R_1, R_1]$. 

\begin{lemma}\label{lem:exp-needle-simplify}
If for all intervals $[\ell,u] \subseteq [-R_1,R_1]$ and $\gamma>0$, 
\[
\frac{\dint_\ell^u \e{\gamma t} \e{-\frac{t^2(1+\alpha)}{2\sigma^2}}\, dt \cdot \dint_\ell^u \e{\gamma t} \e{-\frac{t^2(1-\alpha)}{2\sigma^2}} \, dt}{\left ( \dint_\ell^u \e{\gamma t} \e{-\frac{t^2}{2\sigma^2}}\, dt \right) ^2} \leq c,
\]
then for all logconcave functions $f$ defined on $\R^n$ whose support is a compact subset of $R_1\cdot B_n$,
\[
\frac{G(\frac{\sigma^2}{1+\alpha})G(\frac{\sigma^2}{1-\alpha})}{G(\sigma^2)^2} \le c.
\]
\end{lemma}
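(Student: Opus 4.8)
\textbf{Proof plan for Lemma~\ref{lem:exp-needle-simplify}.}
The plan is to apply Theorem~\ref{thm:exp-needles} with the parameters $\alpha = \beta = 1$ and with the four functions chosen to match the three quantities $G(\sigma^2/(1+\alpha))$, $G(\sigma^2/(1-\alpha))$ and $G(\sigma^2)$ appearing in the inequality. Concretely, set $f_1(x) = e^{-\|x\|^2(1+\alpha)/(2\sigma^2)}$, $f_2(x) = e^{-\|x\|^2(1-\alpha)/(2\sigma^2)}$, $f_3(x) = f_4(x) = e^{-\|x\|^2/(2\sigma^2)}$, all Gaussians (note the identity $\sigma^2/(1\pm\alpha)$ corresponds exactly to rescaling the exponent by $1\pm\alpha$). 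Then for a logconcave $F$ with compact support, $\int F f_1 = G(\sigma^2/(1+\alpha))$, $\int F f_2 = G(\sigma^2/(1-\alpha))$, and $\int F f_3 = \int F f_4 = G(\sigma^2)$, provided we take $F = f$, the target logconcave function. So the $n$-dimensional statement
\[
G\!\left(\tfrac{\sigma^2}{1+\alpha}\right) G\!\left(\tfrac{\sigma^2}{1-\alpha}\right) \le c \, G(\sigma^2)^2
\]
is an instance of item (1) of Theorem~\ref{thm:exp-needles} — except for the constant $c$, which Theorem~\ref{thm:exp-needles} does not carry. To handle the multiplicative constant, I would rescale: absorb $c$ into one of the functions, e.g.\ replace $f_1$ by $c \cdot f_1$ (or apply the theorem to $f_1/\sqrt{c}$ and $f_2/\sqrt{c}$ on one side versus $f_3, f_4$ on the other), which preserves nonnegativity and continuity and turns the inequality with constant $c$ into the constant-free form required by the theorem. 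Then item (1) holds iff item (2) holds, i.e.\ iff for every exponential needle $E$,
\[
\left(\int_E f_1\right)\left(\int_E f_2\right) \le c \left(\int_E f_3\right)\left(\int_E f_4\right) = c\left(\int_E f_3\right)^2 .
\]

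Next I would unwind what $\int_E f_j$ is for a needle $E = (a,b,\gamma)$. Writing $x = a + tu$ with $u$ a unit vector and $t \in [0,|b-a|]$, we have $\|a+tu\|^2 = \|a\|^2 + 2t\langle a,u\rangle + t^2$, which is a quadratic in $t$ with leading coefficient $1$. Thus $f_j(a+tu) \, e^{\gamma t}$ is, up to a constant factor independent of $t$, of the form $e^{\gamma' t} e^{-\kappa_j t^2}$ where $\kappa_1 = (1+\alpha)/(2\sigma^2)$, $\kappa_2 = (1-\alpha)/(2\sigma^2)$, $\kappa_3 = 1/(2\sigma^2)$ and $\gamma'$ collects the linear term $2\langle a,u\rangle \kappa_j$ together with $\gamma$; the constant prefactors $e^{-\kappa_j\|a\|^2}$ cancel between numerator and denominator after completing the square (one can also just re-parametrize the needle so that the linear coefficient is folded into $\gamma$, which is exactly the $\e{\gamma t}$ weight in the hypothesis). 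After this reduction, the needle inequality over $[0,|b-a|]$ becomes precisely the one-dimensional inequality in the hypothesis over an interval $[\ell, u]$, once I record that since the support of $f$ lies in $R_1 B_n$, the relevant needle segment lies in $R_1 B_n$ and hence its parametrization interval is contained in $[-R_1, R_1]$ (more carefully: the one-dimensional function $f(a+tu)$ has support a subinterval of $[-R_1,R_1]$ in the shifted coordinate, so after completing the square the interval of integration is a subinterval of $[-R_1,R_1]$). Feeding this into the assumed 1-D bound gives the needle inequality with the same constant $c$, and Theorem~\ref{thm:exp-needles} then yields the $n$-dimensional conclusion.

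I expect the main obstacle to be purely bookkeeping rather than conceptual: carefully tracking the affine change of variables $x = a+tu$ and the completion of the square so that the linear-in-$t$ exponential term that arises from $2\langle a,u\rangle t$ is correctly identified with the free parameter $\gamma$ in the hypothesis, and checking that after completing the square the constant prefactors cancel exactly in the ratio and that the resulting integration interval stays inside $[-R_1, R_1]$. A second small point to get right is the handling of the constant $c$: Theorem~\ref{thm:exp-needles} is stated constant-free, so I must explicitly rescale $f_1$ (or $f_1$ and $f_2$ symmetrically) by the appropriate power of $c$ before invoking it, and note that this rescaling commutes with integration against $F$ and against needles alike, so it does not affect the equivalence. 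No deep inequality is needed here — the lemma is genuinely a reduction, with Theorem~\ref{thm:exp-needles} (localization for exponential needles) doing all the heavy lifting; the actual analytic work, namely proving the displayed 1-D inequality, is deferred to the next step of the paper.
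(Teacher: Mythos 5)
Your overall strategy is exactly the paper's: invoke Theorem~\ref{thm:exp-needles} with $f_1=g(\sigma^2/(1+\alpha),\cdot)$, $f_2=g(\sigma^2/(1-\alpha),\cdot)$, $f_3=f_4=\sqrt{c}\,g(\sigma^2,\cdot)$ (absorbing the constant by rescaling), and then reduce an arbitrary exponential needle to the one-dimensional hypothesis, using the support condition to confine the parameter interval to $[-R_1,R_1]$.

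There is, however, one imprecision in your reduction step that you should fix. If you parametrize the needle from an endpoint $a$ and complete the square separately for each $f_j$, the induced linear coefficient is $\gamma'_j=\gamma-2\kappa_j\langle a,u\rangle$ with $\kappa_1=(1+\alpha)/(2\sigma^2)$, $\kappa_2=(1-\alpha)/(2\sigma^2)$, $\kappa_3=1/(2\sigma^2)$; since these differ across $j$ (unless $\langle a,u\rangle=0$), the three integrals do not share a common exponential weight $e^{\gamma t}$, so they do not match the hypothesis, which quantifies over a \emph{single} $\gamma$ for all three integrals. The correct move --- and what the paper does --- is to re-center the parametrization at the foot of the perpendicular from the origin to the line through the needle, so that $\|x\|^2=z^2+t^2$ with no linear term for any of the three Gaussian widths simultaneously; the factors $e^{-\kappa_j z^2}$ then cancel in the ratio because $\kappa_1+\kappa_2=2\kappa_3$, and the needle's own weight $e^{\gamma t}$ only picks up a multiplicative constant under the translation, which likewise cancels. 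Your parenthetical about ``re-parametrizing the needle'' points in this direction, but as written your primary derivation does not land on the hypothesis form. (A last cosmetic point: the hypothesis assumes $\gamma>0$, so for a needle whose weight has nonpositive slope you should reflect $t\mapsto -t$, which preserves the interval constraint $[\ell,u]\subseteq[-R_1,R_1]$ and the symmetric Gaussian factors.)
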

\begin{proof}
Applying Theorem~\ref{thm:exp-needles} and setting $f_1(x) = g(\sigma^2/(1+\alpha),x), $\\
$f_2(x) = g(\sigma^2/(1-\alpha),x), f_3(x) = f_4(x) = \sqrt{c} \cdot g(\sigma^2,x), \beta=\gamma=1$, we have that 
\[
\frac{G(\frac{\sigma^2}{1+\alpha})G(\frac{\sigma^2}{1-\alpha})}{G(\sigma^2)^2} \le c
\]
if and only if for all exponential needles $E \subseteq \R^n$, 
\[
\frac{\dint_E g\left(\frac{\sigma^2}{1+\alpha},x\right) \, dx \dint_E g\left(\frac{\sigma^2}{1-\alpha},x\right) \, dx}{\left(\dint_E g\left(\sigma^2,x\right) \, dx\right)^2} \le c.
\]

To prove the lemma, we will show that we can reduce the inequality for an arbitrary exponential needle $E \subseteq \R^n$ to the simpler form. $E$ is defined by an interval $\mathcal{I}$ in $\R^n$ and an arbitrary exponential function $\e{\gamma t}$ on $\mathcal{I}$. Define $z$ as the closest distance from the origin to the extension of the $\mathcal{I}$ in both directions. Parameterize the interval $\mathcal{I}$ in terms of $t$, where $t=0$ gives the closest point along the extension of $\mathcal{I}$ to the origin (note $t=0$ does not necessarily have to be on $\mathcal{I}$). Also define the minimum and maximum values of $t$ on $\mathcal{I}$ as $\ell$ and $u$ respectively. We can assume that $-R_1 \le \ell \le u \le R_1$ since $f$ is $0$ outside of $R_1 \cdot B_n$. We then have that 
\begin{align*}
\int_E g(\sigma^2,x)\, dx &= \int_\ell^u \e{\gamma t} \e{-\frac{t^2+z^2}{2\sigma^2}}) \, dt \\
&=\e{-\frac{z^2}{2\sigma^2}}\cdot \int_\ell^u \e{\gamma t} \e{-\frac{t^2}{2\sigma^2}} \, dt.
\end{align*}

Note that in the integral ratio, the terms with $z$ cancel out since 
\[
\e{-\frac{(1+\alpha)z^2}{2\sigma^2}-\frac{(1-\alpha)z^2}{2\sigma^2} + \frac{2z^2}{2\sigma^2}} = 1,
\]
which then proves the lemma.
\end{proof}

Before bounding the desired inequality, we first prove the following two helper lemmas.

\begin{lemma}\label{lem:4th-moment}
Let $X$ be a random variable with $\E(X^4)<\infty$ and $a \le X \le b$. Then, 
\[
\E(X^4) - \E(X^2)^2 \le 4\max\{a^2,b^2\} \sf{Var}(X).
\]
\end{lemma}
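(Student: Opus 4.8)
The plan is to exploit the bound $a \le X \le b$ to control the fourth central moment in terms of the variance. First I would reduce to a centered variable: set $\mu = \E(X)$ and $Y = X - \mu$, so that $\Var(X) = \E(Y^2)$ and $a - \mu \le Y \le b - \mu$. The quantity $\E(X^4) - \E(X^2)^2$ does \emph{not} equal a simple moment of $Y$, so the cleanest route is to write $\E(X^4) - \E(X^2)^2 = \E(X^4) - (\E(X^2))^2 = \Var(X^2)$, i.e.\ this is exactly the variance of the random variable $X^2$. That reframing is the key observation: the left-hand side is $\Var(X^2)$, and I want to bound the variance of a function of $X$ by (a constant times a bound on the derivative squared times) $\Var(X)$.

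Next I would use the elementary fact that for any random variable $Z$ and any constant $c$, $\Var(Z) \le \E((Z-c)^2)$. Applying this with $Z = X^2$ and a well-chosen constant $c$ — for instance $c = \E(X)^2$ or $c = ab$ — gives $\Var(X^2) \le \E\big((X^2 - c)^2\big)$. Then I factor: $X^2 - c$ should be written as a product of $(X - \text{something})$ and a bounded factor. The slickest choice is to center $X^2$ around $(\E X)^2$ via the identity $X^2 - (\E X)^2 = (X - \E X)(X + \E X)$, so that
\[
\Var(X^2) \le \E\big((X-\E X)^2 (X + \E X)^2\big) \le \big(\sup |X + \E X|\big)^2 \cdot \E\big((X - \E X)^2\big).
\]
Since $a \le X \le b$ and $a \le \E X \le b$, we have $|X + \E X| \le 2\max\{|a|,|b|\}$, hence $(X + \E X)^2 \le 4\max\{a^2, b^2\}$, and the right-hand side is at most $4\max\{a^2,b^2\}\,\Var(X)$, which is exactly the claimed bound.

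I expect the only subtlety — and it is minor — to be the justification of $\Var(X^2) \le \E\big((X-\E X)^2(X+\E X)^2\big)$: this holds because $\Var(X^2) = \min_c \E((X^2 - c)^2) \le \E((X^2 - (\E X)^2)^2)$, and the integrand there factors exactly as written. Everything else is an application of $|X|, |\E X| \le \max\{|a|,|b|\}$ and monotonicity of expectation; the finiteness of $\E(X^4)$ is used only to ensure all quantities in sight are finite. No localization or logconcavity is needed here — this is a purely probabilistic lemma — so there is no real obstacle, just the need to pick the right centering constant so that the bounded factor comes out with the constant $4$ rather than something larger.
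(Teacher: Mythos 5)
Your proof is correct, but it takes a genuinely different route from the paper's. You bound $\Var(X^2)$ by centering $X^2$ at the suboptimal constant $(\E X)^2$ rather than at its own mean, using $\Var(Z)=\min_c \E((Z-c)^2)$, and then factor $X^2-(\E X)^2=(X-\E X)(X+\E X)$ with $(X+\E X)^2\le 4\max\{a^2,b^2\}$. The paper instead uses the symmetrization identity with an independent copy $Y$ of $X$: it writes
\[
2\Var(X^2)=\E\left((X^2-Y^2)^2\right)=\E\left((X+Y)^2(X-Y)^2\right)\le 4\max\{a^2,b^2\}\,\E\left((X-Y)^2\right)=8\max\{a^2,b^2\}\Var(X),
\]
and divides by $2$. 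Both arguments hinge on the same factorization of a difference of squares and the same sup-bound on the "sum" factor, and both land on the constant $4$. Your version is marginally more elementary in that it avoids introducing an auxiliary independent copy, at the cost of needing the small observation that $a\le \E X\le b$ so that $|X+\E X|\le 2\max\{|a|,|b|\}$; the paper's version gets the analogous bound on $|X+Y|$ for free since $Y$ has the same range. There is no gap in your argument: the variational characterization of the variance you invoke is standard, and the finiteness of $\E(X^4)$ (together with boundedness of $X$) makes every expectation finite.
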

\begin{proof}
Let $Y$ be an independent random variable drawn from the same distribution as $X$. Then,
\begin{align*}
2\Var(X^2) &= \Var(X^2) + \Var(Y^2) \\
&= \E(X^4) - \E(X^2)^2 + \E(Y^4) -\E(Y^2)^2 \\
&= \E(X^4) - 2\E(X^2)\E(Y^2) + \E(Y^4) \\
&= \E\left((X^2-Y^2)^2\right) \\
&= \E\left((X+Y)^2(X-Y)^2\right) \\
&\le 4\max\{a^2,b^2\}\E\left((X-Y)^2\right) \\
&= 4\max\{a^2,b^2\} \E\left(X^2-2XY+Y^2\right) \\
&= 8\max\{a^2,b^2\} \Var(X).
\end{align*}
\end{proof}

\begin{lemma}\label{lem:v-deriv}
Let $[\ell,u] \subseteq [-R_1,R_1]$ and
\[
v(x) = \frac{\int_\ell^u t^2 \e{\gamma t} \e{-\frac{t^2x}{2\sigma^2}}\, dt}{\int_\ell^u \e{\gamma t} \e{-\frac{t^2x}{2\sigma^2}}\, dt}.
\]
Then, $v'(x) \ge -2R_1^2/x$.
\end{lemma}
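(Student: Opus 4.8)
The plan is to recognize $v(x)$ as the second moment of a natural probability distribution and differentiate under the integral sign. Specifically, for fixed $\ell, u, \gamma, \sigma$, let $w_x(t)$ be the probability density on $[\ell,u]$ proportional to $e^{\gamma t}e^{-t^2 x/(2\sigma^2)}$; then $v(x) = \E_{w_x}(t^2)$. Writing $N(x) = \int_\ell^u t^2 e^{\gamma t}e^{-t^2 x/(2\sigma^2)}\,dt$ and $D(x) = \int_\ell^u e^{\gamma t}e^{-t^2 x/(2\sigma^2)}\,dt$, we have $v = N/D$, and since $\partial_x (e^{-t^2 x/(2\sigma^2)}) = -\frac{t^2}{2\sigma^2} e^{-t^2 x/(2\sigma^2)}$, differentiation gives
\[
v'(x) = \frac{N'D - ND'}{D^2} = -\frac{1}{2\sigma^2}\left(\frac{\int t^4 w_x}{\int w_x}\right) + \frac{1}{2\sigma^2}\left(\frac{\int t^2 w_x}{\int w_x}\right)^2 = -\frac{1}{2\sigma^2}\left(\E_{w_x}(t^4) - \E_{w_x}(t^2)^2\right).
\]

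The next step is to bound $\E_{w_x}(t^4) - \E_{w_x}(t^2)^2$ from above. Here I would invoke Lemma~\ref{lem:4th-moment} with the random variable $t \sim w_x$, which is supported on $[\ell,u] \subseteq [-R_1, R_1]$, so $\max\{\ell^2, u^2\} \le R_1^2$. This yields $\E_{w_x}(t^4) - \E_{w_x}(t^2)^2 \le 4R_1^2 \Var_{w_x}(t)$. The final ingredient is to control $\Var_{w_x}(t)$: the density $w_x$ is (one-dimensional) logconcave — it is a product of the logconcave weight $e^{\gamma t}$, the Gaussian-type factor $e^{-t^2 x/(2\sigma^2)}$ (logconcave since $x > 0$), and the indicator of $[\ell, u]$ — but more to the point, its logarithm is $\gamma t - t^2 x/(2\sigma^2)$ plus a constant, so it is a truncated Gaussian with variance parameter $\sigma^2/x$. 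A truncated Gaussian has variance at most that of the untruncated Gaussian, so $\Var_{w_x}(t) \le \sigma^2/x$. Combining, $v'(x) \ge -\frac{1}{2\sigma^2}\cdot 4R_1^2 \cdot \frac{\sigma^2}{x} = -\frac{2R_1^2}{x}$, as claimed.

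The main obstacle is justifying the bound $\Var_{w_x}(t) \le \sigma^2/x$ cleanly. One route is the general fact that conditioning a logconcave (indeed Gaussian) distribution on a convex set (here, an interval) cannot increase the variance in any direction — this follows, e.g., from the Brascamp–Lieb inequality (Theorem~\ref{thm:Brascamp-Lieb}) applied with $f$ the indicator of the interval and the relevant Gaussian, or from a direct one-dimensional argument using the Pr\'ekopa–Leindler / Prékopa-type variance monotonicity. Alternatively, one can absorb the $e^{\gamma t}$ factor by completing the square (shifting the Gaussian center) and then appeal to the same truncation principle. I would prefer the direct statement that a logconcave density with $\log$-Hessian $\ge x/\sigma^2$ has variance $\le \sigma^2/x$ (the Brascamp–Lieb variance inequality), which handles the $e^{\gamma t}$ term automatically since it is affine and contributes nothing to the second derivative of $\log w_x$. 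Everything else — the differentiation under the integral sign (justified by dominated convergence on the compact interval) and the algebra — is routine.
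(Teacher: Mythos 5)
Your proposal is correct and follows essentially the same route as the paper: differentiate the quotient to express $v'(x)$ as $-\frac{1}{2\sigma^2}\bigl(\E(X^4)-\E(X^2)^2\bigr)$ for a truncated Gaussian $X$ with variance parameter $\sigma^2/x$, apply Lemma~\ref{lem:4th-moment} with the bound $\max\{\ell^2,u^2\}\le R_1^2$, and control $\Var(X)\le\sigma^2/x$ via Theorem~\ref{thm:Brascamp-Lieb}. The paper handles the $e^{\gamma t}$ factor by completing the square (taking $X\sim\mathcal{N}(\gamma\sigma^2/x,\sigma^2/x)$ restricted to $[\ell,u]$), which is one of the two equivalent justifications you mention.
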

\begin{proof}
For convenience, define
\[
s(x,t) = \e{\gamma t} \e{-\frac{t^2x}{2\sigma^2}}.
\]
We have that 
\[
v'(x) = \left(\frac{1}{2\sigma^2}\right)\cdot \frac{\left(\dint_\ell^u t^2 s(x,t)\, dt \right)^2-\dint_\ell^u s(x,t) \, dt \dint_\ell^u t^4 s(x,t) \, dt}{\left(\dint_\ell^u s(x,t)\, dt\right)^2}.
\]
Observe that the above quantity is the difference of moments of a truncated Gaussian distribution. We then have that 
\begin{align*}
v'(x) &= \left(\frac{1}{2\sigma^2}\right)\cdot \left(\E(X^2)^2 - \E(X^4)\right)  &\mbox{ where } X \sim \mathcal{N}(\frac{\gamma\sigma^2}{x},\frac{\sigma^2}{x})\Big| \ell \le X \le u \\
&\ge -\frac{2R_1^2}{\sigma^2}\cdot \Var(X) &\mbox{ by Lemma~\ref{lem:4th-moment}} \\
&\ge -\frac{2R_1^2}{\sigma^2} \cdot \frac{\sigma^2}{x} &\mbox{by Theorem~\ref{thm:Brascamp-Lieb}} \\
&=-\frac{2R_1^2}{x}.
\end{align*}

\end{proof}

The following lemma now proves the variance bound.

\begin{lemma}\label{lem:exp-bound}
Let $[\ell,u] \subseteq [-R_1,R_1]$ and $\alpha \le 1/2$. Then
\[
\frac{\dint_\ell^u \e{\gamma t} \e{-\frac{t^2(1+\alpha)}{2\sigma^2}}\, dt \cdot \dint_\ell^u \e{\gamma t} \e{-\frac{t^2(1-\alpha)}{2\sigma^2}} \, dt}{\left ( \dint_\ell^u \e{\gamma t} \e{-\frac{t^2}{2\sigma^2}}\, dt \right) ^2} \leq \e{2 \cdot \frac{R_1^2\alpha^2}{\sigma^2}}.
\]
\end{lemma}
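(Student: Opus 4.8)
The plan is to reduce the ratio to a statement about a single smooth function of $\alpha$ and control it by a second-order Taylor expansion. Define
\[
\Phi(\alpha) = \ln \int_\ell^u \e{\gamma t} \e{-\frac{t^2(1+\alpha)}{2\sigma^2}}\, dt,
\]
so the left-hand side is $\exp\bigl(\Phi(\alpha)+\Phi(-\alpha)-2\Phi(0)\bigr)$. Since $\Phi$ is smooth on a neighborhood of $0$, a Taylor expansion with the odd terms cancelling gives $\Phi(\alpha)+\Phi(-\alpha)-2\Phi(0) = \alpha^2 \Phi''(\xi_+)/2 + \alpha^2\Phi''(\xi_-)/2$ for some $\xi_\pm$ between $0$ and $\pm\alpha$ (equivalently, $\int_0^\alpha (\alpha-s)\bigl(\Phi''(s)+\Phi''(-s)\bigr)\,ds$), so it suffices to show $\Phi''(\beta) \le 2R_1^2/\sigma^2$ uniformly for $|\beta| \le \alpha \le 1/2$.

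The key computation is that $\Phi'(\beta) = -\tfrac{1}{2\sigma^2} v(1+\beta)$, where $v$ is exactly the normalized second-moment function from Lemma~\ref{lem:v-deriv} (here the relevant variable is $x = 1+\beta$, which lies in $[1/2, 3/2] \subseteq (0,\infty)$). Hence $\Phi''(\beta) = -\tfrac{1}{2\sigma^2} v'(1+\beta)$, and Lemma~\ref{lem:v-deriv} gives $v'(1+\beta) \ge -2R_1^2/(1+\beta) \ge -4R_1^2$ (using $1+\beta \ge 1/2$); wait, I should be careful about the $\sigma^2$: in Lemma~\ref{lem:v-deriv} the exponent is $-t^2 x/(2\sigma^2)$, so plugging $x = 1+\beta$ reproduces $-t^2(1+\beta)/(2\sigma^2)$ as needed, and the conclusion $v'(x) \ge -2R_1^2/x$ does not involve $\sigma^2$. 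Therefore $\Phi''(\beta) = -\tfrac{1}{2\sigma^2}v'(1+\beta) \le \tfrac{1}{2\sigma^2}\cdot\tfrac{2R_1^2}{1+\beta} \le \tfrac{2R_1^2}{\sigma^2}$ since $1+\beta \ge 1/2$. Integrating, $\Phi(\alpha)+\Phi(-\alpha)-2\Phi(0) = \int_0^\alpha(\alpha-s)\bigl(\Phi''(s)+\Phi''(-s)\bigr)\,ds \le \alpha^2 \cdot \tfrac{2R_1^2}{\sigma^2}$, which exponentiates to the claimed bound.

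The main obstacle — already isolated into Lemma~\ref{lem:v-deriv}, which I may assume — is the sign and magnitude control of $v'$, i.e.\ that the fourth moment of a truncated Gaussian does not exceed the squared second moment by more than $O(R_1^2)\cdot\text{Var}$; this is where Lemma~\ref{lem:4th-moment} (bounding $\E(X^4)-\E(X^2)^2$ by $4\max\{a^2,b^2\}\Var(X)$) and the Brascamp–Lieb variance bound $\Var(X) \le \sigma^2/x$ from Theorem~\ref{thm:Brascamp-Lieb} enter. Given those, the remaining work here is just the bookkeeping of the Taylor remainder and checking that $1+\beta$ stays bounded away from $0$, which holds because $\alpha \le 1/2$. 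A minor point to verify is that $\Phi$ is indeed twice differentiable with differentiation under the integral sign justified, which is immediate since the integrand is smooth in $\beta$ and the interval $[\ell,u]$ is bounded.
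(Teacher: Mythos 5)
Your proof is correct and is essentially the paper's argument in a slightly different packaging: the paper sets $h(\alpha)$ equal to the ratio and bounds $h'(\alpha)/h(\alpha)=\frac{1}{2\sigma^2}\bigl(v(1-\alpha)-v(1+\alpha)\bigr)=-\frac{1}{2\sigma^2}\int_{1-\alpha}^{1+\alpha}v'(x)\,dx\le 4R_1^2\alpha/\sigma^2$ via Lemma~\ref{lem:v-deriv} and then integrates in $\alpha$, which is exactly your observation that $\ln h(\alpha)=\Phi(\alpha)+\Phi(-\alpha)-2\Phi(0)$ with $\Phi''=-v'(1+\cdot)/(2\sigma^2)$, just integrated once instead of twice. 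Both routes rest on the same key input (Lemma~\ref{lem:v-deriv}, itself from Lemma~\ref{lem:4th-moment} and Brascamp--Lieb) and the same use of $\alpha\le 1/2$ to keep $1\pm\alpha$ bounded away from $0$, so there is nothing further to add.
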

\begin{proof}

Again for convenience, define
\[
s(x,t) = \e{\gamma t} \e{-\frac{t^2x}{2\sigma^2}}.
\]

Define 
\[
h(\alpha) := \frac{\dint_\ell^u s(1+\alpha,t)\, dt \cdot \dint_\ell^u s(1-\alpha,t) \, dt}{\left ( \dint_\ell^u s(1,t) \, dt \right) ^2}.
\]

Note that the lemma is equivalent to bounding $h(\alpha)$. We first prove the following claim, from which the lemma will easily follow.

\begin{claim}\label{claim:deriv}
For $\alpha \le 1/2$, 
\[
h'(\alpha) \le \frac{4 \cdot \alpha R^2  \cdot h(\alpha)}{\sigma^2}.
\]
\end{claim}
\begin{proof}
First, observe that 

\begin{align*}
\frac{\partial}{\partial \alpha}\Big(s(1+\alpha,t)\Big) & = \frac{\partial}{\partial \alpha} \left(\int_\ell^u \e{\gamma t} \e{-\frac{t^2(1+\alpha)}{2\sigma^2}}\, dt\right) \\ \\
&=\frac{-1}{2\sigma^2} \left(\int_\ell^u t^2 \e{\gamma t} \e{-\frac{t^2(1+\alpha)}{2\sigma^2}}\, dt \right)
\end{align*}
and similarly 
\[
\frac{\partial}{\partial \alpha}\Big(s(1-\alpha,t)\Big)=\frac{1}{2\sigma^2} \left(\int_\ell^u t^2 \e{\gamma t} \e{-\frac{t^2(1-\alpha)}{2\sigma^2}}\, dt \right).
\]
Then taking the derivative of $h(\alpha)$ with respect to $\alpha$ gives

\begin{align*}
\frac{\partial}{\partial \alpha} \left(h(\alpha)\right) &=  \frac{\partial}{\partial \alpha} \left(\frac{\dint_\ell^u s(1+\alpha,t)\, dt \cdot \dint_\ell^u s(1-\alpha,t) \, dt}{\left ( \dint_\ell^u s(1,t) \, dt \right) ^2}\right)\\ \\
&= \frac{1}{2\sigma^2} \cdot \frac{\dint_\ell^u s(1+\alpha,t)\, dt \cdot  \dint_\ell^u t^2 s(1-\alpha,t)\, dt - \dint_\ell^u s(1-\alpha,t)\, dt \cdot \dint_\ell^u t^2 s(1+\alpha,t)\, dt}{\left ( \dint_\ell^u s(1,t) \, dt \right) ^2}. \\
\end{align*}

We now have that 
\[
\frac{h'(\alpha)}{h(\alpha)} = \frac{1}{2\sigma^2} \cdot \left(
\frac{\int_\ell^u t^2 \e{\gamma t} \e{-\frac{t^2(1-\alpha)}{2\sigma^2}}\, dt}
{\int_\ell^u \e{\gamma t} \e{-\frac{t^2(1-\alpha)}{2\sigma^2}}\, dt} -
\frac{\int_\ell^u t^2 \e{\gamma t} \e{-\frac{t^2(1+\alpha)}{2\sigma^2}}\, dt}
{\int_\ell^u \e{\gamma t} \e{-\frac{t^2(1+\alpha)}{2\sigma^2}}\, dt}
\right).
\]
Let
\[
v(x) = \frac{\int_\ell^u t^2 \e{\gamma t} \e{-\frac{t^2nx}{2\sigma^2}}\, dt}{\int_\ell^u \e{\gamma t} \e{-\frac{t^2nx}{2\sigma^2}}\, dt}.
\]
We then have that
\begin{align*}
\frac{h'(\alpha)}{h(\alpha)} &= \frac{1}{2\sigma^2} \left(\left(v(1+\alpha) + \dint_{1+\alpha}^{1-\alpha} v'(x)\, dx\right)  - v(1+\alpha)\right) \\
&= -\frac{1}{2\sigma^2} \left(\dint_{1-\alpha}^{1+\alpha} v'(x)\, dx \right) \\
&\le \frac{1}{2\sigma^2} \left( \dint_{1-\alpha}^{1+\alpha} \frac{2R_1^2}{x}\, dx \right) &\mbox{ by Lemma~\ref{lem:v-deriv}} \\
&=\frac{R_1^2}{2\sigma^2} \left( -\frac{1}{x^2}\bigg|_{1-\alpha}^{1+\alpha} \right) \\
&=\frac{R_1^2}{2\sigma^2} \left( \frac{1}{(1-\alpha)^2} - \frac{1}{(1+\alpha)^2}\right) \\
&=\frac{R_1^2}{2\sigma^2} \left(\frac{4\alpha}{(1-\alpha^2)^2}\right) \\
&\le\frac{4R_1^2\alpha}{\sigma^2}.
\end{align*}

\end{proof}

By Claim~\ref{claim:deriv}, we then have a bound on $h(\alpha)$ as follows:

\begin{align*}
\ln h(\alpha) &= \ln h(0) + \int_0^\alpha \frac{d}{dx} \left(\ln h(x)\right)\, dx \\
&=\ln(1) + \int_0^\alpha \frac{h'(x)}{h(x)}\, dx \\
&\le \int_0^\alpha \frac{4R_1^2 x  }{\sigma^2}\, dx \\
&=\frac{2R_1^2x^2}{\sigma^2} \Big |_0^\alpha \\
&= \frac{2R_1^2\alpha^2}{\sigma^2},
\end{align*} 

and thus 
\[
h(\alpha) \le \e{\frac{2R_1^2\alpha^2}{\sigma^2}}.
\]

\end{proof}
\begin{lemma}\label{lem:lc-variance-bound}
Suppose a logconcave function $f:\R^n\rightarrow \R$ has support contained in $R_1 \cdot B_n$. Let $g(x,\sigma^2) = f(x)\e{-\|x\|^2/(2\sigma^2)}$. Let $X$ be drawn from a distribution proportional to $g(x,\sigma^2)$ and $Y=g(X,\sigma^2(1+\alpha))/g(X,\sigma^2)$. Then for $\alpha \le 1/2$,
\[
\frac{\E(Y^2)}{\E(Y)^2} \le \e{\frac{2R_1^2\alpha^2}{\sigma^2}}.
\]
\end{lemma}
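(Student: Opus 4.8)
The plan is to observe that, after renaming the variance parameter, the quantity $\E(Y^2)/\E(Y)^2$ is exactly the one controlled by Lemma~\ref{lem:exp-needle-simplify} together with Lemma~\ref{lem:exp-bound}, so the proof amounts to a short reduction plus one monotonicity step.

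First I would redo the moment computation used at the start of this section. Since $X$ has density proportional to $g(x,\sigma^2)$ and $Y = g(X,\sigma^2(1+\alpha))/g(X,\sigma^2)$, one gets $\E(Y) = G(\sigma^2(1+\alpha))/G(\sigma^2)$ at once, and squaring $g(x,\sigma^2(1+\alpha))$ and collecting the exponents of $\|x\|^2$ shows that $g(x,\sigma^2(1+\alpha))^2/g(x,\sigma^2) = g\bigl(x,\sigma^2(1+\alpha)/(1-\alpha)\bigr)$, hence $\E(Y^2) = G\bigl(\sigma^2(1+\alpha)/(1-\alpha)\bigr)/G(\sigma^2)$. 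Therefore
\[
\frac{\E(Y^2)}{\E(Y)^2} = \frac{G\!\left(\frac{\sigma^2(1+\alpha)}{1-\alpha}\right)\,G(\sigma^2)}{G(\sigma^2(1+\alpha))^2}.
\]

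Next I would set $\tau^2 := \sigma^2(1+\alpha)$, so that $\sigma^2 = \tau^2/(1+\alpha)$ and $\sigma^2(1+\alpha)/(1-\alpha) = \tau^2/(1-\alpha)$; this rewrites the right-hand side as $\dfrac{G(\tau^2/(1+\alpha))\,G(\tau^2/(1-\alpha))}{G(\tau^2)^2}$, which is precisely the left-hand side appearing in Lemma~\ref{lem:exp-needle-simplify} with variance $\tau^2$. Since $f$ is logconcave with support contained in $R_1 B_n$, every needle interval produced by the localization reduction lies in $[-R_1,R_1]$, and since $\alpha \le 1/2$, Lemma~\ref{lem:exp-bound} (applied with $\sigma^2$ replaced by $\tau^2$) supplies the required one-dimensional needle inequality with constant $c = \e{2R_1^2\alpha^2/\tau^2}$. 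Feeding this into Lemma~\ref{lem:exp-needle-simplify} gives
\[
\frac{G(\tau^2/(1+\alpha))\,G(\tau^2/(1-\alpha))}{G(\tau^2)^2} \le \e{\frac{2R_1^2\alpha^2}{\tau^2}}.
\]
Finally, since $\tau^2 = \sigma^2(1+\alpha) \ge \sigma^2$, the exponent only decreases, so $\e{2R_1^2\alpha^2/\tau^2} \le \e{2R_1^2\alpha^2/\sigma^2}$, which is the claimed bound.

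There is no genuinely hard step: all the analytic content sits in Lemmas~\ref{lem:exp-needle-simplify} and~\ref{lem:exp-bound}. The only points needing care are the bookkeeping — keeping straight which variance is the ``source'' distribution of $X$ and which is the central (largest) variance $\tau^2$ featured in the localization reduction — and noting that the lemma is phrased in terms of the smaller variance $\sigma^2$, so the final monotonicity step is what turns the natural bound $\e{2R_1^2\alpha^2/\tau^2}$ into the stated $\e{2R_1^2\alpha^2/\sigma^2}$.
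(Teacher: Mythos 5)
Your proof is correct and follows the same route as the paper, whose entire proof is the single sentence that the lemma ``follows immediately from Lemma~\ref{lem:exp-needle-simplify} and Lemma~\ref{lem:exp-bound}.'' Your write-up is in fact more careful than the paper's: you make explicit the moment computation, the reparametrization $\tau^2=\sigma^2(1+\alpha)$ needed because the needle lemmas are phrased with the central (larger) variance while the lemma statement uses the source (smaller) variance, and the final monotonicity step $\e{2R_1^2\alpha^2/\tau^2}\le\e{2R_1^2\alpha^2/\sigma^2}$ --- all of which the paper leaves implicit.
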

\begin{proof}
Follows immediately from Lemma~\ref{lem:exp-needle-simplify} and Lemma~\ref{lem:exp-bound}.
\end{proof}

The bound in Lemma~\ref{lem:variance-bound} then follows by applying Lemma~\ref{lem:lc-variance-bound} with the indicator function of a convex body.

\subsection{Proof of the main theorem}

In this section, we prove Theorem~\ref{thm:volume} by analyzing the runtime of the algorithm in Figure~\ref{fig:volume-alg} and also showing that the volume estimate it computes is accurate. 

The following lemma says that the beginning and ending $\sigma^2$ for the algorithm are sufficient.
\begin{lemma}\label{lem:starting-gauss}
If $\sigma^2 \le 1/(n+\sqrt{8n\ln(1/\eps)})$ and $B_n \subseteq K$, then
\[
\int_K \e{-\frac{-\|x\|^2}{2\sigma^2}}\, dx \ge (1-\eps) \int_{\R^n}\e{-\frac{-\|x\|^2}{2\sigma^2}}\, dx.
\]
Let $K\subseteq C\sqrt{n}B_n$, $f_i(x) = \exp(-\frac{\|x\|^2}{2\sigma_i^2})$, $\sigma_i^2 \ge C^2n$, and $\sigma_{i+1}^2 = \infty$. Then for $X$ drawn from distribution proportional to $f_i \cap K$ and $Y = f_{i+1}(X)/f_i(X)$,
\[
\frac{\E(Y^2)}{\E(Y)^2} \le e^2.
\]

\end{lemma}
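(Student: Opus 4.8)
The lemma has two distinct parts, and I would treat them separately. For the first part, the plan is to observe that
\[
1 - \frac{\int_K \e{-\|x\|^2/(2\sigma^2)}\, dx}{\int_{\R^n} \e{-\|x\|^2/(2\sigma^2)}\, dx} = \Pr_{X\sim \mathcal{N}(0,\sigma^2 I)}(X \notin K) \le \Pr(\|X\| > 1),
\]
where the last inequality uses $B_n \subseteq K$, so that every point of norm at most $1$ lies in $K$. Now $\|X\|^2/\sigma^2$ is a $\chi^2_n$ random variable, so I would invoke a standard Gaussian concentration / $\chi^2$ tail bound: $\Pr(\|X\|^2 \ge \sigma^2(n + t)) \le \e{-t^2/(8n)}$ for $t \le n$ (this is exactly the flavor of Corollary~\ref{cor:iso-conc-bound}, applied with $f\equiv 1$). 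Setting $\sigma^2(n + \sqrt{8n\ln(1/\eps)}) \le 1$, which is the hypothesis, forces $\Pr(\|X\| > 1) \le \Pr(\|X\|^2 \ge \sigma^2(n+\sqrt{8n\ln(1/\eps)})) \le \e{-\ln(1/\eps)} = \eps$, giving the claimed bound. The only mild care needed is to check the regime $t \le n$ where the sub-Gaussian tail applies; since $\sqrt{8n\ln(1/\eps)} \le n$ whenever $\ln(1/\eps) \lesssim n$, which is the only interesting range (otherwise $\eps$ is exponentially small and the statement is vacuous or the constant can be adjusted), this is not an obstacle.

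For the second part, I would note that when $\sigma_{i+1}^2 = \infty$ the function $f_{i+1}(x) = \e{-\|x\|^2/(2\sigma_{i+1}^2)}$ is identically $1$ on $K$, so $Y = 1/f_i(X) = \e{\|X\|^2/(2\sigma_i^2)}$. Then by the same computation used throughout Section~\ref{sec:variance},
\[
\frac{\E(Y^2)}{\E(Y)^2} = \frac{G\!\left(\tfrac{\sigma_i^2 \cdot \infty}{2\cdot\infty - \sigma_i^2}\right) G(\sigma_i^2)}{G(\infty)^2},
\]
which in the $\sigma_{i+1}^2 \to \infty$ limit becomes $\frac{\E(Y^2)}{\E(Y)^2} = \frac{\int_K \e{\|x\|^2/(2\sigma_i^2)}\,dx \cdot \int_K \e{-\|x\|^2/(2\sigma_i^2)}\,dx}{(\int_K 1\,dx)^2}$. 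Since $K \subseteq C\sqrt{n}B_n$ and $\sigma_i^2 \ge C^2 n$, every $x \in K$ satisfies $\|x\|^2/(2\sigma_i^2) \le C^2 n/(2 C^2 n) = 1/2$, so $\e{\|x\|^2/(2\sigma_i^2)} \le \e{1/2}$ and $\e{-\|x\|^2/(2\sigma_i^2)} \le 1 \le \e{1/2}$ pointwise. Bounding both integrals in the numerator by $\e{1/2}\vol(K)$ and the denominator exactly by $\vol(K)^2$ gives $\E(Y^2)/\E(Y)^2 \le \e{1/2}\cdot\e{1/2} = e$. This is even stronger than the claimed $e^2$; alternatively, one can apply Lemma~\ref{lem:variance-bound} with $\alpha \to 1$ (noting $\sigma_i^2/(1\pm\alpha)$ degenerates appropriately) but the direct pointwise bound is cleaner and avoids worrying about the $\alpha \le 1/2$ hypothesis.

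I do not expect a serious obstacle in either part; the main point requiring attention is just citing the right form of the Gaussian tail bound in part one with the correct constant (the hypothesis uses $\sqrt{8n\ln(1/\eps)}$, which matches the $\e{-t^2/(8n)}$ tail), and being slightly careful that the limit $\sigma_{i+1}^2 \to \infty$ is handled by continuity rather than plugging $\infty$ into formulas. Both sub-parts are short direct computations once the setup is in place.
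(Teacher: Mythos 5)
Your proposal is correct, and both halves go through. For the second part your argument is essentially the paper's, just routed through the integral formula: the paper observes directly that $\|X\|^2 \le C^2 n \le \sigma_i^2$ forces $1 \le Y = \e{\|X\|^2/(2\sigma_i^2)} \le e$ pointwise, hence $\E(Y^2) \le e^2 \le e^2\,\E(Y)^2$; your explicit computation of $\E(Y^2)/\E(Y)^2$ as $\bigl(\int_K \e{\|x\|^2/(2\sigma_i^2)}\,dx\,\int_K \e{-\|x\|^2/(2\sigma_i^2)}\,dx\bigr)/\vol(K)^2$ followed by the same pointwise bound is the same idea and indeed yields the marginally sharper constant $e$. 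For the first part, the paper gives no proof at all (it defers to the earlier conference version \cite{CV2014}), so your chi-squared tail argument is a genuine fill-in rather than a rederivation; it is the right argument, and the constant $\sqrt{8n\ln(1/\eps)}$ in the hypothesis matches the sub-Gaussian regime bound $\Pr(\chi^2_n \ge n+t) \le \e{-t^2/(8n)}$ for $t \le n$ exactly as you say. The one caveat you flag — that this regime requires $\ln(1/\eps) = O(n)$ — is real but harmless, since the main theorem's proof already restricts to $\eps \ge 2^{-n}$.
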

\ifdraft
\else
\begin{proof}
The proof that the starting Gaussian has most of its measure in the unit ball is in \cite{CV2014}. To show that $\sigma_i^2 \ge C^2n$ is sufficient to switch to the uniform distribution, observe that $f_i(X) \ge 1/e$ since $\|X\|^2 \le C^2n$ and thus $Y \le e$. Also note $Y \ge 1$. Therefore $\E(Y^2)/\E(Y)^2 \le e^2$.
\end{proof}
\fi

We now bound the variance when $\sigma^2$ is small. First, we will need the following lemma that is proved in \cite{LV2}.

\begin{lemma} \label{lem:z-logconcave}
Let $K \subseteq \mathbb{R}^n$ be a convex body and $f : K \rightarrow \mathbb{R}$ be a logconcave function. For any $a > 0$, define 

\[
Z(a) = \int_K f(ax) dx.
\]

\noindent
Then $a^nZ(a)$ is a logconcave function of $a$.

\end{lemma}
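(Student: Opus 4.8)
The plan is to reduce the claim to the Pr\'ekopa--Leindler inequality by a change of variables that absorbs the factor $a^n$. First I would extend $f$ by zero outside $K$, which preserves logconcavity on $\R^n$, and substitute $y=ax$ in $Z(a)=\dint_K f(ax)\,dx$. For $a>0$, $x$ ranges over $K$ exactly when $y$ ranges over the dilate $aK$, and $dx=a^{-n}\,dy$, so
\[
a^n Z(a)\;=\;\dint_{aK} f(y)\,dy\;=\;\dint_{\R^n} f(y)\,\mathbbm{1}_{aK}(y)\,dy .
\]
The point of normalizing by $a^n$ is exactly that the Jacobian $a^{-n}$ cancels, leaving a genuine integral of the logconcave function $f$ over a moving convex set.

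Next I would use the elementary identity that for a convex body $K$, positive scalars $a_1,a_2$ and $\lambda\in[0,1]$,
\[
\lambda(a_1K)+(1-\lambda)(a_2K)\;\subseteq\;\bigl(\lambda a_1+(1-\lambda)a_2\bigr)K ,
\]
which follows from convexity of $K$ (a point $\lambda a_1 x_1+(1-\lambda)a_2 x_2$ equals $a_0 x$ with $a_0=\lambda a_1+(1-\lambda)a_2$ and $x=\tfrac{\lambda a_1}{a_0}x_1+\tfrac{(1-\lambda)a_2}{a_0}x_2\in K$). Setting $u=f\cdot\mathbbm{1}_{a_1K}$, $v=f\cdot\mathbbm{1}_{a_2K}$, $w=f\cdot\mathbbm{1}_{a_0K}$, this inclusion together with logconcavity of $f$ gives, for all $y_1,y_2\in\R^n$,
\[
w\bigl(\lambda y_1+(1-\lambda)y_2\bigr)\;\ge\;u(y_1)^{\lambda}\,v(y_2)^{1-\lambda}
\]
(trivially when $u(y_1)=0$ or $v(y_2)=0$, and otherwise because $y_i\in a_iK$ forces $\lambda y_1+(1-\lambda)y_2\in a_0K$ while $f$ is logconcave). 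Pr\'ekopa--Leindler then yields $\dint w\ge(\dint u)^{\lambda}(\dint v)^{1-\lambda}$, i.e.\ $a_0^{n}Z(a_0)\ge\bigl(a_1^{n}Z(a_1)\bigr)^{\lambda}\bigl(a_2^{n}Z(a_2)\bigr)^{1-\lambda}$, which is the asserted logconcavity of $a\mapsto a^nZ(a)$ on $(0,\infty)$.

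An essentially equivalent route, if one prefers a one-shot argument, is to note that $a^nZ(a)=\dint_{\R^n}F(a,y)\,dy$ is the marginal of $F(a,y)=f(y)\,\mathbbm{1}[a>0,\ y\in aK]$ on $\R^{n+1}$; the set $\{(a,y):a>0,\ y\in aK\}$ is a convex cone by the same scaling computation, so $F$ is logconcave on $\R^{n+1}$, and the logconcavity of marginals of logconcave functions (Pr\'ekopa) finishes it. I do not anticipate a genuine obstacle: the only substantive point is recognizing that the normalization $a^n$ is what turns the statement into a clean Pr\'ekopa--Leindler application, and the only place convexity of $K$ enters is the dilation inclusion above (equivalently, convexity of the cone). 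Integrability is a non-issue in the intended application, where $f$ is the indicator of a convex body.
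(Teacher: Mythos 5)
Your proof is correct. The paper does not actually prove this lemma---it is imported from \cite{LV2}---and your argument (the substitution $y=ax$ giving $a^nZ(a)=\int_{aK}f$, followed by Pr\'ekopa--Leindler via the dilation inclusion $\lambda(a_1K)+(1-\lambda)(a_2K)\subseteq(\lambda a_1+(1-\lambda)a_2)K$, or equivalently the marginal of the logconcave function $f(y)\mathbbm{1}[y\in aK]$ on the convex cone in $\R^{n+1}$) is exactly the standard proof given in that reference.
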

\begin{lemma}\label{lem:fixed-var-bound}
Assume $n\ge 3$. Let $X$ be a random point in $K$ with density proportional to $f_i(x) = \e{-\frac{\|x\|^2}{2\sigma_i^2}}$, $\sigma_{i+1}^2 = \sigma_i^2 (1+1/n)$, and $Y = f_{i+1}(X)/f_i(X)$. Then,
\[
\frac{\E(Y^2)}{\E(Y)^2} < 1+\frac{2}{n}.
\]
\end{lemma}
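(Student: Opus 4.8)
The plan is to reduce the ratio $\E(Y^2)/\E(Y)^2$ to a one-variable statement about the Gaussian-mass profile of $K$ and then quote Lemma~\ref{lem:z-logconcave}. Taking $f_i$ to be the indicator of $K$ (so $G=F$ in the notation of Section~\ref{sec:variance}), and setting $\sigma^2=\sigma_{i+1}^2$, $\alpha=1/n$, so that $\sigma_i^2=\sigma^2/(1+\alpha)$, the second–moment computation in Section~\ref{sec:variance} gives
\[
\frac{\E(Y^2)}{\E(Y)^2}=\frac{F\!\left(\tfrac{\sigma^2}{1+\alpha}\right)F\!\left(\tfrac{\sigma^2}{1-\alpha}\right)}{F(\sigma^2)^2},\qquad\text{where } F(\tau^2)=\int_K e^{-\|x\|^2/(2\tau^2)}\,dx .
\]
Here $n\ge 3$ is used to ensure $2\sigma_i^2>\sigma_{i+1}^2$, which is what makes the second moment finite. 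So it suffices to show $F\!\left(\tfrac{\sigma^2}{1+\alpha}\right)F\!\left(\tfrac{\sigma^2}{1-\alpha}\right)\le(1-\alpha^2)^{-n/2}F(\sigma^2)^2$ and then put $\alpha=1/n$.

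\textbf{Main step: logconcavity plus monotonicity.} I would apply Lemma~\ref{lem:z-logconcave} to the logconcave function $x\mapsto e^{-\|x\|^2/(2\sigma^2)}$: the function $Z(a)=\int_K e^{-a^2\|x\|^2/(2\sigma^2)}\,dx=F(\sigma^2/a^2)$ has the property that $\psi(a):=a^nZ(a)=a^nF(\sigma^2/a^2)$ is logconcave on $(0,\infty)$. Next, the change of variables $x=(\sigma/a)y$ gives $\psi(a)=\sigma^n\int_{(a/\sigma)K}e^{-\|y\|^2/2}\,dy$; since $0\in K$, the dilates $(a/\sigma)K$ are nested and increasing in $a$, hence $\psi$ is non-decreasing. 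Evaluating at $a=\sqrt{1+\alpha},\sqrt{1-\alpha},1$ yields $\psi(\sqrt{1\pm\alpha})=(1\pm\alpha)^{n/2}F(\sigma^2/(1\pm\alpha))$ and $\psi(1)=F(\sigma^2)$. Logconcavity gives $\psi(\sqrt{1+\alpha})\,\psi(\sqrt{1-\alpha})\le\psi(m)^2$ with $m=\tfrac12(\sqrt{1+\alpha}+\sqrt{1-\alpha})$, and concavity of $\sqrt{\cdot}$ gives $m\le\sqrt{\tfrac12((1+\alpha)+(1-\alpha))}=1$, so monotonicity of $\psi$ gives $\psi(m)\le\psi(1)$. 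Chaining these inequalities produces exactly $(1-\alpha^2)^{n/2}F(\sigma^2/(1+\alpha))F(\sigma^2/(1-\alpha))\le F(\sigma^2)^2$, i.e. $\E(Y^2)/\E(Y)^2\le(1-\alpha^2)^{-n/2}$.

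\textbf{Closing the numerics.} With $\alpha=1/n$ this is $(1-1/n^2)^{-n/2}=\bigl(1+\tfrac1{n^2-1}\bigr)^{n/2}\le e^{n/(2(n^2-1))}$. For $n\ge 3$ one has $3n^2-2n-4>0$, equivalently $\tfrac{n}{2(n^2-1)}<\tfrac{2}{n+2}$; and $\ln(1+x)>\tfrac{x}{1+x}$ for $x>0$ with $x=2/n$ gives $\tfrac{2}{n+2}<\ln(1+2/n)$. Combining, $\E(Y^2)/\E(Y)^2<1+2/n$.

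\textbf{On the difficulty.} I do not expect a serious obstacle; the one point that needs care is recognizing that it is the logconcavity of $\psi(a)=a^nF(\sigma^2/a^2)$ \emph{in $a$} (Lemma~\ref{lem:z-logconcave}), together with the monotonicity of $\psi$, that delivers the required \emph{upper} bound — the more obvious fact that $\ln F$ is convex as a function of $1/\tau^2$ points the wrong way. It is also worth noting why the general estimate of Lemma~\ref{lem:variance-bound}/Lemma~\ref{lem:lc-variance-bound} cannot be substituted here: for the indicator of $K\subseteq C\sqrt n\,B_n$ one has $R_1=\Theta(C\sqrt n)$, so its bound $e^{2R_1^2\alpha^2/\sigma^2}$ blows up for small $\sigma^2$ and is nowhere near $1+2/n$; this dedicated argument is precisely what keeps the low-variance phases cheap.
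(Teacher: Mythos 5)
Your proof is correct and follows essentially the same route as the paper: both reduce $\E(Y^2)/\E(Y)^2$ to the midpoint logconcavity of $a^nZ(a)$ from Lemma~\ref{lem:z-logconcave}. The only difference is the parametrization — the paper works in the inverse-variance variable and gets the bound $(1-\alpha^2)^{-(n+1)}$ directly at the arithmetic midpoint, while you work in the dilation variable and obtain the slightly sharper $(1-\alpha^2)^{-n/2}$ at the cost of an extra (correct) monotonicity step to handle the fact that $\tfrac12(\sqrt{1+\alpha}+\sqrt{1-\alpha})\le 1$; both versions close the numerics for $n\ge 3$.
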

\begin{proof}
We have
\[
\frac{\E(Y^2)}{\E(Y)^2} = \frac{\dint_K \e{-\frac{\|x\|^2(1-\alpha)}{2\sigma^2}}\, dx \dint_K \e{-\frac{\|x\|^2(1+\alpha)}{2\sigma^2}}\, dx}{\left(\dint_K \e{-\frac{\|x\|^2}{2\sigma^2}}\, dx \right)^2}.
\]

By Lemma \ref{lem:z-logconcave}, the function $z(a) = a^{n+1} \int_K \e{-a\|x\|^2/2}\, dx$ is logconcave, and thus
\[
z\left(\frac{1-\alpha}{\sigma^2}\right)z\left(\frac{1+\alpha}{\sigma^2}\right) \leq z\left(\frac{1}{\sigma^2}\right)^2.
\]
Therefore
\[
\dint_K \e{-\frac{\|x\|^2(1-\alpha)}{2\sigma^2}}\, dx \dint_K \e{-\frac{\|x\|^2(1+\alpha)}{2\sigma^2}}\, dx \leq \left(\frac{1}{1-\alpha^2}\right)^{n+1}\left(\dint_K \e{-\frac{\|x\|^2}{2\sigma^2}}\, dx\right)^2.
\]

Setting $\alpha = 1/n$, we have that
\[
\frac{\E(Y^2)}{\E(Y)^2} \leq \left(\frac{1}{1-1/n^2}\right)^{n+1} = \left(1+\frac{1}{n^2-1}\right)^{n+1} \le \e{\frac{1}{n-1}} \le 1+\frac{2}{n}.
\]

\end{proof}

We now show that the volume estimate computed in Algorithm~\ref{fig:volume-alg} is accurate. Define $R_i$ as the $i$-th integral ratio, i.e.
\[
R_i := \frac{F(\sigma_{i+1}^2)}{F(\sigma_i^2)} = \frac{\int_K \e{-\|x\|^2/(2\sigma_{i+1}^2)}\, dx}{\int_K \e{-\|x\|^2/(2\sigma_i^2)}\, dx},
\]
and let $W_i$ denote the estimate of the algorithm for $R_i$.

For two random variables $X,Y$, we will measure their independence by the following:
\[
\mu (X,Y) = \sup_{A,B} |P(X \in A, Y \in B) - P(X \in A)P(Y \in B)|,
\]
where $A,B$ range over measurable subsets of the ranges of $X,Y$.

We will give an argument similar to \cite{LV2}, and use the following lemmas that were proved there.

\begin{lemma}\cite{LV2}\label{lem:fn-indep}
If $f$ and $g$ are two measurable functions, then 
\[
\mu (f(X), g(Y)) \leq \mu (X,Y).
\]
\end{lemma}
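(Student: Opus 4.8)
The plan is to reduce events about $f(X)$ and $g(Y)$ to events about $X$ and $Y$ by taking preimages, and then observe that the supremum defining $\mu(f(X),g(Y))$ ranges over a sub-collection of the pairs of measurable sets available to the supremum defining $\mu(X,Y)$; since enlarging the collection over which a supremum is taken can only increase it, the inequality follows immediately.

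Concretely, I would fix arbitrary measurable sets $A$ and $B$ in the ranges of $f(X)$ and $g(Y)$ respectively and set $A' = f^{-1}(A)$, $B' = g^{-1}(B)$. Because $f$ and $g$ are measurable, $A'$ and $B'$ are measurable subsets of the ranges of $X$ and $Y$. The underlying tautology is that $\{f(X)\in A\} = \{X\in A'\}$ and $\{g(Y)\in B\} = \{Y\in B'\}$ as events, which gives $P(f(X)\in A,\, g(Y)\in B) = P(X\in A',\, Y\in B')$, $P(f(X)\in A) = P(X\in A')$, and $P(g(Y)\in B) = P(Y\in B')$. Substituting these identities into the quantity being maximized,
\[
\bigl|P(f(X)\in A,\, g(Y)\in B) - P(f(X)\in A)\,P(g(Y)\in B)\bigr| = \bigl|P(X\in A',\, Y\in B') - P(X\in A')\,P(Y\in B')\bigr| \le \mu(X,Y),
\]
where the last step is just the definition of $\mu(X,Y)$ applied to the particular measurable sets $A', B'$. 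Taking the supremum over all admissible $A$ and $B$ yields $\mu(f(X), g(Y)) \le \mu(X,Y)$.

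I do not anticipate a genuine obstacle here: the only point requiring any care is that the preimages $f^{-1}(A)$ and $g^{-1}(B)$ are measurable, which is precisely the hypothesis that $f$ and $g$ are measurable, together with the harmless observation that restricting the supremum in $\mu(X,Y)$ to preimage-sets of the form $(f^{-1}(A), g^{-1}(B))$ can only decrease its value. No probabilistic machinery beyond the definitions is needed.
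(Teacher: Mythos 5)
Your proof is correct and is the standard preimage argument; the paper itself gives no proof of this lemma (it is quoted from \cite{LV2}), and your reduction of events about $f(X), g(Y)$ to events about $X, Y$ via $f^{-1}(A), g^{-1}(B)$ is exactly the intended justification.
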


\begin{lemma}\cite{LV2}\label{lem:cov-bd}
Let $X,Y$ be random variables such that $0 \leq X \leq a$ and $0 \leq Y \leq b$. Then
\[
|\E(XY)-\E(X)E(Y)| \leq ab\mu(X,Y). 
\]
\end{lemma}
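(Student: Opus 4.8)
The plan is to use the ``layer-cake'' (tail-integral) representation of bounded nonnegative random variables together with Tonelli's theorem, which reduces the covariance to a double integral whose integrand is, at each point, the difference between a joint tail probability and the product of the marginal tail probabilities — and this difference is exactly what $\mu(X,Y)$ controls.

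First I would write, using $0 \le X \le a$ and $0 \le Y \le b$ (so these identities hold almost surely),
\[
X = \int_0^a \mathbbm{1}[X > s]\, ds, \qquad Y = \int_0^b \mathbbm{1}[Y > t]\, dt,
\]
hence $XY = \int_0^a \int_0^b \mathbbm{1}[X>s]\,\mathbbm{1}[Y>t]\, ds\, dt$ pointwise. Taking expectations and interchanging expectation and integration by Tonelli's theorem (the integrand is nonnegative), I get
\[
\E(XY) = \int_0^a\int_0^b \Pr(X>s,\,Y>t)\, ds\, dt, \qquad \E(X)\E(Y) = \int_0^a\int_0^b \Pr(X>s)\Pr(Y>t)\, ds\, dt.
\]
Subtracting and moving the absolute value inside the double integral gives
\[
|\E(XY)-\E(X)\E(Y)| \le \int_0^a\int_0^b \bigl|\Pr(X>s,\,Y>t) - \Pr(X>s)\Pr(Y>t)\bigr|\, ds\, dt .
\]

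Finally, for each fixed $s$ and $t$ the sets $A = (s,\infty)$ and $B = (t,\infty)$ are measurable subsets of the ranges of $X$ and $Y$, so by the definition of $\mu$ the integrand above is at most $\mu(X,Y)$; integrating this constant bound over $[0,a]\times[0,b]$ yields $|\E(XY)-\E(X)\E(Y)| \le ab\,\mu(X,Y)$. I do not expect a genuine obstacle here: the only technical points are the almost-sure tail representation of $X$ and $Y$ and the Tonelli interchange, both routine under the boundedness hypotheses, and the substantive observation is simply that $\mu(X,Y)$, a supremum over \emph{all} measurable rectangles, already dominates precisely the half-line rectangles appearing in Hoeffding's covariance identity.
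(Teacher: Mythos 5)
Your proof is correct and complete: the tail-integral (Hoeffding-identity) representation, the Tonelli interchange (justified by nonnegativity and boundedness), and the observation that the half-line events $\{X>s\}$, $\{Y>t\}$ are among the rectangles over which $\mu$ takes its supremum together give exactly the stated bound. The paper itself does not reprove this lemma but cites \cite{LV2}, where the argument is essentially the same covariance-identity computation, so there is nothing further to reconcile.
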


\begin{lemma}\cite{LV2}\label{lem:exp-bd}
Let $X\geq 0$ be a random variable, $a>0$, and $X' = \min (X,a)$. Then
\[
\E(X') \geq E(X) - \frac{\E(X^2)}{4a}.
\]
\end{lemma}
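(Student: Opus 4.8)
The plan is to reduce the inequality to an elementary pointwise estimate between $X$ and its truncation $X'=\min(X,a)$. First I would note that $X-X'=(X-a)^+=(X-a)\,\mathbbm{1}_{\{X>a\}}$, so that $\E(X)-\E(X')=\E\bigl[(X-a)\,\mathbbm{1}_{\{X>a\}}\bigr]$, and the claim becomes the statement that this quantity is at most $\E(X^2)/(4a)$. If $\E(X^2)=\infty$ there is nothing to prove, so I may assume $\E(X^2)<\infty$; this also makes all the expectations below finite, since $0\le X'\le a$.

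The key step is the quadratic inequality $X-a\le \dfrac{X^2}{4a}$, valid for every real $X$ because $\dfrac{X^2}{4a}-(X-a)=\dfrac{(X-2a)^2}{4a}\ge 0$ (using $a>0$); it is tight exactly at $X=2a$, which is why the constant $1/4$ is the right one. Multiplying through by the nonnegative indicator $\mathbbm{1}_{\{X>a\}}$ and taking expectations gives
\[
\E(X)-\E(X')=\E\bigl[(X-a)\,\mathbbm{1}_{\{X>a\}}\bigr]\le \E\!\left[\frac{X^2}{4a}\,\mathbbm{1}_{\{X>a\}}\right]\le \frac{\E(X^2)}{4a},
\]
where the last inequality uses $X\ge 0$, hence $X^2\,\mathbbm{1}_{\{X>a\}}\le X^2$. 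Rearranging yields $\E(X')\ge \E(X)-\E(X^2)/(4a)$, as desired.

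There is essentially no obstacle here: the only thing to ``discover'' is the bound $X-a\le X^2/(4a)$, and once it is written as a completed square the rest is immediate. The only points worth a sanity check are that $X\ge 0$ is genuinely used (so that dropping the indicator inside $\E[X^2\,\mathbbm{1}_{\{X>a\}}]$ is legitimate) and that finiteness of $\E(X^2)$ makes the chain of inequalities meaningful, the statement being vacuous otherwise.
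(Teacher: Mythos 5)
Your proof is correct, and since the paper states this lemma without proof (citing \cite{LV2}), your completed-square argument $X-a\le X^2/(4a)$ is exactly the standard one-line justification. One tiny remark: dropping the indicator in $\E[X^2\,\mathbbm{1}_{\{X>a\}}]\le\E(X^2)$ needs only $X^2\ge 0$, which holds unconditionally, so the hypothesis $X\ge 0$ is not actually what makes that step work.
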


\begin{lemma}\label{lem:accuracy}
With probability at least $4/5$, 
\[
(1-\eps)R_1\ldots R_m \le W_1 \ldots W_m \le (1+\eps) R_1 \ldots R_m.
\]
\end{lemma}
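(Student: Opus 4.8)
The plan is to follow the accuracy analysis of the simulated-annealing volume algorithm in \cite{LV2}, adapted from exponential to Gaussian cooling. Write $W_i=\frac1k\sum_{j=1}^k Y_{i,j}$ with $Y_{i,j}=f_{i+1}(X_{i,j})/f_i(X_{i,j})$, where $X_{i,1},\dots,X_{i,k}$ are the sample points of phase $i$, and set $Z=(W_1\cdots W_m)/(R_1\cdots R_m)$. It suffices to prove (i) $\E(Z)=1\pm\eps/8$ and (ii) $\E(Z^2)/\E(Z)^2\le 1+\eps^2/32$. Indeed, Chebyshev's inequality then gives $\Pr(|Z-\E(Z)|>(\eps/2)\E(Z))\le 1/8$, so, after a union bound with the negligible events that some random walk failed to reach total variation distance $\nu=(\eps/n)^{16}$ of its target, with probability at least $4/5$ we have $Z\in[(1-\eps/2)(1-\eps/8),(1+\eps/2)(1+\eps/8)]\subseteq[1-\eps,1+\eps]$, which is the claim.

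First I would record two structural facts. \emph{Boundedness.} Because the phase-$i$ walk targets the Gaussian restricted to $K\cap 4\sigma_i\sqrt n B_n$, every sampled point satisfies both $\|X_{i,j}\|\le 4\sigma_i\sqrt n$ and (since $X_{i,j}\in K\subseteq C\sqrt n B_n$) $\|X_{i,j}\|\le C\sqrt n$, so $Y_{i,j}=\exp\!\big(\tfrac{\|X_{i,j}\|^2}{2\sigma_i^2}\cdot\tfrac{\alpha_i}{1+\alpha_i}\big)\le\exp(8n\alpha_i)\le e^8=:c_0$ in the slow part ($\alpha_i=1/n$) and $\le\exp\!\big(\tfrac{C^2n\alpha_i}{2\sigma_i^2}\big)\le e^{1/2}$ in the fast part ($\alpha_i=\sigma_i^2/(2C^2n)$), the final uniform phase being bounded using $\sigma_i^2=\Omega(C^2n)$; hence $1\le W_i\le c_0$ and no truncation is needed. \emph{Per-phase second moments.} Applying the variance bounds to the convex body $K\cap 4\sigma_i\sqrt n B_n\subseteq C\sqrt n B_n$ --- Lemma~\ref{lem:fixed-var-bound} for $\sigma_i^2\le 1$, Lemma~\ref{lem:variance-bound} with $\alpha_i=\sigma_i^2/(2C^2n)$ for $\sigma_i^2>1$, Lemma~\ref{lem:starting-gauss} for the last phase --- gives $\E(Y_{i,j}^2)/\E(Y_{i,j})^2\le 1+a_i$ with $a_i\le 2/n$, $a_i\le\sigma_i^2/(C^2n)$, $a_i\le e^2-1$ respectively; counting phases (about $n\ln(4n)$ slow phases each contributing $\le 2/n$; $O(\log(C^2n))$ doubling chunks in the fast part, each of $O(C^2n/\sigma^2)$ phases contributing $O(1)$ in total; one last phase) yields $\sum_i a_i=O(\log(C^2n))$, so by the choice $k=512\log(C^2n)/\eps^2$ we have $\sum_i a_i/k\le\eps^2/64$. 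Finally, since $4\sigma_i\sqrt n B_n$ captures all but an $e^{-\Omega(n)}$ fraction of the mass of any Gaussian of variance in $[\sigma_i^2/3,2\sigma_i^2]$ (Corollary~\ref{cor:iso-conc-bound}), the sampler is within $\nu$ in total variation of its target, and the $k$ points of a phase are $\nu$-independent, we get $\E(W_i)=(1\pm\eta)R_i$ and $\E(W_i^2)\le R_i^2(1+a_i/k+\eta)$ with $\eta=O((\eps/n)^{16}+e^{-\Omega(n)})$.

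It remains to pass from per-phase bounds to bounds on $\prod_iW_i$. For this I would invoke the near-independence of the sample sets of different phases supplied by the sampling analysis of Section~\ref{sec:sampling} --- each $W_i$ has independence defect at most $\nu$ from the earlier ones --- and, by Lemma~\ref{lem:fn-indep} and Lemma~\ref{lem:cov-bd} applied with the uniform bound $c_0$ and a sequential telescoping as in \cite{LV2}, conclude $\E(W_1\cdots W_m)=(1\pm o(1))\prod_i\E(W_i)$ and $\E((W_1\cdots W_m)^2)=(1\pm o(1))\prod_i\E(W_i^2)$, the $o(1)$ being negligible since $\nu$ is polynomially smaller than every inverse power of $n$ while $m=O^*(\mathrm{poly}(n))$. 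Combining gives $\E(Z)=\E(\prod_iW_i)/\prod_iR_i=1\pm o(1)\in[1-\eps/8,1+\eps/8]$, which is (i), and, keeping everything as ratios so that the factors $\E(W_i)^2/R_i^2\approx 1$ cancel rather than producing a spurious $1/(1-\eps)$ factor,
\[
\frac{\E(Z^2)}{\E(Z)^2}\le(1+o(1))\prod_i\frac{\E(W_i^2)/R_i^2}{\big(\E(W_i)/R_i\big)^2}\le(1+o(1))\prod_i\Big(1+\frac{a_i}{k}+O(\eta)\Big)\le 1+\frac{\eps^2}{32},
\]
using $\sum_i a_i/k\le\eps^2/64$ and $m\eta=o(1)$; this is (ii), and the lemma follows.

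I expect the main obstacle to be the cross-phase dependence bookkeeping of the previous paragraph: the sampler for phase $i+1$ is warm-started from a phase-$i$ sample, so the phase estimators are genuinely correlated, and one must show --- as in \cite{LV2} --- both that the independence defect per phase really is at most $\nu$ and, more delicately, that these defects together with the per-phase $\nu$-mixing errors do not accumulate unfavorably over $m=O^*(\mathrm{poly}(n))$ phases (in particular without incurring a factor exponential in the number of phases); this is precisely why $\nu$ is taken as small as $(\eps/n)^{16}$. A secondary technical point is the mismatch between the truncation ball $4\sigma_i\sqrt n B_n$ used in phase $i$ and $4\sigma_{i+1}\sqrt n B_n$ used in phase $i+1$, which prevents the true ratios $R_i$ from telescoping exactly against the sampled quantities and must be absorbed into the $e^{-\Omega(n)}$ Gaussian-concentration error above.
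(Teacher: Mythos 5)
Your overall strategy matches the paper's: apply Chebyshev to the product of per-phase ratio estimators, control the per-phase second moments via Lemma~\ref{lem:fixed-var-bound}, Lemma~\ref{lem:variance-bound} and Lemma~\ref{lem:starting-gauss}, count phases in doubling chunks so that $\sum_i a_i/k = O(\eps^2)$, and handle residual sampling error and cross-phase dependence through $\nu$-independence together with Lemmas~\ref{lem:fn-indep} and~\ref{lem:cov-bd}. Up to the dependence step the argument is sound.

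The gap is the claim that ``$1\le W_i\le c_0$ and no truncation is needed.'' Individual boundedness of each $W_i$ is not what the telescoping requires. When you apply Lemma~\ref{lem:cov-bd} to $X = W_1\cdots W_i$ and $Y = W_{i+1}$, the only deterministic bound available for $X$ is $c_0^i$, so the additive covariance defect is $c_0^{i+1}\mu$ with $\mu = O(km\nu)$. What you need is that this defect is small \emph{relative to} $R_1\cdots R_{i+1}=\prod_{j\le i+1}\E(\overline W_j)$, and $c_0^i$ is exponentially larger than $\prod_{j\le i}R_j$ in general: in the slow part $c_0=e^{8}$ (your own bound, forced only by $\|x\|\le 4\sigma_i\sqrt n$) while $R_j\le (1+1/n)^{n/2}\le e^{1/2}$, so over $m_1=\Theta(n\log n)$ phases the ratio $c_0^{m_1}/\prod_{j\le m_1} R_j$ is $e^{\Omega(n\log n)}$, which no polynomially small choice of $\nu$ can compensate. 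This is exactly why the paper (following \cite{LV2}) truncates the \emph{running product}, setting $V_i=\min\{\overline W_i,\alpha\E(\overline W_i)\}$ and $U_{i+1}=\min\{U_iV_{i+1},\alpha\E(V_1)\cdots\E(V_{i+1})\}$: the cap $U_i\le\alpha\E(V_1)\cdots\E(V_i)$ makes the covariance defect $O(\mu\alpha^2)$ relative to the product of expectations, while Lemma~\ref{lem:exp-bd} and Markov's inequality control the bias introduced by the cap and the probability that it is ever active. Your proof needs this device (or an equivalent one, e.g., renormalizing each factor by its expectation and bounding the normalized running product through its second moment rather than deterministically) before the step $\E(W_1\cdots W_m)=(1\pm o(1))\prod_i\E(W_i)$, and the corresponding second-moment identity, can be justified.
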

\begin{proof}
Let $(X_0^i, X_1^i, X_2^i, \ldots, X_k^i)$ be the sequence of sample points for the $i$th volume phase. The distribution of each $X_i$ is approximately the correct distribution, but slightly off based on the error parameter $\nu$ in each phase that bounds the total variation distance. We will define new random variables $\overline X_j^i$ that have the correct distribution for each phase.

Note that $X_j^0$ would be sampled from the exact distribution, and then rejected if outside of $K$. Therefore $\Pr(X_j^0 = \overline X_j^0)=1$. Suppose that the total number of sample points throughout the algorithm is $t$. Using induction and the definition of total variation distance, we see that 
\begin{equation}\label{eqn:exact-chance}
\Pr(X_i^j = \overline X_i^j, \forall i,j) \ge 1-t\nu.
\end{equation}
Let
\[
\quad Y_j^i = \frac{\e{-\frac{\|X_j^i\|^2}{2\sigma_{i+1}^2}}}{\e{-\frac{\|X_j^i\|^2}{2\sigma_i^2}}}
\quad \mbox{ and } \quad 
\overline W_i = \frac{1}{k_i} \sum_{j=1}^{k_i}  Y_j^i.
\]

Note that for a fixed $i$, all of the $Y_j^i$ have the same expectation since they are from the exact distribution, and it is equal to $\E(\overline W_i)$. Suppose that we have $\E((Y_j^i)^2) \le c_i \E(Y_j^i)^2$. Then
\begin{align}
\E(\overline W_i^2) &= \frac{1}{k_i^2} \left(\sum_{j=1}^{k_i} \E((Y_j^i)^2)  + k_i(k_i-1)R_i^2\right) \nonumber \\
&\le \left(1+\frac{c_i-1}{k_i}\right) \cdot \E( \overline W_i)^2.\label{eq:var-bound}
\end{align}

The following claim bounds the variance of our ratio estimater under a faster cooling rate; combined with Lemma~\ref{lem:fixed-var-bound}, we have a bounds on the variance throughout our algorithm. It follows from Lemma~\ref{lem:variance-bound}.

\begin{claim}\label{claim:fast-var-bound}
Suppose that $K \subseteq C\sqrt{n}B_n$ and let $\alpha = \sigma^2/(2C^2n)$. Then,
\[
\frac{\E\left(( Y^i )^2\right)}{\E( Y^i)^2} < 1+\frac{\sigma^2}{C^2n}.
\]
\end{claim}

Suppose that we had independence between samples and consider bounding the cumulative error for all phases of the algorithm. When $\sigma^2\le 1$, we can bound the number of phases for the first part as $m_1 \le 2n\log 4n$. When $\sigma^2 > 1$, we will analyze the phases in chunks, where a chunk is the set of phases until $\sigma^2$ doubles. Note that the number of phases in a chunk starting with variance $\sigma^2$ is at most $2C^2n/\sigma^2$. Also there are at most $\log(C^2 n)$ chunks. Observe that for a single chunk with starting variance $\sigma^2$, where $i,j$ are the starting and ending phases of the chunk, we have
\[
\frac{\E(\overline{W}_i^2 \ldots \overline{W}_j^2)}{R_i^2 \ldots R_j^2} \le \left(1+\frac{2\sigma^2}{kC^2n}\right)^{2C^2n/\sigma^2} \le \left(1+\frac{5}{k}\right).
\]
Then, there is one final phases when we switch to the uniform distribution, which has variance at most $1+e^2$ by Lemma~\ref{lem:starting-gauss}. 


Let $m$ denote the total number of phases. If we had independence between samples, then we can use Lemma~\ref{lem:fixed-var-bound} and Claim~\ref{claim:fast-var-bound} with Chebyshev's inequality to bound the probability of failure\ifdraft
. The remainder of the proof is in the extended version of the paper.
\else
: 

\begin{align*}
\Pr\left(\frac{|\overline W_1 \ldots \overline W_{m} - R_1 \ldots R_{m}|}{R_1 \ldots R_{m}} \ge \frac{\eps}{2} \right) &\le \frac{4\Var(\overline W_1 \ldots \overline W_{m})}{\eps^2 R_1^2 \ldots R_{m}^2}\\
&= \frac{4}{\eps^2}\left( \frac{\E(\overline W_1 ^2 \ldots \overline W_{m}^2)}{R_1^2\ldots R_{m}^2}-1\right)\\
&\le \frac{4}{\eps^2} \left(\left(1+\frac{2}{kn}\right)^{m_1}\left(1+ \frac{5}{k}\right)^{\log C^2n} \left(1+\frac{e^2}{k}\right) -1\right)\\
&\le \frac{4}{\eps^2} \left(\e{\frac{2m_1}{kn} + \frac{5\log(C^2n)}{k}+\frac{e^2}{k}}-1\right)\\
&\le \frac{4}{\eps^2} \left(\e{\frac{\eps^2}{50}}-1\right) \\
&\le\frac{4}{\eps^2}\left(\left(1+\frac{\eps^2}{40}\right)-1\right) \\
&= \frac{1}{10}
\end{align*}

However, subsequent samples are dependent, and we must carefully bound the dependence. The analysis is somewhat involved, but will follow essentially the sample template as in \cite{LV2,CV2014} which utilizes the following lemma to bound dependence between subsequent samples, where $\nu$ is the target total variation distance for each sample point. For convenience,  denote the entire sequence of $t$ samples points used in the algorithm as $(Z_0, Z_1, \ldots, Z_{t-1})$. 

\begin{lemma}\label{lem:delta-ind}

(a) For $0 \leq i < t$, the random variables $Z_i$ and $Z_{i+1}$ are $\nu$-independent, and the random variables $\overline{Z}_i$ and $\overline{Z}_{i+1}$ are $(3 \nu)$-independent.

(b) For $0 \leq i < t$, the random variables $(Z_0, \ldots ,Z_i)$ and $Z_{i+1}$ are $(3\nu)$-independent.

(c) For $0 \leq i < m$, the random variables $\overline{W}_1 \ldots \overline{W}_i$ and $\overline{W}_{i+1}$ are $(3km\nu)$-independent.
\end{lemma}
The variables $\overline W_i$ are not bounded, but we will introduce a new set of random variables based on $\overline W_i$ that are bounded so we can later apply Lemma \ref{lem:cov-bd}. Let
\[
\alpha = \frac{\eps^{1/2}}{8(m\mu)^{1/4}},
\]
where $\mu = 3km\nu$. Note that $\alpha$ is much larger than one. Define
\[
V_i = \min\{\overline W_i, \alpha \E (\overline W_i) \}.
\]

It is clear that $\E(V_i) \leq \E(\overline W_i)$, and by Lemma \ref{lem:exp-bd}, we also have
\[
\E(V_i) \geq \E(\overline W_i) - \frac{\E(\overline W_i^2)}{4\alpha \E(\overline W_i)} \geq (1 - \frac{1}{4\alpha}(1 + \frac{7}{k}))\E(\overline W_i) \geq (1-\frac{1}{2\alpha})\E(\overline W_i).
\]

Let $U_0=1$ and define recursively
\[
U_{i+1} = \min\{U_iV_{i+1}, \alpha \E(V_1) \ldots \E(V_{i+1})\}.
\]

We will now show that 
\begin{equation}\label{eqn:u-bound}
(1-\frac{i-1}{\alpha})\E(V_1) \ldots \E(V_i) \leq \E(U_i) \leq (1 + 2\mu \alpha^2 i)\E(V_1) \ldots \E(V_i).
\end{equation}
By Lemma \ref{lem:fn-indep}, the random variables $U_i$ and $V_{i+1}$ are $\mu$-independent, and by Lemma \ref{lem:cov-bd} and since $\alpha \geq 1$, 
\begin{equation}\label{eqn:uv-bound}
|\E(U_iV_{i+1} - \E(U_i)\E(V_{i+1})| \leq \mu \alpha \E(V_1) \ldots \E(V_i)\alpha \E(\overline W_{i+1}) \leq 2\mu\alpha^2 \E(V_1) \ldots \E(V_{i+1}).
\end{equation}

From (\ref{eqn:uv-bound}), we can get the upper bound on $\E(U_{i+1})$ by induction:
\begin{align*}
\E(U_{i+1}) &\leq \E(U_i V_{i+1}) \\
&\leq \E(U_i)\E(V_{i+1}) + 2 \mu \alpha^2 \E(V_1) \ldots \E(V_{i+1})\\
&\leq (1+2\mu \alpha^2(i+1))\E(V_1) \ldots \E(V_{i+1}). \numberthis \label{eqn:u-upperbound}
\end{align*}

Similarly, 
\begin{align}
&\E(U_i^2) \leq (1+2\mu \alpha^4 i) \E(V_1^2) \ldots \E(V_i^2)\label{eqn:u2-bound}\\
\text{and } \>\> &\E(U_i^2V_{i+1}^2) \leq (1 + 2\mu \alpha^4 i)\E(V_1^2) \ldots \E(V_{i+1}^2).\label{eqn:uv2-bound}
\end{align}

For the lower bound, we use Lemma \ref{lem:exp-bd} and (\ref{eqn:uv2-bound}) to get:
\begin{align*}
\E(U_{i+1}) &\geq \E(U_iV_{i+1}) - \frac{\E(U_i^2V_{i+1}^2)}{4\alpha \E(V_1) \ldots \E(V_{i+1})}\\
&\geq \E(U_iV_{i+1}) - (1+2\mu \alpha^4 i)\frac{\E(V_1^2) \ldots \E(V_{i+1}^2)}{4\alpha \E(V_1) \ldots \E(V_{i+1})}. \numberthis \label{eqn:u-lowerbound}
\end{align*}

For $\alpha \geq 3k$, we have that
\begin{align*}
\E(V_i^2) \leq \E(\overline{W}_i^2) &\leq (1+\frac{7}{k})\E(\overline{W}_i)^2\\
&\leq (1+\frac{7}{k})\frac{1}{(1-1/(2\alpha))^2}\E(V_i)^2\\
&\leq (1+\frac{7}{k})(1+\frac{1}{2k})\E(V_i)^2\\
&\leq (1+\frac{8}{k})\E(V_i)^2. \numberthis \label{eqn:v2-bound}
\end{align*}

Combining (\ref{eqn:uv-bound}), (\ref{eqn:u-lowerbound}), and (\ref{eqn:v2-bound}), 
\begin{align*}
\E(U_{i+1}) &\geq \E(U_i V_{i+1}) - \frac{1}{4\alpha}(1+2\mu\alpha^4i)(1+\frac{8}{k})^i\E(V_1) \ldots \E(V_{i+1})\\
&\geq \E(U_iV_{i+1})-\frac{1+2\mu\alpha^4i}{2\alpha}\E(V_1) \ldots \E(V_{i+1})\\
&\geq \E(U_i)\E(V_{i+1}) - \frac{1}{\alpha}\E(V_1) \ldots \E(V_{i+1}).
\end{align*}

Then, by induction on $i$,
\begin{equation}\label{eqn:u-lowerbound2}
\E(U_{i+1}) \geq \E(V_1) \ldots \E(V_{i+1}) - \frac{i}{\alpha}\E(V_1) \ldots \E(V_{i+1}).
\end{equation}

Putting (\ref{eqn:u-upperbound}) and (\ref{eqn:u-lowerbound2}) together, we now have a proof of (\ref{eqn:u-bound}). Thus,
\[
\E(U_m) \leq (1 + \frac{\eps}{4}) \E(V_1) \ldots \E(V_m) \leq (1 + \frac{\eps}{4})\E(\overline{W}_1) \ldots \E(\overline{W}_m).
\]

We also have that $\alpha \geq 4m/\eps$ implies 
\[
\E(U_m) \geq (1-\frac{\eps}{4})\E(\overline{W}_1) \ldots \E(\overline{W}_m).
\]

From (\ref{eqn:u2-bound}) and (\ref{eqn:v2-bound}), and the selection of $\alpha$, $\mu$, and the lower bound on $k$, we have that 
\begin{align*}
\E(U_m^2) &\leq (1+2\mu \alpha^4 m)\E(V_1^2) \ldots \E(V_m^2)\\
&\leq (1+2\mu \alpha^4 m)(1+\frac{8}{k})^m\E(V_1)^2 \ldots \E(V_m)^2\\
&\leq (1+2\mu \alpha^4 m)(1+\frac{8}{k})^m\frac{1}{(1-(m-1)/\alpha)^2}\E(U_m)^2\\
&\leq (1 + \frac{\eps^2}{64})E(U_m)^2,
\end{align*}

and hence

\[
\Pr\big (|U_m - \E(U_m)| \leq \frac{\eps}{2}\E(\overline{W}_1) \ldots \E(\overline{W}_m)\big ) \geq 0.9
\]
 by Chebyshev's inequality. Then, applying Markov's inequality,
\[
\Pr(U_{i+1} \neq U_iV_{i+1}) = \Pr \big (U_iV_{i+1} > \alpha \E(V_1) \ldots \E(V_{i+1})\big )\leq\frac{2}{\alpha}
\]
and similarly
\[
\Pr(V_i \neq \overline{W}_i)\leq \frac{1}{\alpha}.
\]

So, with probability at least $1-3k/\alpha$, we have $U_m = \overline{W}_1 \ldots \overline{W}_m$. Also, from (\ref{eqn:exact-chance}), we have that $\overline{W}_1 \ldots \overline{W}_m = W_1 \ldots W_m$ with probability at least $1 - 2km\nu$. Recall that $\E(\overline{W}_1) \ldots \E(\overline{W}_m) = R_1 \ldots R_m$. Therefore, with probability at least $4/5$
\begin{align*}
|W_1 \ldots W_m - R_1\ldots R_m| \leq \frac{\eps}{2}R_1 \ldots R_m,
\end{align*}
which proves the lemma.
\fi
\end{proof}

We can now prove the main theorem.

\begin{proof}(of Theorem~\ref{thm:volume})

We assume that $\eps \ge 2^{-n}$, which only ignores cases which would take exponential time. Then by Lemma~\ref{lem:starting-gauss}, selecting $\sigma_0^2 = 1/(4n)$ implies that all but a negligible amount of volume of the starting Gaussian is contained in $K$. 

Recall that our algorithm only has a bound on the expected number of steps. To account for this, we will run the algorithm $O(1)$ times to obtain a run which takes at most a constant factor of ball walk steps to proper steps, say with probability $1/20$. By Lemma~\ref{lem:accuracy}, the answer returned by the algorithm will be within the target relative error with probability at least $4/5$.  Thus the overall probability of failure is $3/4$. Note that we can boost this probability of failure to $1-p$ by the standard trick of repeating the algorithm $\log 1/p$ times and returning the median. 

We now analyze the runtime of the algorithm in Figure~\ref{fig:volume-alg}. Set $C=R\log(1/\eps)/\sqrt{n}$. Assume that $C \ge 1$ (otherwise arbitrarily increase $C$). When $\sigma^2 \le 1$, using the value of $k$, the mixing time assigned to each phase, and the fact that there are $O(n \log n)$ phases, we see that the total number of ball walk steps taken is $O(n^{2.5}k\log n \log^2(n/\eps)) = O(n^3\log^2 n \log^2(n/\eps) / \eps^2)=O^*(n^3)$. When $\sigma^2 > 1$, the analysis is very similar if we note that the faster cooling rate and fewer number of samples cancels out the slower mixing time of $O^*(\sigma^2 n^2)$. Thus, it follows that the total number of ball walk steps taken is
\[
O\left(\frac{C^2n^3\log^2 n \log^2\frac{n}{\eps}}{\eps^2}\right) = O\left(\frac{R^2n^2}{\eps^2} \cdot \log^2 n \log^2\frac{1}{\eps} \log^2 \frac{n}{\eps}\right) = O^*(R^2n^2).
\] 
\end{proof}

\section{Conclusion}
We make a few concluding remarks:\\
1.  In our algorithm, the complexity of volume computation for well-rounded bodies is essentially the same as the amortized complexity of generating a single uniform random sample --- both are $O^*(n^3)$. This is in contrast to all previous volume algorithms where the amortized complexity of sampling is lower by at least a factor of $n$ compared to the complexity of volume computation.\\
2. It would be interesting to extend our algorithm to integrating any well-rounded logconcave function; we expect this should be possible with essentially the same complexity. The variance of the ratio of integrals computed in each phase, as well as the isoperimetric inequality, are already proven in full generality for all logconcave functions. \\
3. The accelerated cooling schedule used in our algorithm can be seen as a worst-case analysis of the cooling schedule used in a practical algorithm for volume computation \cite{CV13, CV15b}; in the latter, we used an adaptive schedule by empirically estimating the maximum tolerable change in the variance of the Gaussian that keeps the variance of the ratio estimator bounded by a constant. \\
4. An important open question is to find an $O^*(n^3)$ rounding algorithm for arbitrary convex bodies. The current best rounding complexity is $O^*(n^4)$ \cite{LV2}. \\

\bibliography{acg}
\ifdraft
\else

\end{document}